\documentclass[11pt, letterpaper]{article}

%% All the macros are here
\usepackage{macros}
\usepackage[font=small]{caption}

\begin{document}

\title{Perturbation Resilient Clustering for $k$-Center \\
and Related Problems via LP Relaxations\thanks{Department of Computer Science,
University of Illinois, Urbana-Champaign, IL 61801. Email: {\tt \{chekuri, sgupta49\}@illinois.edu.} Work on this paper supported in part by
NSF grants CCF-1319376 and CCF-1526799.} \\
{\small (Full Version)}
}
\author{Chandra Chekuri \and Shalmoli Gupta} %
\date{\today}
\maketitle

\begin{abstract}
  We consider clustering in the perturbation resilience model that has
  been studied since the work of Bilu and Linial~\cite{BiluL10} and
  Awasthi, Blum and Sheffet~\cite{AwasthiBS12}. A clustering instance
  $\inst$ is said to be $\alpha$-perturbation resilient if the optimal
  solution does not change when the pairwise distances are modified by
  a factor of $\alpha$ and the perturbed distances satisfy the metric
  property --- this is the metric perturbation resilience property 
  introduced in \cite{AngelidakisMM17} and a weaker requirement than prior 
  models. We make two high-level contributions.

  \begin{itemize}
  \item We show that the natural LP relaxation of \kcenter and
    asymmetric \kcenter is integral for $2$-perturbation resilient
    instances. We belive that demonstrating the goodness of standard
    LP relaxations complements existing results
    \cite{BalcanHW16,AngelidakisMM17} that are based on new algorithms
    designed for the perturbation model.
  
\item We define a simple new model of perturbation resilience for
  clustering with \emph{outliers}. Using this model we show that the
  unified MST and dynamic programming based algorithm proposed in
  \cite{AngelidakisMM17} exactly solves the clustering with outliers
  problem for sevearl common center based objectives (like \kcenter,
  \kmeans, \kmedian) when the instances is $2$-perturbation resilient.
  We further show that a natural LP relxation is integral for
  $2$-perturbation resilient instances of \kcenter with outliers.

  \end{itemize}

\end{abstract}

\newpage

\section{Introduction}
\label{sec:intro}
Clustering is an ubiquitous task that finds applications in numerous
areas since it is a basic primitive in data analysis.  Consequently,
clustering methods are extensively studied in many scientific
communities and there is a vast literature on this topic. In a typical
clustering problem the input is a set of points with a notion of
similarity (also called \emph{distance}) between every pair of points,
and a parameter $k$, which specifies the desired number of
clusters. The goal is to partition the points into $k$ clusters such
that points assigned to the same cluster are similar. One way to
obtain this partition is to select $k$ centers and then assign each
point to the nearest center. The quality of the clustering can be
measured in terms of an objective function. Some of the popular and
commonly studied ones are \kmedian (sum of distances of points to
nearest center), \kmeans (sum of squared distances of points to
nearest center), and \kcenter (maximum distance between a point to its
nearest center). These are center-based objective functions.  Unlike
some applications in Operations Research, in many clustering problems
in data analysis, the objective function is a proxy to identify the
clusters and the actual value of the objective function is not necessarily
meaningful. Clustering is often considered in the presence of outliers.
In this setting the goal is to find the best clustering of the
input after removing (at most) a specified number (or fraction) of points ---
this is useful in practice when the input data is noisy.

Most of the natural optimization problems that arise in 
clustering turn out to be \NPHard. Extensive work exists on
approximation algorithm design as well as heuristics. Although
clustering and its variants are intractable in the worst case, various
heuristic based algorithms like Lloyd's, K-Means++ perform very well
in practice and are routinely used --- at the same time some of these
heuristics have poor worst-case approximation performance.  On the
other hand algorithms designed for worst-case approximation bounds may
not work well in practice or may not be sufficiently fast for large
data sets.  To bridge this gap between theory and practice, there has
been an increasing emphasis on \emph{beyond worst case analysis}.
Several models have been proposed to understand real-world instances
and why they may be computationally easier.  One such model is based
on the notion of \emph{instance stability}.  This is based on the
assumption that typical instances have a clear underlying optimal
clustering (also known as \emph{ground-truth clustering}) which is
significantly better than all other clusterings, and remains the
same under small perturbations. 

The notion of stability/perturbation resilience was formalized in the
work of Bilu and Linial \cite{BiluL10} initially for Max Cut, and by
Awasthi, Blum and Sheffet \cite{AwasthiBS12} for clustering. For
clustering problems, an instance $\inst$ is said to be
$\alpha$-perturbation resilient for some $\alpha > 1$ if the optimum
clustering remains the same even if pairwise distances between points
are altered by a multiplicative factor of at most
$\alpha$. Intuitively, $\alpha$ determines the degree of resilience of
the instance, with a higher value translating to more structured, and
separable instances. In the past few years, there has been increasing
interest in understanding stable/perturbation-resilient
instances. After several papers
\cite{AwasthiBS12,BalcanL12,BalcanHW16}, a recent result by
Angelidakis, Makarychev and Makarychev \cite{AngelidakisMM17} showed
that $2$-perturbation resilient instances of several clustering
problems with center based objectives (which includes \kmedian,
\kcenter, \kmeans) can be solved exactly in polynomial time. For
\kcenter finding the optimum solution for $(2-\delta)$-perturbation
resilient instances is \NPHard \cite{BalcanHW16}.  One criticism of
perturbation resilience for clustering was the assumption in some
earlier works that the optimum clustering remains stable under
perturbation of the original metric $d$ even when perturbed distance $d'$ itself may
not be a metric. Interestingly the results of \cite{AngelidakisMM17}
hold even under the weaker assumption of \emph{metric} perturbation
resilience, which constrains the perturbed pairwise distances to be a
metric. The results in \cite{AngelidakisMM17} are based on a simple
and unified algorithm that computes the MST $T$ of the given set of
points and then applies dynamic programming on $T$ to find the
clusters; it is only in the second step that the specific objective
function is used. We believe that empirically evaluating the
performance of this algorithm, and related heuristics, on real-world
data is an interesting avenue and plan to study it. Our work in this
paper is motivated by the existing work and several interrelated
questions on theoretical concerns, that we discuss next.

One of the objectives in beyond-worst-case analysis is to explain the
empirical success of existing algorithms and mathematical programming
formulations. For stable instances of Max-Cut and Minimum Multiway
Cut, convex relaxations are known to be integral for various bounds on
the perturbation parameter \cite{MakarychevMV14,AngelidakisMM17}. In
the context of \kmedian and \kmeans Awasthi {\etal}
\cite{AwasthiCKS15} showed that if the data is generated uniformly at
random from $k$ unit balls with well-separated centers, convex
relaxations (linear and semi-definite) give an optimal itegral
solution under appropriate separation conditions on the
centers. However, for perturbation resilient clustering instances not
much is known about the the natural LP relaxations.  This raises a
natural question.

\begin{question}
  \label{q:lp}
  Are the natural LP relaxations for $2$-metric perturbation resilient
  instances of clustering problems integral?
\end{question} 

There are several advantages in proving that well-known relaxations
are integral. First, they provide evidence of the goodness of the
relaxation; often these relaxations also have worst-case approximation
bounds.  Second, when the relaxation does not give an integral
solution for a given instance we can deduce that the instance is
\emph{not} perturbation resilient.

\iffalse
For the \kcenter problem Balcan \etal \cite{BalcanHW16} showed that
\emph{any} $2$-approximation yields an optimum clustering 
for $2$-perturbation resilient instances.

\begin{question}
  \label{q:kcenter}
  Does any arbitrary $2$-approximation algorithm for \kcenter find the
  optimal clustering given a $2$-metric perturbation resilient \kcenter
  instance?
\end{question}  
\fi

As we remarked, one major takeaway from the paper of Angelidakis
{\etal} \cite{AngelidakisMM17}, apart from its strong theoretical
results, is the simple and unified algorithm that they propose which may
lead to an effective heuristic. In real-world data there is often
noise, and it would be useful to develop algorithms 
in the more general setting of clustering with outliers.
This leads us to the question,

\begin{question}
  \label{q:outliers}
  Is there any stability model under which the 
  algorithm proposed by \cite{AngelidakisMM17} gives optimal
  solution for the problem of clustering with outliers?
\end{question}

We remark that even for instances without outliers, removing a small
fraction of the points can lead to a residual instance which has better
stability parameters than the initial one. Thus, clustering with outlier
removal is relevant even when there is no explicit noise.

\subsection{Our Results}
In this paper we address the preceding questions and obtain the
following results. 

\begin{itemize}
\item We show that a natural LP relaxation for
  \kcenter has an optimum integral solution for
  $2$-metric-perturbation resilience instances\footnote{Although the
    LP provides a $2$-approximation it is not immediate that it would
    be exact for perturbation-resilience instances}. Thus, when running the LP on a clustering instance, either we are guaranteed to have found the optimal solution (if the LP solution is integral), or we are guaranteed the solution is not $2$-perturbation resilient (if the LP solution is not integral). The previous algorithms of Angelidakis \etal \cite{AngelidakisMM17}, and Balcan \etal \cite{BalcanHW16} do not have this guarantee, and could be arbitrarily bad if the instance is not \pr{2}.

\item Motivated by the work of \cite{BalcanHW16} we consider the
  \emph{asymmetric} \kcenter (\asymkcenter) problem. We show that a
  natural LP relaxation has an optimum integral solution for
  $2$-\emph{metric}-perturbation resilient instances\footnote{In the
    asymmetric setting the perturbed distances should satisfy triangle
    inequality but symmetry is not required.}. For \asymkcenter the
  worst-case integrality gap of the LP relaxation is known to be
  $\Theta(\log^* k)$ \cite{Archer01,ChuzhoyGHKKKN05}. Previously
  \cite{BalcanHW16} described a specific combinatorial algorithm that
  outputs an optimum solution for $2$-perturbation resilient
  instances. We obtain it via the LP relaxation in the weaker
  metric perturbation model.

\item We define a simple model of perturbation resilience for
  clustering with \emph{outliers}. It is a clean extension of the
  existing perturbation resilience model. We show that under this new
  model, a modification of the algorithm of Angelidakis \etal
  \cite{AngelidakisMM17} gives an exact solution for the outliers
  problem (for \kmedian, \kmeans, \kcenter and outer $\ell_p$ based
  objectives). This algorithm may lead to an interesting heuristic for
  clustering (noisy) real-world instances. We also show that for a
  $2$-perturbation resilient instance of \kcenter with outliers, a
  natural LP relaxation has an optimum integral solution.
\end{itemize}

Our results show the efficacy of LP relaxations for \kcenter and its
variants. We also demonstrate, via a natural model, that the
interesting algorithm from \cite{AngelidakisMM17} extends to handle
outliers.  Perturbation resilience appears to be a simple definition
but it is hard to pin down its precise implications. Prior work
demonstrates that observations and algorithms that appear simple in
retrospect have not been easy to find. For \kcenter and \asymkcenter
we work with notion of perturbation resilience under Voronoi clusterings
as was done in \cite{BalcanHW16}; this is the more restrictive version.
See \refsection{prelim} for the formal definitions.

% Our main technical contribution
%is the LP integrality for \asymkcenter and \kcenter with outliers.

We would like to understand the integrality gap of the natural LP
relaxations for perturbation resilient instances of \kmedian and
\kmeans. We believe that the following open question is quite
interesting to resolve.

\begin{question}
  Is there a fixed constant $\alpha$ such that the natural LP
  relaxation for \kmedian (similarly \kmeans) has an integral optimum
  solution for every $\alpha$-perturbation resilient
  instance\footnote{It may be possible to answer this question in the
    positive if we additionally assume that the optimum
    clusters are balanced in terms of number of points. However, we feel
    that such an assumption does not shed light on the structure of
    perturbation resilient instances that are not balanced.}?
\end{question}

\subsection{Related Work}
There is extensive related work on clustering topics.
Here we only mention some closely related work. 
\vspace{-3mm}
\paragraph{Clustering.} 
For both \kcenter and asymmetric \kcenter tight approximation bounds
are known. For \kcenter, already in the mid $1980$'s Gonzales
\cite{Gonzalez85} and Hochbaum \& Shmoys \cite{HochbaumS85} had
developed remarkably simple $2$-approximation algorithms, which are
in fact tight. Approximating asymmetric \kcenter is significantly
harder. Panigrahy and Vishwanathan \cite{PanigrahyV98} designed an
elegant $O(\log^* n)$ approximation algorithm, which was subsequently
improved by Archer \cite{Archer01} to $O(\log^* k)$.  Interestingly,
the result is asymptotically tight \cite{ChuzhoyGHKKKN05}.

For \kmeans and \kmedian --- arguably the two most popular clustering
problems --- there is a long line of research (see \cite{BlomerL0S16}
for a survey on \kmeans). The first constant factor approximation for
the \kmedian problem was given by Charikar {\etal}
\cite{CharikarGTS99}, and the current best-known is a $2.675$
approximation by Byrka {\etal} \cite{ByrkaPRST17}; and it is \NPHard
to do better than $1 + 2/e \approx 1.736$ \cite{JainMMSV03}. For
\kmeans the best approximation known is $6.357$
\cite{AhmadianNSW17}. The \kmeans problem is widely used in practice
as well, and the commonly used algorithm is Lloyd's algorithm, which
is a special case of the EM algorithm \cite{Lloyd82}. While there is
no explicit approximation guarantee of the algorithm, it performs
remarkably well in practice with careful seeding \cite{ArthurV07}
(this heuristic is called K-Means++).

\paragraph{Clustering with Outliers.} 
The influential paper by Charikar {\etal} \cite{Charikarkmn01}
initiated the work on clustering with outliers and other robust
clustering problems. For \kcenter with outliers, they gave a greedy
$3$-approximation algorithm. Further, for \kmedian with outliers they
gave a bicriteria approximation algorithm, which achieves an
approximation ratio of $4(1+ \epsilon)$, violating the number of
outliers by a factor of $(1+\epsilon)$. The first constant factor
approximation algorithm for this problem was given by Chen (the
constant is not explicitly computed) \cite{Chen08}. Very recently,
Krishnaswamy {\etal} \cite{KrishnaswamyLS17} proposed a generic
framework for clustering with outliers. It improves the results of
Chen and gives the first constant factor approximation for \kmeans
with outliers. However, the algorithm does not appear 
suitable for practice in its current form (See \cite{Aggarwal13} for details on algorithms used in practice for clustering with outliers). 
%{\bf Chandra: refer to clustering with outliers book?}

\paragraph{Perturbation Resilience.}
The notion of perturbation resilience was introduced by Bilu and
Linial \cite{BiluL10}.  They originally considered it for the Max Cut
problem, designing an exact polynomial time algorithm for
$O(n)$-stable instances \footnote{They used this name to denote
  perturbation resilient instances of Max Cut} of Max Cut. It was
later improved to $O(\sqrt{n})$-stable instances \cite{BiluDLS13}, and
finally Makarychev {\etal} gave a polynomial time exact algorithm for
$O(\sqrt{\log n}\cdot \log \log n)$-stable instances
\cite{MakarychevMV14}.

The definition of perturbation resilience naturally extends to
clustering problems. Awasthi, Blum, and Sheffet \cite{AwasthiBS12}
presented an exact algorithm for solving $3$-perturbation resilient
clustering problems with \emph{separable center based objectives}
(s.c.b.o) --- this includes \kmedian, \kmeans, \kcenter. This result
was later improved by Balcan and Liang \cite{BalcanL12}, who gave an
exact algorithm for clustering with s.c.b.o under $(1 +
\sqrt{2})$-perturbation resilience. Specifically for \kcenter and
asymmetric \kcenter, Balcan, Haghtalab, and White \cite{BalcanHW16}
designed an algorithm for $2$-perturbation resilient instances. In
fact, for \kcenter they gave a stronger result, that any
$2$-approximation algorithm for \kcenter can give an optimal solution
for $2$-perturbation resilient instances. They also showed the results
are essentially tight unless \NP = \RP \footnote{They showed, unless
  \NP = \RP, no polynomial-time algorithm can solve \kcenter under $(2
  - \epsilon)$-approximation stability, a notion that is stronger than
  perturbation resilience}. Recently, Angelidakis \etal
\cite{AngelidakisMM17}, gave an unifying algorithm which gives exact
solution for $2$-perturbation resilient instances of clustering
problems with center based objectives. In fact, their algorithms work
under metric perturbation resilience, which is a weaker
assumption. Perturbation resilience has also been studied in various
other contexts, like TSP, Minimum Multiway Cut, Clustering with
min-sum objectives \cite{BalcanL12,MakarychevMV14,MihalakSSW11}.

\paragraph{Robust Perturbtion Resilience.} 
Perturbation resilience requires optimal solution to remain unchanged
under any valid perturbation. Balcan and Liang \cite{BalcanL12}
relaxed this condition slightly, and defined $(\alpha,
\epsilon)$-perturbation resilience (or robust perturbation
resilience), in which at most $\epsilon$ fraction of the points can
change their cluster membership under any $\alpha$-perturbation. They
gave a near optimal solution for \kmedian under $(4,
\epsilon)$-perturbation resilience, when the clusters are not too
small. Further, for \kcenter and asymmetric \kcenter efficient
algorithms are known for $(3, \epsilon)$-perturbation resilient
instances, assuming mild size lower bound on optimal clusters
\cite{BalcanHW16}.

\paragraph{Other Stability Notions.} 
Several other stability models, and separation conditions have also
been studied to better explain real-world instances. In a seminal
paper Ostrovsky, Rabani, Schulman, and Swamy \cite{OstrovskyRSS06}
considered \kmeans instances where the cost of clusterng using $k$ is
clusters is much lower than $k-1$ clusters. They showed, that popular
K-Means++ algorithm achieves an $O(1)$-approximation for these
instances. Subsequently there has been series of work many other
models like approximation stability \cite{BalcanBG09}, agnostic clustering \cite{BalcanHS09}, distribution
stability \cite{AwasthiBS10,CohenAddadS17}, spectral separability
\cite{KumarR10,AwasthiS12,CohenAddadS17}, and more recently on
additive perturbation stability \cite{VijayaraghavanDW17}.

\medskip
\noindent {\bf Organization:} The rest of the paper is organized as
follows: in \refsection{prelim} we formally define the clustering
problems and perturbation resilience; in \refsection{kc} we prove that
any 2-approximation algorithm gives optimal solution for a
$2$-perturbation resilient \kcenter instance, further we show that the
natural LP is integral; in \refsection{asymkc} we show that even for
asymmetric \kcenter the natural LP relaxation is integral under
$2$-perturbation resilience; in \refsection{kco} we prove the
integrality of LP for $2$-perturbation resilient \kcenter with
outliers instance; finally in \refsection{kmo} we show present a
dynamic programming based algorithm which exactly solves \kmedian with outlers (and
also \kcenterout, \kmeansout ) under $2$-perturbation resilience.
%%%%%%%%%%%%%%%%%%%%%%
\iffalse \bibliography{perturbation_resilient} \fi

\section{Preliminaries}\labelsection{prelim}
\subsection{Definitions \& Notations}
In this section we formally define the clustering problems and perturbation resilience.

\paragraph{Clustering.} 
An instance $\inst$ of a clustering problem is defined by the tuple 
$(V, d, k)$, where $V$ is a set of $n$ points, 
$\maps{d}{V \times V}{\realsnonnegative}$ is a metric distance function, and $k$ is an 
integer parameter. The goal is to find a set of $k$ distinct points 
$S = \curlyof{c_1, \ldots, c_k} \subseteq V$ called \emph{centers} such that an objective
function defined over the points is optimized. The objective
function, also known as clustering cost, can be defined in various ways, and depends on the
problem in hand. Here, we are interested
in the {\kmedian}, and {\kcenter} objectives. Given a set of centers $S = \curlyof{c_1, \ldots, c_k}$
these objectives are defined as follows:
\begin{align*}
\text{(\kmedian)} \quad \quad \cost{d}{S} 
&= \sum_{u \in V} d(S, u) \\
\text{(\kmeans)} \quad \quad \cost{d}{S} 
&= \sum_{u \in V} d^2(S, u) \\
\text{(\kcenter)} \quad \quad \cost{d}{S} 
&= \max_{u \in V} d(S, u)
\end{align*}
where $d(S, u) = \Min_{i \in \curlyof{1, \ldots, k }} d(c_i, u)$.

The Voronoi partition induced by the centers, gives a natural way of clustering the input point set. 
In fact, the inherent goal of clustering is to uncover the underlying partitioning of points, and
one expects with correct choice of distance modeling, "$k$", and objective function, 
the Voronoi partition induced by the optimal set of centers will reveal the underlying clustering.
Throughout this paper, whenever we mention \emph{optimal clustering}, we indicate the Voronoi partition
corresponding to the optimal set of centers. Thus with this dual view of the clustering problem,
given a set of centers $S = \curlyof{c_1, \ldots, c_k}$, and corresponding Voronoi partition 
$\mathcal{C} = \curlyof{C_1, \ldots, C_k}$, the clustering cost can be rewritten as:
\begin{align*}
\text{(\kmedian)} \quad \quad \cost{d}{\mathcal{C}, S} 
&= \sum_{i = 1}^k \sum_{u \in C_i} d(c_i, u) \\
\text{(\kmeans)} \quad \quad \cost{d}{\mathcal{C}, S} 
&= \sum_{i = 1}^k \sum_{u \in C_i} d^2(c_i, u) \\
\text{(\kcenter)} \quad \quad \cost{d}{\mathcal{C}, S}
&= \max_{i \in \curlyof{1, \ldots, k }} \max_{u \in C_i} d(c_i, u)
\end{align*}

So far, in the clustering problem instance, we considered the distance function $d$ to be a metric.
However, this may not always be the case. Specifically, for the \kcenter objective, a generalization
 which is also studied is the Asymmetric \kcenter problem (\asymkcenter), where the distance
 function $d$ in the input instance $\inst = (V, d, k)$ is an asymmetric distance function. In other words, $d$
 obeys triangle inequality, but not symmetry.
That is $d(u, v) \leq d(u, w) + d(w, v)$ for all $u, v, w \in V$. However $d(u, v)$
may be not be same as $d(v, u)$. The objective is the \kcenter objective, but because the distance is assymetric,
order matters - we define the cost in terms of distance \emph{from the center to the points} i.e. given a center
$c$ and a point $u$, $d(c, u)$ is used to define cost. To reiterate,
given a set of centers $S = \curlyof{c_1, \ldots, c_k}$ and corresponding Voronoi partition (w.r.t $d(c_i, u)$)
$\mathcal{C} = \curlyof{C_1, \ldots, C_k}$, the clustering cost is:
\begin{align*}
\text{(\asymkcenter)} \quad \quad \cost{d}{\mathcal{C}, S} 
&= \max_{i \in \curlyof{1, \ldots, k }} \max_{u \in C_i} d(c_i, u)
\end{align*}

\paragraph{Clustering with Outliers.} 
An instance $\inst$ of a clustering with outliers problem is defined by the tuple 
$(V, d, k, z)$, where $V$ is a set of $n$ points, 
$\maps{d}{V \times V}{\realsnonnegative}$ is a metric distance function, and $k, z$ are
integer parameters. The goal is to identify $z$ points $Z \subseteq V$ as \textit{outliers} and partition the remaining $V \setminus Z$ points into $k$ clusters such that the clustering cost 
is minimized. Formally, given a set of outliers $Z$, a set of centers 
$S = \curlyof{c_1, \ldots, c_k} \subseteq V \setminus Z$,
and a Voronoi partition of $V \setminus Z$, $\mathcal{C} = \curlyof{C_1, \ldots, C_k}$ induced
by $S$, the clustering cost is defined as:
\begin{align*}
\text{(\kmedianout)} \quad \quad \cost{d}{\mathcal{C}, S; Z} 
&= \sum_{i = 1}^k \sum_{u \in C_i} d(c_i, u) \\
\text{(\kmeansout)} \quad \quad \cost{d}{\mathcal{C}, S; Z} 
&= \sum_{i = 1}^k \sum_{u \in C_i} d^2(c_i, u) \\
\text{(\kcenterout)} \quad \quad \cost{d}{\mathcal{C}, S; Z}
&= \max_{i \in \curlyof{1, \ldots, k }} \max_{u \in C_i} d(c_i, u)
\end{align*}

\paragraph{Perturbation Resilience.}
A clustering instance $\inst = (V, d, k)$ is $\alpha$-metric perturbation resilient ($\pr{\alpha}$) for a given
objective function, if for any metric 
\footnote{In case of \asymkcenter, we consider perturbations in which $d'$ obeys triangle inequality, but not symmetry} 
distance function $\maps{d'}{V \times V}{\realsnonnegative}$, 
such that for all $u, v \in V$, $\frac{d(u,v)}{\alpha} \leq d'(u, v) \leq d(u,v)$, 
the unique optimal clustering of $\inst' = (V, d', k)$ 
is identical to the unique optimal clustering of $\inst$. 

Note that after perturbation the optimal centers may change, however for the instance to be perturbation
resilient, the optimal clustering i.e. Voronoi partition induced by the optimal centers
must stay the same. Unless otherwise noted, for the rest of the paper 
$\alpha$-perturbation resilient indicates metric perturbation resilience. 

\paragraph{Outlier Perturbation Resilience.}
A clustering with outliers instance $\inst = (V, d, k, z)$ is $\alpha$-metric 
outlier perturbation resilient ($\opr{\alpha}$) for 
a given objective function, if for any metric distance function 
$\maps{d'}{V \times V}{\realsnonnegative}$, such that for all $u, v \in V$,
$\frac{d(u,v)}{\alpha} \leq d'(u, v) \leq d(u,v)$, the unique optimal clustering and outliers of 
$\inst' = (V, d', k, z)$ are identical to the optimal solution of $\inst$. 

It is easy to see, if a clustering with outliers instance $(V, d, k, z)$ with unique optimal clusters $\mathcal{C}$ and
outliers $Z$ is $\opr{\alpha}$, then the clustering instance $(V \setminus Z, d, k)$
is $\pr{\alpha}$.

\paragraph{Notation.} For integer,
$k$, let $[k] = \curlyof{1, \ldots, k}$. Throughout, we use $V$ to denote the input set of points, and $n$ is
the number of points.  For any clustering instance (including outlier
instances), $S = \curlyof{c_1, \ldots, c_k}$ denotes an optimal set of
centers, and $\mathcal{C} = \curlyof{C_1, \ldots, C_k}$ denotes the
corresponding Voronoi partition, which we call optimal
clusters. Further, for a point $p \in C_i$, we often interchangebly
use the terms, $p$ is \emph{assigned}/\emph{belongs} to center $c_i$
or cluster $C_i$. For a clustering with outlier instance, $Z$ denotes
the optimal set of outliers. In case of \kcenter, we refer to the
optimal clustering cost as \emph{optimal radius}, and denote it as
$\optradius{d}$.

\subsection{Some useful lemmas}
Here we state some intuitive and useful lemmas regarding \kcenter and
and \asymkcenter instances. The proofs of these lemmas are fairly
simple and can be found in \refappendix{gen}.

Recall, in the definition of perturbation resilience, we insisted that
the optimal $k$ clustering of the perturbed instance $\inst'$ has to
be same as the optimal $k$ clustering of the original instance. It is
not hard to show, that for \asymkcenter (and also for \kcenter), if a
$k-1$ clustering of $\inst'$ exists whose cost ist at most the optimal
cost of $k$ clustering, then the instance is not perturbation
resilient. Formally,

\begin{lemma}\labellemma{voronoi-kc}
Consider any {\asymkcenter} instance $\inst = (V, d, k)$. Let $S = \curlyof{c_1, \ldots, c_k}$
be an optimal set of centers, and $\mathcal{C} = \curlyof{ C_1, \ldots C_k }$ 
be the corresponding optimal clustering. The optimal radius is $\optradius{d}$. 
Suppose there exists a set of $k-1$ centers $S' = \curlyof{c_1', \ldots, c_{k-1}'}$,
inducing the Voronoi partition $\mathcal{C'} = \{ C_1', \ldots, C_{k-1}' \} $,
with cost $\cost{d}{\mathcal{C'}, S'} \leq \optradius{d}$. Then, the optimal 
clustering $\mathcal{C}$ is not unique.
\end{lemma}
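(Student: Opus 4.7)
My plan is to exhibit, from the given $(k-1)$-clustering $(S', \mathcal{C}')$, an optimal $k$-clustering distinct from $\mathcal{C}$. Since $|S'| = k - 1 < |V|$, pick any $v \in V \setminus S'$ and form $S'' = S' \cup \{v\}$. This is a set of $k$ distinct centers, and since adding a center cannot increase the maximum radius,
\[
\cost{d}{S''} \;=\; \max_{u \in V}\, \min\bigl(d(S', u),\, d(v, u)\bigr) \;\le\; \cost{d}{S'} \;\le\; \optradius{d}.
\]
Hence the Voronoi partition $\widetilde{\mathcal{C}}_v$ induced by $S''$ is itself an optimal $k$-clustering, and it suffices to choose $v$ so that $\widetilde{\mathcal{C}}_v \ne \mathcal{C}$.

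To exhibit such a choice, I would argue by contradiction and assume $\widetilde{\mathcal{C}}_v = \mathcal{C}$ for every $v \in V \setminus S'$. Because $v$ is a center of $S''$ assigned to itself, its cluster in $\widetilde{\mathcal{C}}_v$ contains $v$ and must coincide with the unique $\mathcal{C}$-cluster containing $v$, say $C_{a(v)}$. The remaining $k-1$ clusters of $\widetilde{\mathcal{C}}_v$ are subsets of the respective clusters of $\mathcal{C}'$ and together partition $\bigcup_{j \ne a(v)} C_j$; a short counting argument then forces every cluster of $\mathcal{C}'$ to have the form $C_j \sqcup E_j$ with $E_j \subseteq C_{a(v)}$, so $\mathcal{C}'$ must be obtainable from $\mathcal{C}$ by deleting $C_{a(v)}$ and redistributing its points among the surviving clusters. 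Applying this constraint with two choices $v_1 \in C_a \setminus S'$ and $v_2 \in C_b \setminus S'$ for distinct $a \ne b$ pins $\mathcal{C}'$ to the rigid form $\{C_j : j \ne a,b\} \cup \{C_a \cup C_b\}$, and a third choice $v_3$ with $a(v_3) \notin \{a,b\}$ yields an immediate contradiction to this rigid form.

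The main obstacle I foresee is the residual boundary case in which $V \setminus S'$ lies inside a single $\mathcal{C}$-cluster $C_a$, since then no third $v_3$ is available. In that regime $V \setminus S' \subseteq C_a$ together with $|S'| = k-1$ forces each $C_j$ with $j \ne a$ to be a singleton contained in $S'$, so $S' = \{c_{j} : j \ne a\}$ and $\mathcal{C} = \{C_a\} \cup \{\{c_j\} : j \ne a\}$. Under the same contradiction hypothesis the constraint $d(p, u) \le d(S', u)$ for all distinct $p, u \in C_a$ then forces the diameter of $C_a$ to be at most $\optradius{d}$; I plan to exploit this tightness to build the alternative optimal clustering directly, by replacing the center $c_a \in C_a$ with two distinct points of $C_a$ and discarding one of the singleton centers $c_j \in S'$, using that $S'$ already covers $C_a$ within $\optradius{d}$ so the new $k$-center set retains cost at most $\optradius{d}$ while its Voronoi partition splits $C_a$ nontrivially and therefore differs from $\mathcal{C}$.
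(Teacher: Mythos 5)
Your plan and the paper's proof share the first move---augmenting $S'$ by one additional center to obtain a $k$-center set of cost at most $\optradius{d}$---but diverge afterward. The paper applies pigeonhole to $\mathcal{C}'$: since $\mathcal{C}'$ has only $k-1$ cells and $|S|=k$, some cell $C_t'$ (with center $c_t'$) contains two optimal centers $c_i,c_j$; taking WLOG $c_t'\notin C_j$, it tracks where $c_j$ lands in the Voronoi partition of $S'\cup\{q\}$ for a suitably chosen $q$, case-splitting on whether $V\setminus S'\subseteq C_j$. Your alternative---a global contradiction argument extracting rigidity constraints on $\mathcal{C}'$ from several choices of $v$---is a genuinely different route, and your counting argument together with the three-choice contradiction is correct in the regime where all three $v_i$ exist. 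There are, however, two concrete gaps.

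First, you have a missing case between your generic argument and your boundary case. You use $v_1,v_2,v_3$ drawn from three distinct clusters, and you separately treat the case in which $V\setminus S'$ lies inside a single cluster. But $V\setminus S'$ can meet \emph{exactly two} clusters $C_a,C_b$: then $v_1,v_2$ exist and pin $\mathcal{C}'$ to $\{C_j:j\neq a,b\}\cup\{C_a\cup C_b\}$, but no $v_3$ is available, and this rigid form is internally consistent, so your chain stops without a contradiction. You would need a dedicated argument here (the rigid form does give usable structure---$|C_j|=1$ for $j\neq a,b$ and exactly one element $s$ of $S'$ lies in $C_a\cup C_b$---and then applying the contradiction hypothesis with $v=c_a$, $v=c_b$, or $v$ a max-radius witness can be pushed through; but that work is not done in your plan).

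Second, your boundary-case construction is not justified as written. After discarding a singleton center $c_j\in S'$ and adding $p,q\in C_a$, you claim cost at most $\optradius{d}$ ``using that $S'$ already covers $C_a$ within $\optradius{d}$''---but this says nothing about the coverage of the \emph{discarded} point $c_j$. You need some center of $S''=(S'\setminus\{c_j\})\cup\{p,q\}$ within distance $\optradius{d}$ of $c_j$, and nothing in your argument produces one. The issue is sharp because the lemma is stated for \asymkcenter: even knowing $d(c_j,c_a)\le\optradius{d}$ (which does follow from $S'$ covering $c_a$), asymmetry prevents concluding $d(c_a,c_j)\le\optradius{d}$. The paper avoids the difficulty entirely in this regime: it notes that $C_a$ must be the max-radius cluster, picks $u^*\in C_a$ with $d(c_a,u^*)=\optradius{d}$, and observes that $d(S',u^*)\le\optradius{d}$ together with Voronoi optimality of $\mathcal{C}$ (which forces $d(c_a,u^*)\le d(S',u^*)$) yields $d(S',u^*)=\optradius{d}$ exactly, i.e., a tie. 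Thus $S$ itself admits a second Voronoi partition of cost $\optradius{d}$ differing from $\mathcal{C}$, and no center ever needs to be discarded.
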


One common technique we use in multiple arguments, is perturbing the input instance
in a structured way. The next two lemmas are related to that.

\begin{lemma}\labellemma{d-perturb}
Consider a set of points $V$, and let $d$ be an asymmetric distance function defined over $V$.
Let $G$ be a complete directed graph on vertices $V$. The edge lengths in graph $G$ are given
by the function $\ell$, where for any edge $(u, v)$, 
$\frac{d(u, v)}{2} \leq \ell(u, v) \leq d(u, v)$. Then the distance function $d'$, defined as 
the shortest path distance in graph $G$ using $\ell$, is a metric\footnote{satsifies triangle inequality, and not necessarily symmetry} $2$-perturbation of $d$.
\end{lemma}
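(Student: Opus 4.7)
The plan is to verify the three conditions defining a metric $2$-perturbation of $d$: (i) $d'$ satisfies the triangle inequality, (ii) $d'(u,v) \leq d(u,v)$ for all $u,v$, and (iii) $d(u,v)/2 \leq d'(u,v)$ for all $u,v$. Symmetry is not required since we are in the asymmetric setting.

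First, triangle inequality for $d'$ is immediate from the fact that $d'$ is defined as a shortest path distance in $G$: for any $u, v, w$, concatenating a shortest $u \leadsto w$ path with a shortest $w \leadsto v$ path yields a $u \leadsto v$ walk of length $d'(u,w) + d'(w,v)$, so $d'(u,v) \leq d'(u,w) + d'(w,v)$.

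Next, the upper bound $d'(u,v) \leq d(u,v)$ follows by considering the single-edge path $u \to v$ in $G$, which has length $\ell(u,v) \leq d(u,v)$ by assumption; since $d'(u,v)$ is the minimum over all $u \leadsto v$ paths, we get $d'(u,v) \leq \ell(u,v) \leq d(u,v)$.

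The only step requiring a small argument is the lower bound $d'(u,v) \geq d(u,v)/2$. The plan here is to take an arbitrary directed path $u = u_0 \to u_1 \to \cdots \to u_m = v$ and bound its $\ell$-length from below. Using the hypothesis $\ell(u_{i-1}, u_i) \geq d(u_{i-1}, u_i)/2$ termwise and then the triangle inequality for $d$ (which holds since $d$ is an asymmetric metric), we obtain
\[
\sum_{i=1}^{m} \ell(u_{i-1}, u_i) \;\geq\; \frac{1}{2}\sum_{i=1}^m d(u_{i-1}, u_i) \;\geq\; \frac{1}{2} d(u,v).
\]
Taking the minimum over all paths gives $d'(u,v) \geq d(u,v)/2$. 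No step poses a real obstacle; the only subtle point is to remember that $d$ itself satisfies triangle inequality (even though asymmetric), which is what licenses collapsing the path sum down to $d(u,v)$.
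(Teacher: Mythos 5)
Your proof is correct and takes essentially the same approach as the paper: triangle inequality from the shortest-path construction, the upper bound from the single-edge path, and the lower bound by applying the hypothesis $\ell \geq d/2$ termwise along an arbitrary path and then collapsing via the triangle inequality for $d$.
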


\begin{lemma}\labellemma{opt-unchanged}
Consider an \asymkcenter instance $\inst = (V, d, k)$, and let $\mathcal{C}$ be 
the optimal clustering and $\optradius{d}$ be the optimal radius.
Let $G$ be a complete directed graph over vertex set $V$. The edge lengths in graph $G$ are given
by the function $\ell$, where 
(1) for a subset of edges $E'$, $\ell(u, v) = \min \curlyof{d(u, v), \optradius{d}}$;
(2) for every other edge, $\ell(u, v) = d(u, v)$. Suppose $d'$ is defined as the shortest
path distance in graph $G$ using $\ell$. Consider the \asymkcenter instance $\inst' = (V, d', k)$,
let $\optradius{d'}$ be the
optimal radius. If $\mathcal{C}$ is an optimal clustering in $\inst'$, 
then $\optradius{d} = \optradius{d'}$.
\end{lemma}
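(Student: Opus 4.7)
The plan is to prove the two inequalities $\optradius{d'} \leq \optradius{d}$ and $\optradius{d'} \geq \optradius{d}$ separately. The first inequality is immediate: since $\ell(u,v) \leq d(u,v)$ on every edge of $G$, the shortest $\ell$-path distance satisfies $d'(u,v) \leq d(u,v)$ for all pairs, so any set of $k$ centers witnessing $\optradius{d}$ in $\inst$ achieves cost at most $\optradius{d}$ in $\inst'$, and hence $\optradius{d'} \leq \optradius{d}$.

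The substantive direction is $\optradius{d'} \geq \optradius{d}$, and it hinges on the following key observation about $d'$: if $d'(u,v) < \optradius{d}$, then $d(u,v) < \optradius{d}$ as well. To see this, fix a shortest $\ell$-path $u = v_0, v_1, \ldots, v_m = v$ realizing $d'(u,v) = \sum_{i=1}^m \ell(v_{i-1}, v_i) < \optradius{d}$. Every edge of the path has $\ell(v_{i-1}, v_i) < \optradius{d}$. For an edge not in $E'$, $\ell(v_{i-1}, v_i) = d(v_{i-1}, v_i)$ by definition. For an edge in $E'$, the value $\ell(v_{i-1}, v_i) = \min\{d(v_{i-1}, v_i), \optradius{d}\}$ can be strictly less than $\optradius{d}$ only when $d(v_{i-1}, v_i) < \optradius{d}$, in which case $\ell$ and $d$ coincide on that edge. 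So the $d$-length of the path equals its $\ell$-length, and applying the (directed) triangle inequality for $d$ gives $d(u,v) \leq \sum_{i=1}^m d(v_{i-1}, v_i) < \optradius{d}$.

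With this key observation in hand, suppose toward contradiction that $\optradius{d'} < \optradius{d}$. By hypothesis $\mathcal{C}$ is an optimal clustering of $\inst'$, so fix centers $S' = \{c_1', \ldots, c_k'\}$ that induce $\mathcal{C}$ and witness $\optradius{d'}$ in $\inst'$. Then for every $u \in C_i$ we have $d'(c_i', u) \leq \optradius{d'} < \optradius{d}$, and the key observation yields $d(c_i', u) < \optradius{d}$. Taking the maximum over the finite set $V$, the centers $S'$ achieve cost strictly less than $\optradius{d}$ in $\inst$, contradicting the optimality of $\optradius{d}$. Hence $\optradius{d'} \geq \optradius{d}$, and combined with the easy direction we get equality.

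The main conceptual hurdle is the path-length observation: it relies crucially on the fact that capping edges at the threshold $\optradius{d}$ cannot create a ``shortcut'' whose total $\ell$-length dips below $\optradius{d}$, because such a shortcut would necessarily traverse only edges whose $\ell$-value already equals their $d$-value. Once this is isolated, the contradiction is mechanical, and the argument carries over unchanged to the asymmetric setting since it only uses the directed triangle inequality for $d$.
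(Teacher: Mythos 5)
Your proof is correct, and it rests on the same technical heart as the paper's: capping edges of $G$ at $\optradius{d}$ cannot produce a shortest $\ell$-path whose total length dips below $\optradius{d}$ between two points that were $\geq \optradius{d}$ apart under $d$. The packaging differs slightly. The paper argues directly: it fixes the largest-radius cluster $C_t$ under $d$, and shows for every candidate center $c \in C_t$ a witness $r(c) \in C_t$ with $d(c, r(c)) \geq \optradius{d}$ and hence $d'(c, r(c)) \geq \optradius{d}$, forcing $\optradius{d'} \geq \optradius{d}$ since $C_t$ remains a cluster in $\inst'$. You instead isolate the reusable contrapositive (``$d'(u,v) < \optradius{d}$ implies $d(u,v) < \optradius{d}$'') as a clean standalone claim and then close by contradiction: if $\optradius{d'} < \optradius{d}$ then the $\inst'$-optimal centers for $\mathcal{C}$ would already achieve cost below $\optradius{d}$ under $d$. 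Both routes use the directed triangle inequality for $d$ in exactly the same place; yours is arguably cleaner to state and to reuse, while the paper's is tied more tightly to the specific cluster structure. Either way the argument is sound.
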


The undirected versions of \reflemma{d-perturb} and \reflemma{opt-unchanged}
are as follows:
\begin{lemma}\labellemma{d-perturb-undirected}
Consider a set of points $V$, and let $d$ be a metric defined over $V$.
Let $G$ be a complete undirected graph on vertices $V$. The edge lengths in graph $G$ are given
by the function $\ell$, where for any edge $(u, v)$, 
$\frac{d(u, v)}{2} \leq \ell(u, v) \leq d(u, v)$. Then the distance function $d'$, defined as 
the shortest path distance in graph $G$ using $\ell$, is a metric $2$-perturbation of $d$.
\end{lemma}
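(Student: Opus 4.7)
The plan is to verify directly the two conditions in the definition of a $2$-perturbation together with the metric property; the argument is the undirected analogue of \reflemma{d-perturb} and is short.

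First I would check the upper bound $d'(u,v) \le d(u,v)$. Since $G$ is complete, the single edge $(u,v)$ is itself a valid path between $u$ and $v$, and by hypothesis its length satisfies $\ell(u,v) \le d(u,v)$. Hence the shortest path from $u$ to $v$ has length at most $d(u,v)$.

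Next I would check the lower bound $d'(u,v) \ge d(u,v)/2$. Fix any path $u = x_0, x_1, \ldots, x_t = v$ in $G$. Its length is
\[
\sum_{i=0}^{t-1} \ell(x_i, x_{i+1}) \;\ge\; \sum_{i=0}^{t-1} \frac{d(x_i, x_{i+1})}{2} \;\ge\; \frac{d(u,v)}{2},
\]
where the last inequality uses the triangle inequality for the metric $d$ iterated along the path. Taking the infimum over all $u$–$v$ paths gives $d'(u,v) \ge d(u,v)/2$. Combined with the previous paragraph, this establishes the perturbation bounds.

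Finally, I would verify that $d'$ is a metric. Nonnegativity is immediate from $\ell \ge 0$, and $d'(u,u)=0$. Symmetry follows because $G$ is undirected, so any $u$–$v$ path reversed is a valid $v$–$u$ path of the same length. The triangle inequality for $d'$ is a standard property of shortest-path distances: for any $u,v,w$, concatenating an optimal $u$–$w$ path with an optimal $w$–$v$ path yields a $u$–$v$ walk of length $d'(u,w)+d'(w,v)$, which bounds $d'(u,v)$ from above. There is no real obstacle here; the only point worth noting is that one must use the triangle inequality of the \emph{original} metric $d$ in the lower bound step, which is why we need $d$ to be a metric (as opposed to just a distance function), and this is precisely the symmetric counterpart of the argument in \reflemma{d-perturb}.
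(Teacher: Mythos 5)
Your proof is correct and follows essentially the same route as the paper's argument for \reflemma{d-perturb} (which the paper explicitly says carries over to the undirected case): the direct edge gives the upper bound, the triangle inequality of $d$ applied along any path gives the lower bound, and the metric axioms for $d'$ are standard shortest-path facts. The explicit symmetry check you added is the only (trivial) difference needed for the undirected setting, and it is handled exactly as one would expect.
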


\begin{lemma}\labellemma{opt-unchanged-undirected}
Consider a \kcenter instance $\inst = (V, d, k)$, and let $\mathcal{C}$ be 
the optimal clustering and $\optradius{d}$ be the optimal radius.
Let $G$ be a complete undirected graph over vertex set $V$. The edge lengths in graph $G$ are given
by the function $\ell$, where 
(1) for a subset of edges $E'$, $\ell(u, v) = \min \curlyof{d(u, v), \optradius{d}}$;
(2) for every other edge, $\ell(u, v) = d(u, v)$. Suppose $d'$ is defined as the shortest
path distance in graph $G$ using $\ell$. Consider the \kcenter instance $\inst' = (V, d', k)$,
let $\optradius{d'}$ be the
optimal radius. If $\mathcal{C}$ is an optimal clustering in $\inst'$, 
then $\optradius{d} = \optradius{d'}$.
\end{lemma}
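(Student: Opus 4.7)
The plan is to establish the two inequalities $\optradius{d'} \le \optradius{d}$ and $\optradius{d'} \ge \optradius{d}$ separately. The first is almost immediate: since $\ell \le d$ on every edge, shortest paths give $d'(u,v) \le d(u,v)$ for all pairs. Plugging the original optimal centers $S$ together with the partition $\mathcal{C}$ into the cost in $\inst'$ then yields $\optradius{d'} \le \cost{d'}{\mathcal{C}, S} \le \cost{d}{\mathcal{C}, S} = \optradius{d}$.

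For the reverse direction I would first record the following structural fact about $\ell$: an edge $(u,v) \in E'$ is \emph{strictly} reduced, meaning $\ell(u,v) < d(u,v)$, only when $d(u,v) > \optradius{d}$, and in that case $\ell(u,v) = \optradius{d}$ exactly. Consequently, any $\ell$-path whose total $\ell$-length is strictly less than $\optradius{d}$ cannot contain any strictly reduced edge. Therefore its $\ell$-length equals its $d$-length, and by the triangle inequality of the metric $d$ that is at least the direct $d$-distance between its endpoints. This is the key observation that lets me transfer information between $d$ and $d'$ on short paths.

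With this fact in hand I would finish as follows. Let $S' = \{c_1', \ldots, c_k'\}$ be an optimal set of centers of $\inst'$ whose Voronoi partition under $d'$ equals $\mathcal{C}$. Since $S'$ is a candidate set of $k$ centers for $\inst$, the cost $\cost{d}{\mathcal{C}, S'}$ upper-bounds the $k$-center cost of the Voronoi partition of $S'$ in $d$, and the latter is at least $\optradius{d}$. Hence some $u^* \in C_{i^*}$ satisfies $d(c_{i^*}', u^*) \ge \optradius{d}$. Now if $d'(c_{i^*}', u^*)$ were strictly less than $\optradius{d}$, the structural observation applied to the shortest-$\ell$ path from $c_{i^*}'$ to $u^*$ would give $d'(c_{i^*}', u^*) \ge d(c_{i^*}', u^*) \ge \optradius{d}$, a contradiction. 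Therefore $\optradius{d'} \ge d'(c_{i^*}', u^*) \ge \optradius{d}$, closing the argument.

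The step I expect to require the most care is the structural observation itself, namely that every strictly reduced edge has $\ell$-length exactly $\optradius{d}$ and that this forces short $\ell$-paths to behave like $d$-paths. Once that is nailed down, the rest is routine manipulation of $k$-center costs and the symmetric triangle inequality for $d$ (which is where the undirected version is slightly cleaner than the directed one from \reflemma{opt-unchanged}).
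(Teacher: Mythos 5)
Your proof is correct and rests on the same core observation as the paper's: any $\ell$-path of total length less than $\optradius{d}$ avoids every shortened edge (each shortened edge has $\ell$-length exactly $\optradius{d}$), so its $\ell$-length equals its $d$-length and, by the triangle inequality for $d$, is at least the direct $d$-distance between the endpoints. Where you diverge is in how this observation is deployed. The paper first argues that there is an optimal cluster $C_t$ of $\inst$ whose $1$-center radius under $d$ equals $\optradius{d}$, so that \emph{every} candidate center $c \in C_t$ has a witness $r(c) \in C_t$ with $d(c,r(c)) \ge \optradius{d}$, and then lower-bounds $\min_{c \in C_t} \max_{u \in C_t} d'(c,u)$. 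You instead take the optimal centers $S'$ of $\inst'$, observe they are feasible for $\inst$ so that $\cost{d}{\mathcal{C}, S'} \ge \optradius{d}$, extract a single pair $(c_{i^*}', u^*)$ with $d(c_{i^*}', u^*) \ge \optradius{d}$, and lift that one pair through the path observation. Your route is marginally leaner: it sidesteps the small argument that a cluster with full $1$-center radius $\optradius{d}$ must exist, and needs the path lemma only once rather than for every $c \in C_t$. The paper's version has the minor benefit of not referring to $S'$ explicitly. Both are sound and yield the same conclusion.
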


%%%%%%%%%%%%%%%%%%%%%%
\iffalse \bibliography{perturbation_resilient} \fi

\section{\texorpdfstring{\kcenter}{k-center} under Perturbation Resilience}\labelsection{kc}

In this section, we show that the
natural LP relaxation for a $2$-perturbation resilient \kcenter has an
integral optimum solution. To this end consider the result of Balcan \etal \cite{BalcanHW16} --- \emph{any}
$2$-approximation algorithm for \kcenter finds the optimal clustering for a $2$-perturbation resilient instance. They proved this result under the stronger definition of non-metric perturbation resilience, which was subsequently extended to metric perturbation resilience in an unpublished follow-up paper \cite{BalcanW17}. Formally, the result is as follows:

\begin{theorem}\labeltheorem{any-2}
  Let $\mathcal{A}$ be an arbitrary $2$-approximation algorithm for
  \kcenter.  Consider a $2$-perturbation resilient \kcenter instance
  $\inst = (V, d, k)$.  Let $\mathcal{C} = \curlyof{ C_1, \ldots C_k}$
  be the unique optimum clustering. Suppose $B$ is the set of centers
  returned by algorithm $\mathcal{A}$ when invoked on $\inst$. Then,
  the Voronoi partition induced by $B$ gives the optimal clustering
  $\mathcal{C}$.
\end{theorem}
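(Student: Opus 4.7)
My plan is to construct a single $2$-metric perturbation $d'$ of $d$ under which the candidate $2$-approximate set of centers $B$ becomes exactly optimal for $(V, d', k)$, and then transfer the conclusion back to the original metric by combining uniqueness of the optimum with the $2$-center proximity property implied by $2$-perturbation resilience.

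For each $p \in V$, let $b_{\pi(p)} \in B$ denote a nearest center in $B$ under $d$; the $2$-approximation guarantee gives $d(p, b_{\pi(p)}) \leq 2\,\optradius{d}$. On the complete undirected graph on $V$, I would define edge lengths
\[
\ell(p, b_{\pi(p)}) \;=\; \min\{\,d(p, b_{\pi(p)}),\;\optradius{d}\,\}
\]
for every $p$, and $\ell(u,v) = d(u,v)$ on every other edge; let $d'$ be the induced shortest-path distance. Since $\optradius{d} \geq d(p, b_{\pi(p)})/2$, every edge satisfies $d(u,v)/2 \leq \ell(u,v) \leq d(u,v)$, so \reflemma{d-perturb-undirected} shows that $d'$ is a valid $2$-metric perturbation of $d$. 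By \reflemma{opt-unchanged-undirected} combined with the $2$-perturbation resilience of $\inst$ (which forces $\mathcal{C}$ to remain the unique optimum clustering of $\inst' = (V, d', k)$), we obtain $\optradius{d'} = \optradius{d}$. Since $d'(p, b_{\pi(p)}) \leq \ell(p, b_{\pi(p)}) \leq \optradius{d}$ for every $p$, the cost of $B$ in $\inst'$ is at most $\optradius{d'}$, so $B$ is an optimum set of centers for $\inst'$, and its Voronoi partition under $d'$ must coincide with the unique optimum clustering $\mathcal{C}$. Because each $b \in B$ lies in its own $d'$-Voronoi cell, this already yields a bijection $\sigma$ such that $b_{\sigma(i)}$ is the unique element of $B$ inside $C_i$.

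To conclude that the Voronoi partition of $B$ under the \emph{original} metric $d$ also equals $\mathcal{C}$, I would invoke the $2$-center proximity property --- that for every $p \in C_i$ and every $j \neq i$, $d(p, c_j) > 2\,d(p, c_i)$ --- which itself follows from $2$-perturbation resilience by a localized single-edge perturbation: halving the edge $(p, c_j)$ and taking the shortest-path closure yields a valid $2$-metric perturbation under which $p$ would leave $C_i$ in the optimum clustering, contradicting uniqueness. Combining center proximity applied to $p$ with center proximity applied to each $b_{\sigma(j)} \in C_j$, and using the triangle bounds $d(c_i, b_{\sigma(i)}),\,d(c_j, b_{\sigma(j)}) \leq \optradius{d}$, one can derive the required strict inequality $d(p, b_{\sigma(i)}) < d(p, b_{\sigma(j)})$ for every $p \in C_i$ and every $j \neq i$.

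The main technical obstacle I anticipate lies in closing this last inequality cleanly: invoking $2$-center proximity for $p$ alone gives a bound that is too weak when $p$ is close to $c_i$, and one must simultaneously exploit the $2$-center-proximity inequality at $b_{\sigma(j)}$ to control $d(b_{\sigma(j)}, c_j)$ relative to $d(p, b_{\sigma(j)})$ before substituting into the triangle bounds. Once this bookkeeping is carried out, the Voronoi partition of $B$ under $d$ equals $\mathcal{C}$, completing the proof.
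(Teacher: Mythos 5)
Your construction of $d'$ matches the paper's exactly, and your conclusion that $B$ is an optimal set of centers for $\inst' = (V, d', k)$ and that the $d'$-Voronoi partition of $B$ equals $\mathcal{C}$ is correct. But there is a genuine gap in the final step, precisely where you flag ``the main technical obstacle.'' You abandon the structure of $d'$ and try to close the argument with $2$-center proximity together with triangle-inequality bookkeeping. As you half-notice, this does not close: center proximity gives $d(p,c_j) > 2d(p,c_i)$, and combining it with $d(c_i,b_{\sigma(i)}), d(c_j,b_{\sigma(j)}) \le \optradius{d}$ yields $d(p,b_{\sigma(j)}) > 2d(p,c_i) - \optradius{d}$ and $d(p,b_{\sigma(i)}) \le d(p,c_i) + \optradius{d}$, which only gives the desired strict inequality when $d(p,c_i) > 2\optradius{d}$ --- i.e.\ never. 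Stacking the proximity inequality at $b_{\sigma(j)}$ as well does not rescue this; the extra inequality is of the same form and the slack is simply not there. The stronger separation $d(p,q) > \optradius{d}$ for points in different optimal clusters \emph{would} finish it, but that is \reflemma{kc-sep}, which the paper derives \emph{from} \reftheorem{any-2}, so invoking it here would be circular.

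The paper instead closes the loop with a property you omit (the paper's \refobs{kc-d-1}): because $d'$ is built by shortening \emph{only} the edges from each $p$ to its $d$-nearest center $c(p) \in B$, and only down to $\optradius{d}$, any $s \leadsto p$ path with $s \in B\setminus\{c(p)\}$ still has $\ell$-length at least $\min\{d(c(p),p), \optradius{d}\}$, which is also an upper bound on $d'(c(p),p)$. Hence $d'(c(p),p) \le d'(p, B\setminus\{c(p)\})$, so the $d'$-nearest center of $p$ in $B$ is \emph{still} $c(p)$: the $d'$-Voronoi partition of $B$ and the $d$-Voronoi partition of $B$ are literally the same partition. Once that is observed, uniqueness of the optimal clustering in $\inst'$ forces the inequality to be strict and the identification with $\mathcal{C}$ is immediate, with no appeal to center proximity at all. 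Your proposal has the right construction but misses this structural observation, which is the crux of the argument.
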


%{\bf Chandra: rephrased lemma statement slightly.}

\begin{proof}
  Let $\optradius{d}$ denote the optimum solution value for the given
  instance.  Let $\mathcal{C'}$ be a Voronoi partition induced by $B$.  In clustering
  $\mathcal{C'}$, for each point $p \in V$, let $c(p)$ be the center
  in $B$ it is assigned to i.e. $d(c(p), p) \leq d(B \setminus
  \setof{c(p)}, p)$.  We define a distance function $d'$ which is a
  metric $2$-perturbation of $d$: consider the complete graph $G$ on
  vertices $V$.  Let $E' = \curlyof{(c(p), p): p \in V}$. The edge
  lengths in graph $G$ are given by the function $\ell$, where for any
  edge $(u, v)$,
\[
\ell(u, v) = 
  \begin{cases} 
    \min \curlyof{d(u, v), \optradius{d}}  \quad& (u,v) \in E' \\
    d(u, v) \quad& \text{otherwise}
  \end{cases}
\]
For any pair of points $u, v$, the distance $d'(u, v)$ is the shortest
path distance between $u$ and $v$ in graph $G$, using $\ell$.

\begin{obs}\labelobs{kc-2-perturbed}
$d'$ is a metric $2$-perturbation of $d$.
\end{obs}
\begin{proof}
 Since algorithm $\mathcal{A}$ returns a $2$-approximate
  solution, for each point $p \in V$, $d(c(p), p) \leq 2 \cdot
  \optradius{d}$. Therefore, for any $(u, v) \in E'$, $\ell(u, v) =
  \min \curlyof{d(u, v), \optradius{d}} \geq \frac{d(u, v)}{2}$. For
  any other $(u, v)$, by definition $\ell(u, v) = d(u, v)$. In other
  words, for any $u, v$, we have $\frac{d(u, v)}{2} \leq \ell(u, v)
  \leq d(u, v)$. As stated in \reflemma{d-perturb-undirected}, $d'$
  defined as the shortest path distance in graph $G$ with edge lengths satisfying $\frac{d(u, v)}{2} \leq \ell(u, v) \leq d(u, v)$, is 
  a metric $2$-perturbation of $d$.
  \end{proof}

\begin{obs}\labelobs{kc-d-1}
  For any $p \in V$, $d'(c(p), p) \leq \min \curlyof{d(c(p), p),
    \optradius{d}}$. Further, $d'(p, B \setminus \setof{c(p)}) \geq \min
  \curlyof{d(c(p), p), \optradius{d}}$.
\end{obs}
\begin{proof}
  The first claim follows immediately from the fact $d'(c(p), p) \leq
  \ell(c(p), p)$.  For the second claim, consider any $s \in B
  \setminus \setof{c(p)}$.  Let $P$ be an arbitrary $s \leadsto p$
  path.  If $P \bigcap E' = \emptyset$, then by triangle inequality
  $\ell(P) = \sum_{ e \in P} d(e) \geq d(s, p) \geq d(c(p), p)$.
  Otherwise, $\ell(P) = \sum_{ e \in P \setminus E'} d(e) + \sum_{e
    \in {P \bigcap E'}} \min \curlyof{d(e), \optradius{d}} \geq \min
  \curlyof{d(s, p), \optradius{d}} \geq \min \curlyof{d(c(p), p),
    \optradius{d}}$. Therefore $d'(s, p) = \min_P \ell(p) \geq \min
  \curlyof{d(c(p), p), \optradius{d}}$.

%{\bf Chandra: changed $c$ to $s$ in above proof to avoid confusion
%between $c$ and $c(p)$.}
\end{proof}

Consider the instance $\inst' = (V, d', k)$. Since, $\inst'$ is a
$2$-perturbed instance, the optimal clustering is given by
$\mathcal{C} = \curlyof{C_1, \ldots, C_k}$, and let $\optradius{d'}$
denote the cost of optimal solution. Using \reflemma{opt-unchanged-undirected}
we get, $\optradius{d'} = \optradius{d}$. Now, \refobs{kc-d-1} implies,
$\cost{d'}{B} = \max_{p \in V} d'(B, p) \leq \max_{p \in V} d'(c(p),
p) \leq \optradius{d} = \optradius{d'}$. Therefore, $B$ is a set of
optimal centers for $\inst'$. By perturbation resilience $B$ induces
the unique Voronoi partition $\mathcal{C}$ in $\inst'$. For
any $p \in V$, $d'(c(p), p)
\leq d'(p, B \setminus \setof{c(p)}) $ (follows from
\refobs{kc-d-1}). Since $B$ induces a unique Voronoi partition in $\inst'$, 
it must be the case $d'(c(p), p) < d'(p, B \setminus \setof{c(p)})$. This
also implies, that in clustering $\mathcal{C}$, every $p \in V$ belongs
to the same cluster as $c(p)$. Recall the definition of $c(p)$; $p$ was
assigned to $c(p)$ in the clustering induced by $B$ in the original
instance $\inst$. Thus the clusters in $\mathcal{C}'$ and
$\mathcal{C}$ are identical. This also proves that the Voronoi
clustering induced by $B$ in $\inst$ is unique. 
\end{proof}

\paragraph{Properties of
  \texorpdfstring{$2$}{2}-perturbation resilient
  \texorpdfstring{\kcenter}{k-center} instance:}
Angelidakis et al. \cite{AngelidakisMM17} showed that in the optimal
clustering of a $2$-perturbation resilient {\kcenter} instance, every
point is closer to its assigned center than to any point in a
different cluster. In fact they show this property for general
center based objectives, not just \kcenter. Here we observe that
\reftheorem{any-2} implies stronger structural
properties for \kcenter: (1) any point is closer to a point in its own cluster,
than to a point in a different cluster; (2) the distance between two
points in two different clusters is atleast the optimal radius (see
\reffigure{kc_1}). Rest of this section is devoted to proving
these properties.

\begin{figure}[!tbp]
  \centering
  \subfloat[Points in different optimal clusters are $\optradius{d}$ apart]{\includegraphics[width=0.48\textwidth]{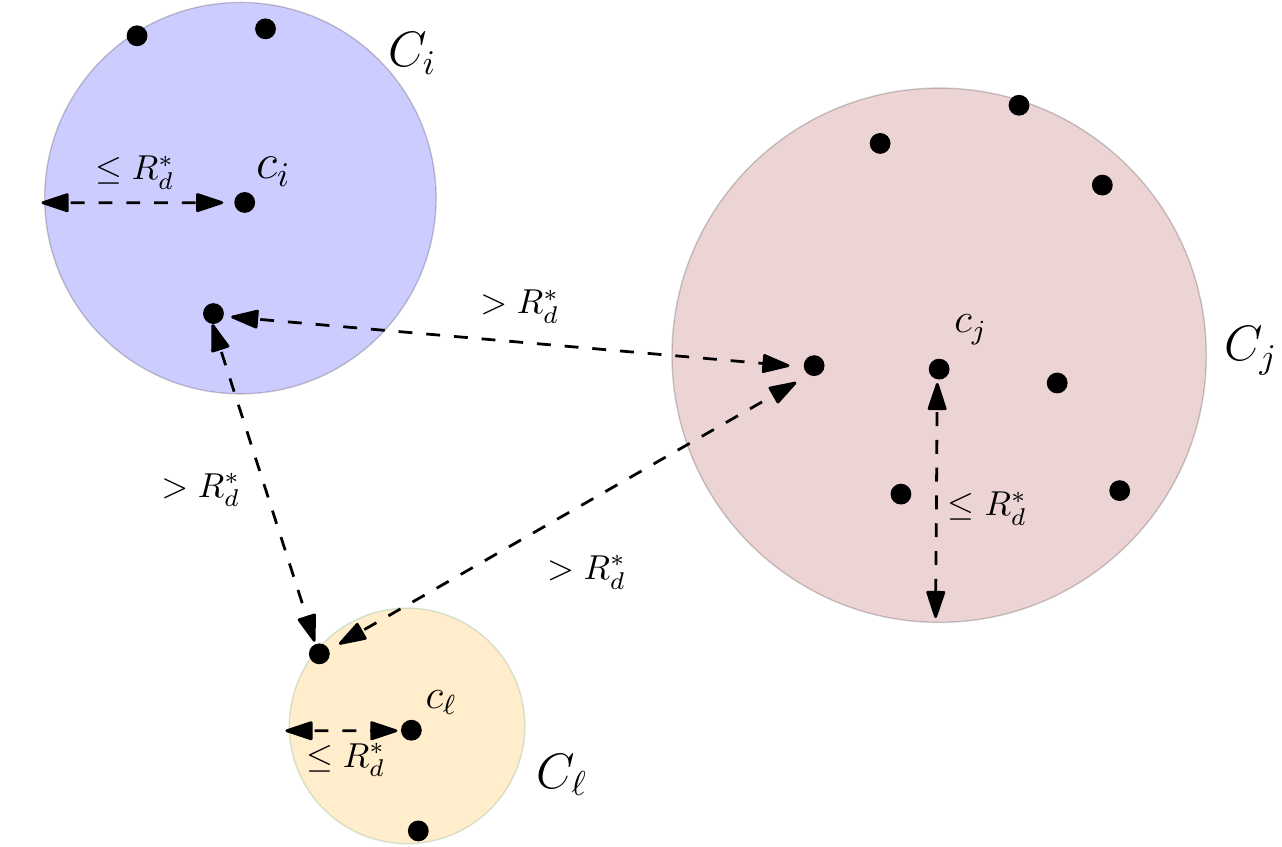}\labelfigure{kc_1}}
  \hfill
  \subfloat[Graph $G_R$ for $R < \optradius{d}$]{\includegraphics[width=0.48\textwidth]{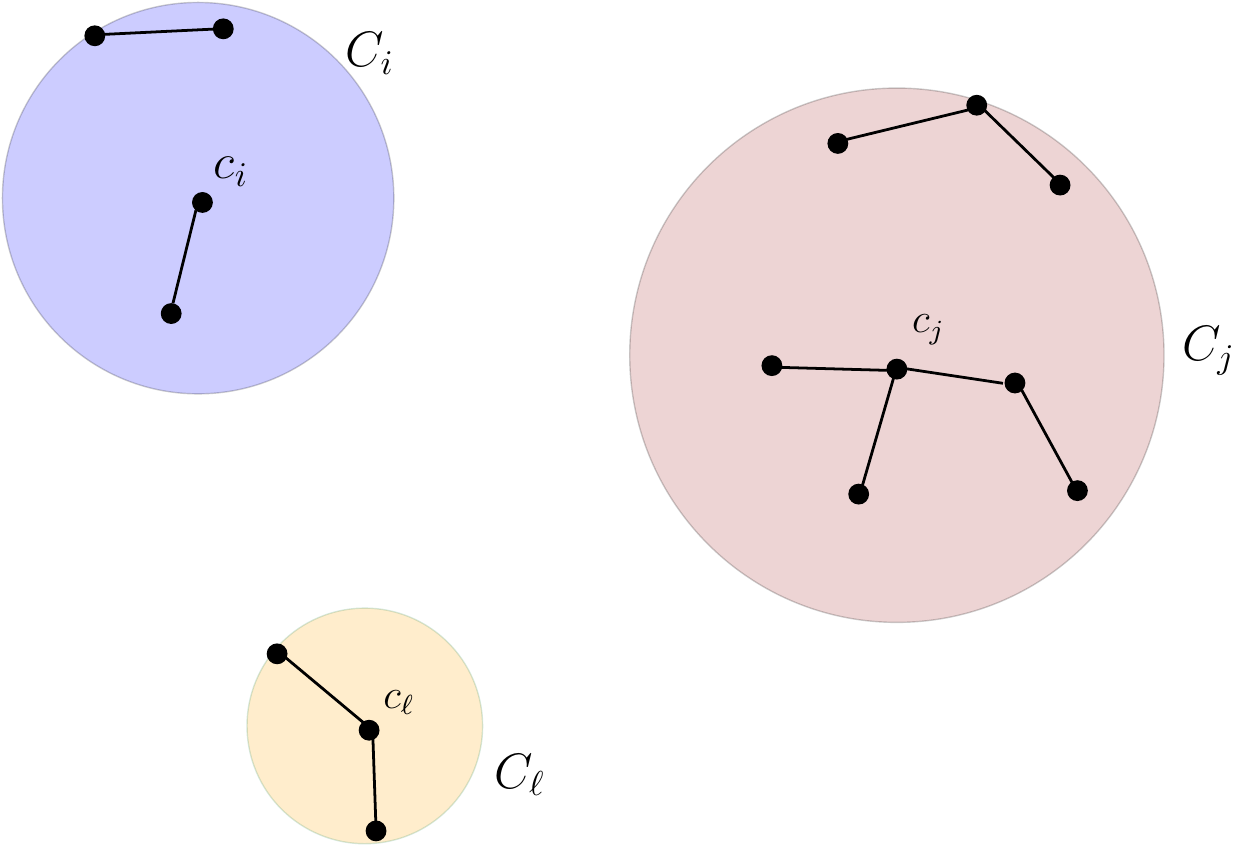}\labelfigure{kc_2}}
  \caption{Optimal Clusters in a $2$-perturbation resilient \kcenter instance}
\end{figure}

\begin{lemma}\labellemma{kc-sep-1}
Consider a $2$-perturbation resilient \kcenter instance $\inst = (V, d, k)$. 
Let $S = \curlyof{c_1, \ldots, c_k}$
be an optimal set of centers, and $\mathcal{C} = \curlyof{ C_1, \ldots C_k }$ 
be the corresponding unique optimal clustering. Consider any cluster $C_i$ with $\sizeof{C_i} \geq 2$, and let $p, w$ be any two points in $C_i$.
For any point $q$ in a different cluster $C_j$ ($i \neq j$), we have $d(p, q) > d(p, w)$.
\end{lemma}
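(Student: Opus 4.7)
}
My plan is to argue by contradiction, using \reftheorem{any-2} as a black box. Suppose, toward a contradiction, that $d(p,q) \leq d(p,w)$ for some $p,w \in C_i$ and some $q \in C_j$ with $i \neq j$. The goal is to produce a set $B$ of exactly $k$ centers that (i) is a valid $2$-approximate solution for $\inst$, yet (ii) whose Voronoi partition does not coincide with $\mathcal{C}$, thereby violating \reftheorem{any-2}.

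The candidate I would use is $B := (S \setminus \{c_i,c_j\}) \cup \{w,q\}$, i.e. swap the true centers of the two relevant clusters for the points $w$ and $q$. Since $w \in C_i$ and $q \in C_j$ lie in different clusters they are distinct from each other and from every remaining $c_m$ (clusters are disjoint), so $|B|=k$ whether or not $w=c_i$ or $q=c_j$. To check that $B$ is a $2$-approximation I would bound the covering radius cluster-by-cluster: for any $u \in C_i$, the triangle inequality gives $d(u,w) \leq d(u,c_i) + d(c_i,w) \leq 2\optradius{d}$, and symmetrically $d(u,q) \leq 2\optradius{d}$ for $u \in C_j$; points in other clusters are still served by their original center $c_m$ which remains in $B$. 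Hence $\cost{d}{B} \leq 2\optradius{d}$.

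Now I invoke \reftheorem{any-2}: because the instance is $2$-perturbation resilient and $B$ is a $2$-approximate set of centers, the Voronoi partition induced by $B$ must equal $\mathcal{C}$ (and, as emphasized in the proof of \reftheorem{any-2}, this Voronoi partition is unique, so ties are impossible). Since $p \in C_i$, the unique Voronoi assignment forces $p$ to be strictly closest in $B$ to the representative of $C_i$, which is $w$; in particular $d(p,w) < d(p,q)$. This contradicts the assumption $d(p,q) \leq d(p,w)$, and completes the argument. The edge cases $w = c_i$ or $q = c_j$ are harmless: the construction of $B$ degenerates to removing and re-adding the same point, and the same Voronoi argument applies with $w$ (respectively $q$) still being the $B$-center that represents $C_i$ (resp.\ $C_j$).

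The only real subtlety I expect is being careful about two things: (a) verifying that $B$ really has $k$ distinct elements in every edge case, which reduces to the disjointness of the optimal clusters; and (b) extracting strict inequality $d(p,w) < d(p,q)$ rather than merely $\leq$, which requires appealing to the uniqueness of the Voronoi partition guaranteed by perturbation resilience (this is exactly the same trick used in the last step of the proof of \reftheorem{any-2}). Beyond these two points the argument is just the triangle-inequality cost bound and one application of the theorem.
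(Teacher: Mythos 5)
Your proposal is correct and follows exactly the paper's approach: take $B = (S \setminus \{c_i,c_j\}) \cup \{w,q\}$, verify via the triangle inequality that it is a $2$-approximate set of centers, and invoke \reftheorem{any-2} to conclude that $B$'s Voronoi partition must be $\mathcal{C}$ (and is unique), forcing $d(p,w) < d(p,q)$. The paper compresses all of this into a one-line proof, but the bookkeeping you carry out (distinctness of the $k$ centers, the cluster-by-cluster radius bound, and the appeal to uniqueness to upgrade $\leq$ to $<$) is exactly the reasoning behind that one line.
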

\begin{proof}
  Note that the set of centers $B = S \setminus \setof{c_i, c_j}
  \Union \setof{w, q}$ gives a $2$-approximation. Therefore
  \reftheorem{any-2} immediately implies $d(w, p) < d(q, p)$.
\end{proof}

\begin{lemma}\labellemma{kc-sep}
  Consider a $2$-perturbation resilient \kcenter instance $\inst = (V,
  d, k)$.  Let $S = \curlyof{c_1, \ldots, c_k}$ be an optimal set of
  centers, and $\mathcal{C} = \curlyof{ C_1, \ldots C_k }$ be the
  corresponding unique optimal clustering. The optimal radius is
  $\optradius{d}$.  Consider any point $p \in V$, and let $p \in C_i$.
  For any point $q$ in a different cluster $C_j$ ($i \neq j$), we have
  $d(p, q) > \optradius{d}$.
\end{lemma}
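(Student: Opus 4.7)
The plan is to argue by contradiction, supposing $d(p, q) \le \optradius{d} =: R$. First, I would apply Lemma~\reflemma{kc-sep-1} with the roles of the two clusters swapped---taking $q \in C_j$ in the ``$p$'' slot and $p \in C_i$ in the ``$q$'' slot---to conclude that $d(y, q) < d(p, q) \le R$ for every $y \in C_j$. Triangle inequality then upgrades this to $d(y, p) < 2R$ for every $y \in C_j$, which is the key quantitative input.

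Next I would derive the contradiction by exhibiting a $k$-center set $B$ that is a $2$-approximation yet whose Voronoi partition cannot equal $\mathcal{C}$, thereby violating Theorem~\reftheorem{any-2}. The natural candidate is $B = (S \setminus \setof{c_j}) \cup \setof{p}$. The bounds above imply $B$ is a $2$-approximation: for $y \in C_\ell$ with $\ell \ne j$ we have $d(y, B) \le d(y, c_\ell) \le R$, and for $y \in C_j$ we have $d(y, B) \le d(y, p) < 2R$. Moreover $B \cap C_j = \emptyset$, since $c_j$ has been removed and the new point $p$ lies in $C_i$. Because every element of a $k$-center set lies in its own Voronoi cell, any $k$-center set whose Voronoi partition equals $\mathcal{C}$ must contain at least one representative from each cluster. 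In the principal case $p \ne c_i$ the set $B$ has exactly $k$ distinct elements, so Theorem~\reftheorem{any-2} is contradicted immediately.

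The main obstacle is the edge case $p = c_i$, where $|B| = k-1$ and Theorem~\reftheorem{any-2} does not apply to $B$ as stated. I would try to pad $B$ with a point $m^* \in V \setminus B$ chosen outside $C_j$: then $B \cup \setof{m^*}$ has $k$ centers, is still a $2$-approximation, still avoids $C_j$, and produces the same contradiction. Such an $m^*$ exists unless every cluster besides $C_j$ is a singleton. In that remaining sub-case I would switch to a direct cost argument: the symmetric form of Lemma~\reflemma{kc-sep-1} gives $d(q, w) < R$ for every $w \in C_j$, so when $q \ne c_j$ the $k$-center set $(S \setminus \setof{c_j}) \cup \setof{q}$ achieves cost strictly less than $R$, contradicting the definition of $R = \optradius{d}$; and when $q = c_j$ the same inequality gives $r_j < R$, whence (with all other clusters singleton) $\optradius{d} < R$, again impossible.
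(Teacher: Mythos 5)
Your argument is correct and rests on the same two pillars as the paper's proof --- Lemma~\reflemma{kc-sep-1} and Theorem~\reftheorem{any-2} --- but the particular $2$-approximate center set you build is different, which pushes the work into a longer case analysis. You swap only $c_j$ for $p$, so $B = (S \setminus \setof{c_j}) \cup \setof{p}$ misses $C_j$; this has $k$ distinct points only when $p \neq c_i$, so you must pad with an $m^*$ when $p = c_i$, and when no suitable $m^*$ exists (all non-$C_j$ clusters singletons) you retreat to a direct cost argument. The paper instead swaps both $c_i$ and $c_j$ for two distinct points $q, q' \in C_j$, producing a $k$-set that misses $C_i$ and always has cardinality $k$; this sidesteps the $p = c_i$ wrinkle entirely, at the price of needing $q' \neq q$ in $C_j$, which the paper justifies up front by showing (via Lemma~\reflemma{voronoi-kc}) that neither $C_i$ nor $C_j$ can be a singleton. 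Relatedly, the paper applies Lemma~\reflemma{kc-sep-1} to $C_i$ (obtaining $d(u,p) < \optradius{d}$ and hence $d(q,u) < 2\optradius{d}$ for all $u \in C_i$), whereas you apply it symmetrically to $C_j$; either direction is sound. Net: your proof works, but the paper's one short cardinality observation replaces your two levels of edge cases, so their route is a bit tighter.
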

\begin{proof}
  Assume for the sake of contradiction that the claim is not true,
  that is, there exists two points $p \in C_i$, and $q \in C_j$ such
  that $d(p, q) \leq \optradius{d}$. We claim both $C_i$ and $C_j$
  cannot have cardinality $1$; if it is the case then $k-1$ centers $S
  \setminus \{c_i \}$ will give solution of cost $\optradius{d}$, which would imply
  $\inst$ is not $2$-perturbation resilient (follows from
  \reflemma{voronoi-kc}). Assume without loss of generality that
  $\sizeof{C_i} > 1$. Now \reflemma{kc-sep-1}, coupled with our
  assumption $d(p, q) \leq \optradius{d}$ indicates, $\forall u \in
  C_i$, $d(u, p) < \optradius{d}$. First, note that this implies
  $\sizeof{C_j} > 1$, as otherwise the $k-1$ centers $S \setminus
  \{c_i, c_j \} \Union \setof{p}$ will give an optimal solution, which
  cannot happen for a $2$-PR instance. Second, by triangle inequality,
  we have, $\forall u \in C_i$, $d(q, u) < 2 \cdot
  \optradius{d}$. Let $q'$ be any arbitrary point in $C_j
  \setminus \{ q \}$. The set of centers $B = S \setminus \setof{c_i,
    c_j} \Union \setof{q, q'}$ gives a $2$-approximation since every
  point in $C_i \bigcup C_j$ is within $2\optradius{d}$ of
  $\{q,q'\}$. However, the Voronoi partition induced by $B$ is clearly
  different from $\mathcal{C}$, as $C_i$ is no longer a cluster. This
  contradicts \reftheorem{any-2}.
\end{proof}

\subsection{LP Integrality}
Now, we show that as a consequence of \reflemma{kc-sep}, the LP
relaxation for \kcenter is integral. Given an instance $\inst = (V, d,
k)$ of {\kcenter} and a parameter
$R \geq 0$, we define the graph (also called \emph{threshold} graph)
$G_R = (V, E_R)$, where $E_R = \{ (u, v) : u, v \in V, d(u, v) \leq
R\}$. For a vertex $v$, let $\nbr{v} = \{ u: (u, v) \in E_R\} \bigcup \{
v \}$ be the neighbors (including itself).  Observe, for any $R \geq
\optradius{d}$, where $\optradius{d}$ is the optimal solution cost of
$\inst$, there exists a set of $k$ centers $S \subseteq V$, such that
$S$ \textit{covers} $V $ in $G_R$, i.e. $\bigcup_{c \in S} \nbr{c} = V
$. Given a parameter $R$, we can define the following LP on graph
$G_R$. We use $y_v$ as an indicator variable for open centers, and
$x_{uv}$ to denote if $v$ is assigned to $u$.
\vspace{-.5em}\begin{center}
  \fbox{\begin{minipage}[t]{0.5\textwidth}\vspace{-.4cm}
\begin{align}\labelequation{kc-LP}
& \sum_{u \in V} y_u \leq k \quad & \tag{\textbf{kc-LP}}\nonumber\\
& x_{uv} \leq y_u \quad &\forall v \in V, u \in V  \nonumber \\
& \sum_{u \in \nbr{v}} x_{uv} \geq 1 \quad &\forall v \in V \nonumber \\
& y_v, x_{uv} \geq 0 \quad & \nonumber
\end{align}
\end{minipage}}
\end{center}% \vspace{-.5em}

The minimum $R$ for which \refequation{kc-LP} is feasible provides a
lower bound on the optimum solution, and is the standard relaxation
for \kcenter. It easy to see for all $R \geq \optradius{d}$
\refequation{kc-LP} is feasible. Further, it is well-known that the
integrality gap is $2$, that is, for all $R < \optradius{d}/2$, the LP
is infeasible.  However, if the \kcenter instance is $2$-perturbation
resilient, we can show that LP has no integrality gap.

\begin{theorem}\labeltheorem{kc-lp}
  Consider a $2$-perturbation resilient instance $\inst = (V, d, k)$
  of {\kcenter}.  Let $\optradius{d}$ be the cost of the optimal
  solution. Then, for any $R < \optradius{d}$, \refequation{kc-LP} is
  infeasible.
\end{theorem}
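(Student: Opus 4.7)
The plan is to exploit \reflemma{kc-sep} to argue that for $R < \optradius{d}$ the threshold graph $G_R$ decomposes cleanly along the optimal clusters, and then to show that any feasible LP solution must be so tight that one can extract an integral \kcenter solution of cost strictly below $\optradius{d}$, contradicting optimality.

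First I would observe, via \reflemma{kc-sep}, that since $d(p,q) > \optradius{d} > R$ whenever $p$ and $q$ lie in distinct optimal clusters, every edge of $G_R$ stays inside a single cluster; equivalently, $\nbr{v} \subseteq C_i$ for every $v \in C_i$. Assuming for contradiction that $(x,y)$ is a feasible LP solution, I would combine the covering constraint $\sum_{u \in \nbr{v}} x_{uv} \geq 1$ with $x_{uv} \leq y_u$ to deduce that $y(C_i) := \sum_{u \in C_i} y_u \geq 1$ for every $i$. The budget $\sum_u y_u \leq k$, combined with the partition identity $\sum_i y(C_i) = \sum_u y_u$, then forces the tight equality $y(C_i) = 1$ for all $i$.

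The key step is turning this tight equality into an integral dominating vertex per cluster. For any $v \in C_i$, the bound $\sum_{u \in \nbr{v}} y_u \geq 1 = y(C_i)$ forces $y_u = 0$ on $C_i \setminus \nbr{v}$, so the support of $y$ restricted to $C_i$ is contained in $\bigcap_{v \in C_i} \nbr{v}$, which must therefore be nonempty. Any $u_i$ in this common intersection satisfies $d(u_i, v) \leq R$ for all $v \in C_i$, so the set $B = \{u_1, \ldots, u_k\}$ achieves \kcenter cost at most $R < \optradius{d}$, the desired contradiction. The main conceptual hurdle I expect is noticing that the LP budget is tight enough to collapse the fractional support into a \emph{nonempty} intersection of neighborhoods---the rigidity supplied by \reflemma{kc-sep} is precisely what lets one squeeze an integral center out of the fractional LP mass.
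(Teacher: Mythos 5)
Your proposal is correct and follows essentially the same route as the paper: use \reflemma{kc-sep} to localize $\nbr{v}$ inside the optimal cluster, use the tight budget $\sum_u y_u \leq k$ to force $y(C_i)=1$ for all $i$, then observe that full coverage of each $v \in C_i$ forces the support of $y$ on $C_i$ into $\bigcap_{v \in C_i} \nbr{v}$, yielding a center within distance $R < \optradius{d}$ of the whole cluster. The only cosmetic difference is that you assemble a full integral solution $B=\{u_1,\dots,u_k\}$, whereas the paper derives the contradiction from a single cluster $C_t$ attaining the optimal radius.
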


\begin{proof}
  Let $C_1, \ldots C_k$ be the unique optimal clustering of instance
  $\inst = (V, d, k)$, with optimal radius $\optradius{d}$. Consider
  an arbitrary $R < \optradius{d}$, and let $G_R$ denote the
  corresponding threshold graph. Recall, in graph $G_R$ the vertex set
  is $V$, and the edge set $E_R = \curlyof{(u, v) : d(u, v) \leq
    R}$. According to \reflemma{kc-sep}, in a $\pr{2}$ instance, two
  points belonging to two different optimal clusters are separated by
  a distance of strictly more than $\optradius{d}$. Since $\inst$ is
  $\pr{2}$, graph $G_R$ has a simple structure --- for any $v \in C_i, i
  \in [k]$, $\nbr{v} \subseteq C_i$. Or in other words, the connected
  components of $G_R$ are subsets of the optimal clusters (see
  \reffigure{kc_2}).

  Suppose, the \kcenter LP (\refequation{kc-LP}) defined over graph
  $G_R$ is feasible, and $(x,y)$ is the feasible fractional
  solution. Since every point is fully covered, and it can be covered
  only by its neighbors in $G_R$, we have, for all $C_i$, $\sum_{u \in
    C_i} y_u \geq 1$. Since, $\sum_{v \in V} y_v \leq k$, and the
  clusters $C_1, \ldots, C_k$ are disjoint, we have $\sum_{u \in C_i}
  y_u = 1$, for each $i$.

  From the definition of $\optradius{d}$, there is an optimum cluster
  $C_t$ such that $\min_{c \in C_t}\max_{v \in C_t} d(c, v) =
  \optradius{d}$.  Let $C_t' = \curlyof{u \in C_t : y_u > 0}$. As we
  argued earlier, $\sum_{u \in C_t} y_u = \sum_{u \in C_t'} y_u =
  1$. Further, since for every $v \in C_t$, $\nbr{v} \subseteq C_t$,
  and $v$ needs to be covered, we must have $C_t' \subseteq
  \nbr{v}$. Consider any $c \in C_t'$.  Note that for every $v \in
  C_t$, $c$ is a neighbor of $v$ in graph $G_R$, i.e. $d(c, v) \leq R
  < \optradius{d}$. This implies, $\max_{v \in C_t} d(c, v) <
  \optradius{d}$ which is a contradiction.
\end{proof}
% Thus using \refequation{kc-LP}, we can do a binary search to find the optimal $\optradius{d}$. An immediate
% consequence of the optimality is following:

% \begin{corollary}
% Given an  instance $\inst = (V, d, k)$ of {\kcenter}, we can either find an integral solution using
%  \refequation{kc-LP} or certify that the instance is not $2$-perturbation resilient.
% \end{corollary}
%%%%%%%%%%%%%%%%%%%%%%
\iffalse \bibliography{perturbation_resilient} \fi

\section{LP Integrality of {\asymkcenter} under Perturbation Resilience}\labelsection{asymkc}

We start with an LP relaxation for \asymkcenter problem by considering
an unweighted directed graph on node set $V$. Specifically, for a
parameter $R \geq 0$, we define the directed graph $G_R = (V, E_R)$,
where $E_R = \{ (u, v) : u, v \in V, d(u, v) \leq R\}$. For a node
$v$, let $\innbr{v} = \{ u: (u, v) \in E_R\} \bigcup \setof{v}$ denote
the in-neighbors, and $\outnbr{v} = \{ u: (v, u) \in E_R\} \bigcup
\setof{v}$ be the out-neighbors (including itself). Observe, for any
$R \geq \optradius{d}$, there exists a set of $k$ centers $S \subseteq
V$, such that $S$ \textit{covers} $V$ in $G_R$, i.e. $\bigcup_{c \in
  S} \outnbr{c} = V$. Thus, given a parameter $R$, we can define the
following LP relaxation on graph $G_R$. We use $y_v$ as an indicator
variable for open centers, and $x_{uv}$ to denote if $v$ is assigned
to $u$.

\vspace{-.5em}\begin{center}
\fbox{\begin{minipage}[t]{0.5\textwidth}\vspace{-.4cm}
\begin{align}\labelequation{asym-kc-LP}
& \sum_{u \in V} y_u \leq k \quad & \tag{\textbf{asym-kc-LP}}\nonumber\\
& x_{uv} \leq y_u \quad &\forall v \in V, u \in V  \nonumber \\
& \sum_{u \in \innbr{v}} x_{uv} \geq 1 \quad &\forall v \in V \nonumber \\
& y_v, x_{uv} \geq 0 \quad & \nonumber
\end{align}
\end{minipage}}
\end{center}
For \asymkcenter, Archer \cite{Archer01} showed that the integrality
gap is atmost $O(\log ^* k)$, Infact it is tight within a constant factor
\cite{ChuzhoyGHKKKN05}. The main result of the section is captured
by the following theorem.

\begin{theorem}\labeltheorem{asymkc-lp}
  Let $\inst = (V, d, k)$ be a $2$-perturbation resilient instance of
  {\asymkcenter} and let $\optradius{d}$ be the cost of the optimal
  solution. Then, for any $R < \optradius{d}$,
  \refequation{asym-kc-LP} is infeasible.
\end{theorem}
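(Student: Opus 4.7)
The plan is to mirror the proof of \reftheorem{kc-lp}. That proof had two ingredients: (i) the separation \reflemma{kc-sep}, which forces every connected component of the threshold graph $G_R$ (for $R<\optradius{d}$) to lie inside a single optimum cluster; and (ii) a counting argument on any feasible $(x,y)$ that extracts a single center covering some optimum cluster within radius $R$, contradicting $\optradius{d}$. Ingredient (ii) carries over to \asymkcenter essentially unchanged: \refequation{asym-kc-LP} has the same form, $\innbr{v}$ plays the role of the undirected neighborhood, and the final bound $d(c,v)\leq R<\optradius{d}$ is already in the center-to-point orientation that matters for \asymkcenter. So the substantive work is a directed analog of \reflemma{kc-sep}.

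Concretely, the target statement is: for $u\in C_j$ and $v\in C_i$ with $i\neq j$, one has $d(u,v)\geq \optradius{d}$. Given this, for every $R<\optradius{d}$ and every $v\in C_i$ we have $\innbr{v}\subseteq C_i$ in $G_R$; hence any feasible $(x,y)$ of \refequation{asym-kc-LP} satisfies $\sum_{u\in C_i} y_u\geq 1$ for each $i$, and combined with $\sum_u y_u\leq k$ and disjointness of $C_1,\ldots,C_k$, equality holds cluster by cluster. Pick a cluster $C_t$ realizing $\min_{c\in C_t}\max_{v\in C_t} d(c,v)=\optradius{d}$. Applying the covering constraint to each $v\in C_t$ together with $\sum_{u\in C_t} y_u=1$ forces $C_t' := \{u\in C_t: y_u>0\}\subseteq \innbr{v}$ for every such $v$; any $c\in C_t'$ then satisfies $\max_{v\in C_t} d(c,v)\leq R<\optradius{d}$, contradicting the definition of $\optradius{d}$ because the minimum of $\max_v d(c,v)$ over $c\in C_t$ would then be $<\optradius{d}$.

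The main obstacle is the directed separation. In the symmetric case \reflemma{kc-sep} followed immediately from \reftheorem{any-2} (any $2$-approximation is exact under $2$-PR), a result with no asymmetric counterpart. I would instead argue directly by perturbation: assume for contradiction $d(u,v)<\optradius{d}$ for some $u\in C_j$, $v\in C_i$, $i\neq j$. Using \reflemma{d-perturb}, build a metric $2$-perturbation $d'$ by scaling selected edges down to $\min\{d(\cdot,\cdot),\optradius{d}\}$---naturally the edges along a directed path $c_j\to u\to v$ together with the outgoing edges $c_i\to w$ used to serve $C_i$---and exhibit a $(k-1)$-center set, e.g.\ $B=S\setminus\{c_i\}$, with $\cost{d'}{B}\leq\optradius{d}$; then \reflemma{voronoi-kc} combined with \reflemma{opt-unchanged} contradicts metric $2$-perturbation resilience. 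The delicate point is that in the asymmetric setting only $d(c_i,w)$ (not $d(w,c_i)$) is bounded by $\optradius{d}$, so one must route coverage of $C_i$ \emph{outward} through $c_j\to u$ (using $d(c_j,u)\leq\optradius{d}$) and then via $v$ onto the short outgoing edges $c_i\to w$ (using $d(c_i,w)\leq\optradius{d}$), while checking that every reduced edge still satisfies $\ell\geq d/2$ so that $d'$ remains a valid $2$-perturbation. This asymmetric routing is where the proof departs substantively from the symmetric one, and it is the piece I would focus on; once the directed separation is in hand, plugging it into the LP counting argument above yields \reftheorem{asymkc-lp} without further effort.
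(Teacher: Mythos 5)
There is a genuine gap: your target separation statement is too strong, and the paper explicitly says so ("the asymmetry in the distances does not permit such a strong and simple separation property"). You propose to show $d(u,v)\geq \optradius{d}$ for \emph{every} pair $u\in C_j$, $v\in C_i$ with $i\neq j$. The obstruction is that for a generic $v\in C_i$ only $d(c_i,v)\leq \optradius{d}$ is controlled --- nothing bounds $d(v,c_i)$ or $d(v,w)$ for other $w\in C_i$. Consequently, the perturbation-based contradiction you sketch cannot be closed: to argue that dropping $c_i$ leaves a cheap cover, you would need to serve all of $C_i$ from some surviving point, but $v$ cannot do so (its outgoing distances into $C_i$ are uncontrolled, so you cannot shrink them by a factor of $2$ down to $\optradius{d}$), and the multi-hop route $c_j\to u\to v\to c_i\to w$ accumulates up to four edge lengths each near $\optradius{d}$, far exceeding the budget. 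This is exactly why \reflemma{d-perturb} alone does not suffice here, whereas in the symmetric case $d(v,w)\leq d(v,c_i)+d(c_i,w)\leq 2\optradius{d}$ \emph{does} hold by symmetry.

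The paper instead proves two strictly weaker facts and reshapes the LP argument around them. \reflemma{asymkc-sep-1} gives in-separation only for the \emph{centers}: $d(q,c_i)>\optradius{d}$ for $q\notin C_i$, which yields $\innbr{c_i}\subseteq C_i$ (not $\innbr{v}\subseteq C_i$ for all $v\in C_i$, as your counting step assumes). This still forces $y(C_i)=1$ for every $i$ and, crucially, concentrates the $y$-mass on points $p$ with $d(p,c_i)\leq R<\optradius{d}$, i.e.\ on \emph{core} points. \reflemma{asymkc-sep-2} then supplies the separation only among such core points: if $p,q$ are core points of distinct clusters and $w\in C_i$ has $d(p,w)\geq \optradius{d}$, then $d(q,w)>\optradius{d}$. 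The proof picks the widest cluster $C_t$, takes any $p$ in its fractional support (a core point by the above), chooses $w\in C_t$ with $d(p,w)\geq \optradius{d}$, and uses the core-point lemma to show $\innbr{w}$ misses the supports of all other clusters and misses $p$; hence $w$'s coverage is at most $1-y_p<1$, a contradiction. Note also that your alternative is to just copy the argument of \reftheorem{kc-lp}, which selects some $c\in C_t'$ and shows $\max_{v\in C_t}d(c,v)\leq R$; this needs $c\in\innbr{v}$ for \emph{every} $v\in C_t$, which again requires the full separation you cannot establish.

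In short, the correct LP counting step must survive with only in-separation at centers and the core-point lemma; your proposal needs the full pairwise separation, which fails in the asymmetric setting and is not what the paper proves.
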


\subsection{Properties of \texorpdfstring{$2$}{2}-perturbation resilient \texorpdfstring{\asymkcenter}{Asymmetric k-center} instance}
In \refsection{kc} we showed that the clusters in an optimal solution to a
$\pr{2}$ \kcenter instance have a strong separation property: $d(p,q)
> \optradius{d}$ if $p,q$ are in different clusters.  For \asymkcenter
the asymmetry in the distances does not permit a such a strong and
simple separation property.  However, we can show slightly weaker
properties: (1) every optimal center is separated from any point in a
different cluster by at least $\optradius{d}$; (2) points in a cluster
which are far off from \emph{core} points (these points have small
distance "to" correponding cluster centers) in the cluster, are
well-separated from core points of other clusters as well (See
\reffigure{asymkc_1}, \reffigure{asymkc_2}). These properties suffice
to prove our desired theorem. The rest of the section is dedicated to
proving these properties.

\begin{figure}[!tbp]
  \centering
  \subfloat[Cluster centers are atleast $\optradius{d}$ away from points in different cluster]{\includegraphics[width=0.48\textwidth]{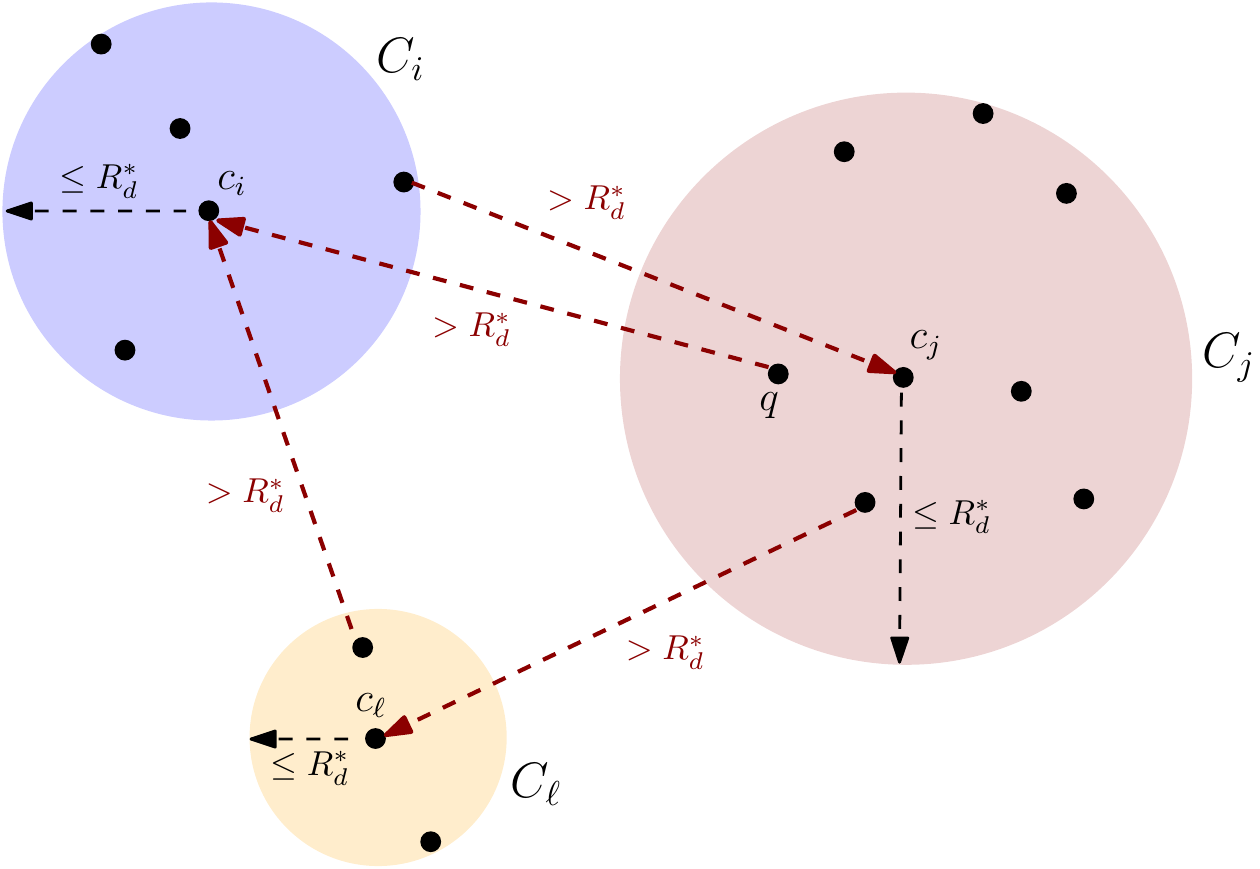}\labelfigure{asymkc_1}}
  \hfill
  \subfloat[Cluster points $\optradius{d}$ away from cluster core points, are $\optradius{d}$ away from different cluster core points]{\includegraphics[width=0.48\textwidth]{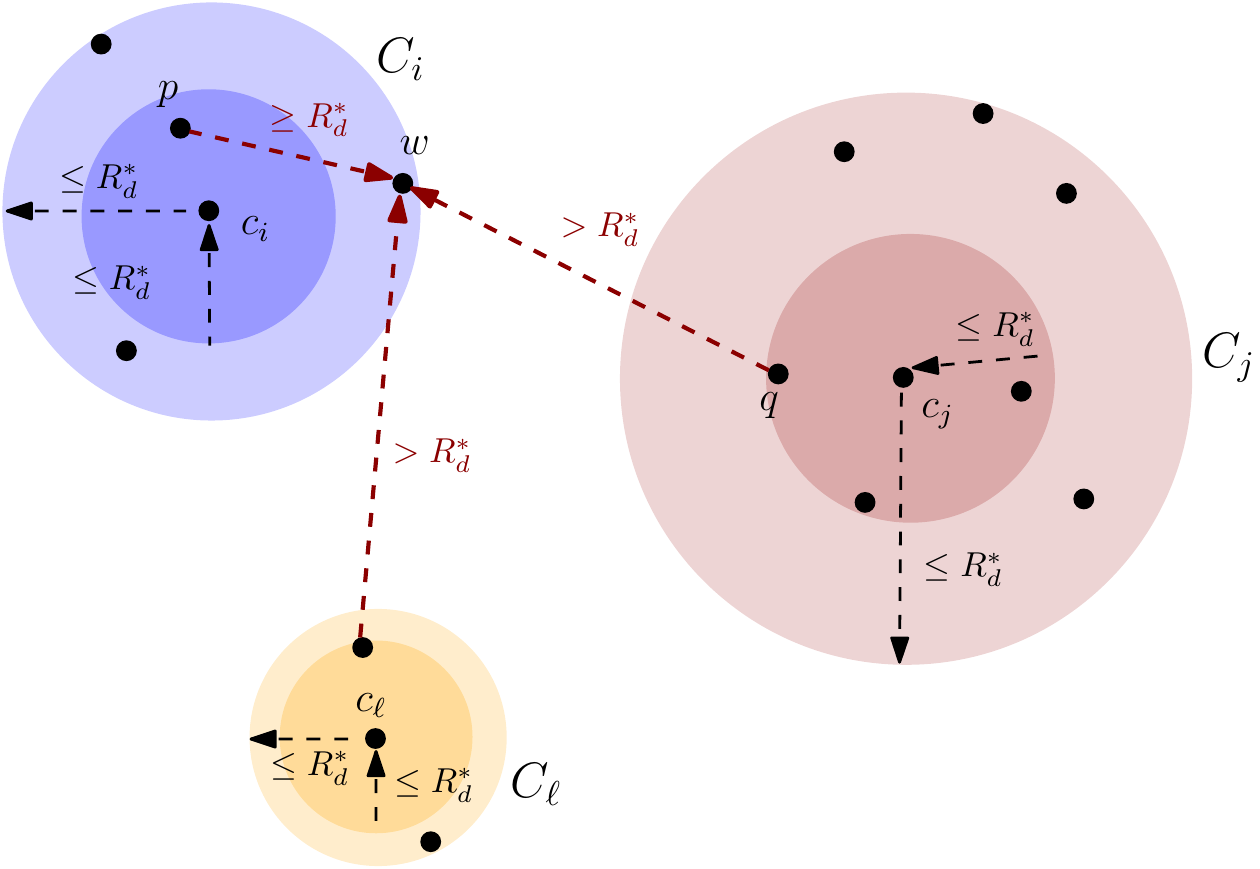}\labelfigure{asymkc_2}}
  \caption{Properties of a $2$-perturbation resilient \asymkcenter instance}
\end{figure}

\begin{lemma}\labellemma{asymkc-sep-1}
  Consider a $2$-perturbation resilient \asymkcenter instance $\inst =
  (V, d, k)$.  Let $\mathcal{C} = \curlyof{ C_1, \ldots C_k }$ be the
  unique optimal clustering, induced by a set of centers $S =
  \curlyof{c_1, \ldots, c_k}$. Let the optimal radius be
  $\optradius{d}$.  Consider any center $c_i$. Then for any point $q$
  in a different cluster $C_j$ ($i \neq j$), we have $d(q, c_i) >
  \optradius{d}$.
\end{lemma}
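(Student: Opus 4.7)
The plan is to argue by contradiction, mirroring the perturbation construction used in the proof of \reftheorem{any-2} and \reflemma{kc-sep}. Suppose some $q \in C_j$ with $j \neq i$ satisfies $d(q, c_i) \leq \optradius{d}$. I will exhibit an asymmetric metric $2$-perturbation $d'$ of $d$ in which either a different set of $k$ centers is also optimal in $\inst' = (V, d', k)$, or an explicit $(k-1)$-clustering attains the optimum cost of $\inst'$; either situation contradicts the uniqueness guaranteed by perturbation resilience. The case split is forced by the fact that the naturally chosen replacement center $q$ may or may not already lie in $S$.

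Consider first the main case $q \neq c_j$. Let $B = (S \setminus \{c_i\}) \cup \{q\}$, a set of $k$ distinct centers, and define an assignment $c(\cdot) : V \to B$ by $c(p) = c_\ell$ for $p \in C_\ell$ with $\ell \notin \{i, j\}$, $c(p) = c_j$ for $p \in C_j \setminus \{q\}$, $c(q) = q$, and $c(p) = q$ for $p \in C_i$. Using the triangle inequality together with $d(q, c_i) \leq \optradius{d}$ and the fact that each $c_\ell$ covers $C_\ell$ within $\optradius{d}$, one gets $d(c(p), p) \leq 2\,\optradius{d}$ for every $p \in V$ (the only nontrivial case is $p \in C_i$, where $d(q, p) \leq d(q, c_i) + d(c_i, p) \leq 2\,\optradius{d}$). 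Now I apply the perturbation recipe of \reftheorem{any-2}: set $\ell(c(p), p) = \min(d(c(p), p), \optradius{d})$ on the assignment edges, keep $\ell(u, v) = d(u, v)$ on every other directed edge, and let $d'$ be the shortest-path distance in the resulting directed graph. \reflemma{d-perturb} certifies $d'$ is an asymmetric metric $2$-perturbation of $d$; a direct calculation gives $\cost{d'}{B} \leq \max_p \ell(c(p), p) \leq \optradius{d}$; and \reflemma{opt-unchanged} applied to $\inst'$ (with $\mathcal{C}$ optimal there by PR) yields $\optradius{d'} = \optradius{d}$, so $B$ is itself an optimal set of centers in $\inst'$. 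Uniqueness of $\mathcal{C}$ in $\inst'$ then forces the Voronoi partition of $B$ to equal $\mathcal{C}$, but $q$ and $c_j$ are two distinct centers of $B$ that must occupy distinct Voronoi cells, whereas $\mathcal{C}$ places them together in $C_j$ --- a contradiction.

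In the remaining case $q = c_j$, we have $d(c_j, c_i) \leq \optradius{d}$. Take $B = S \setminus \{c_i\}$ (only $k - 1$ centers) and set $c(p) = c_\ell$ for $p \in C_\ell$ with $\ell \neq i$ and $c(p) = c_j$ for $p \in C_i$; the bound $d(c(p), p) \leq 2\,\optradius{d}$ again holds via triangle inequality, and the identical perturbation construction produces a $d'$ with $\cost{d'}{B} \leq \optradius{d} = \optradius{d'}$. Hence $B$ provides a $(k-1)$-clustering of $\inst'$ of cost at most $\optradius{d'}$, and \reflemma{voronoi-kc} applied to $\inst'$ forces the optimal $k$-clustering there to be non-unique, contradicting PR. The principal subtlety relative to the symmetric \reflemma{kc-sep} is precisely the asymmetry of $d$: only the into-center distance $d(q, c_i)$ is controlled by the hypothesis, which is exactly what is needed to use $q$ as a surrogate center covering $C_i$ within $2\,\optradius{d}$ by routing through $c_i$, but is too weak to yield the stronger separation between arbitrary cross-cluster pairs of points --- this is why the asymmetric statement only separates cluster centers from foreign cluster points, not arbitrary pairs.
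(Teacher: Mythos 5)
Your proof is correct and follows the same structure as the paper's: you perturb by shrinking the edges $(q,v)$ for $v \in C_i$ down to $\min\{d(q,v),\optradius{d}\}$, invoke \reflemma{d-perturb} and \reflemma{opt-unchanged}, and split on whether $q$ coincides with the center of $C_j$ to produce either a second optimal $k$-center set inducing a different Voronoi partition or a $(k-1)$-center solution handled by \reflemma{voronoi-kc}. The only cosmetic difference is that you build the alternative center set from the original centers $S$ rather than from an optimal center set $S'$ for the perturbed instance as the paper does, which is an equally valid (arguably slightly more direct) way to reach the same contradiction.
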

\begin{proof}
  Assume for the sake of contradiction, that the claim is false. That is, there exists a point $q \in C_j$, such that $d(q,
  c_i) \leq \optradius{d}$. We
  construct a distance function $d'$, which is a metric
  $2$-perturbation of $d$. And show that in the instance thus
  constructed, the optimal clustering is not unique, which contradicts
  the definition of perturbation resilience.

  We define $d'$ as follows: consider the complete directed graph $G$
  on vertices $V$.  Let $E' = \curlyof{(q, v): v \in C_i }$.  The edge
  lengths in graph $G$ are given by the function $\ell$, where for any
  edge $(u, v)$,
\[
\ell(u, v) = 
  \begin{cases} 
    \min \curlyof{d(u, v), \optradius{d} } \quad& (u,v) \in E' \\
    d(u, v) \quad& \text{otherwise}
  \end{cases}
\]
For any pair of points $u, v$, the distance $d'(u, v)$ is the shortest
path distance between $u$ and $v$ in graph $G$, using $\ell$.
\begin{obs}\labelobs{asymkc-1-perturbed}
$d'$ is a metric $2$-perturbation of $d$.
\end{obs}
\begin{proof}
For any $v \in C_i$, by triangle inequality, $d(q, v) \leq d(q, c_i) + d(c_i, v) \leq 2 \cdot \optradius{d}$. 
Therefore, for any $(u, v) \in E'$,
$\ell(u, v) = \min \curlyof{d(u, v), \optradius{d} } \geq \frac{d(u, v)}{2}$. For any other $(u, v)$,
by definition $\ell(u, v) = d(u, v)$. That is, for any edge $(u, v)$ we have,
$\frac{d(u, v)}{2} \leq \ell(u,v) \leq d(u, v)$. The shortest path distance function $d'$, defined on such a graph $G$, can be easily shown is 
a metric $2$-perturbation of $d$ (details are chalked out in proof \reflemma{d-perturb}).
\end{proof}

Consider the instance $\inst' = (V, d', k)$. Since, $\inst'$ is a
$2$-perturbed instance, the optimal clustering is given by
$\mathcal{C} = \curlyof{C_1, \ldots, C_k}$. Let $S' = \curlyof{c_1',
  \ldots, c_k'}$ be the optimal set of centers. Further,
$\optradius{d'}$ denotes the cost of optimal solution. And we can
show, $\optradius{d'} = \optradius{d}$ (follows from \reflemma{opt-unchanged}).

\begin{description}[style=unboxed,leftmargin=0cm]
\item [Case 1: $q \neq c_j'$.] 
Consider the set of  
centers $S'' = S' \setminus \setof{c_i'} \bigcup \setof{q}$.
Let $\mathcal{C''}$ be a Voronoi partition induced by $S''$. 
For any point $u \in C_{\ell}$, where $\ell \neq i$, 
$d'(S'', u) \leq d'(c_\ell', u) \leq \optradius{d'}$. For any point $u \in C_i$,
 $d'(S'', u) \leq d'(q, u) \leq \optradius{d} = \optradius{d'}$. 
 That is, for any point $u \in V$,
$d'(S'', u) \leq \optradius{d'}$. Therefore $\cost{d'}{\mathcal{C''}, S''} \leq \optradius{d'}$.
Now, in clustering $\mathcal{C''}$, the points $q$ and $c_j'$ are in different clusters,
which is not true for $\mathcal{C}$. Thus the optimal clustering is not unique, and this leads 
to contradiction.

\item [Case 2: $q = c_j'$.]  Consider the set of $k-1$ centers $S'' =
  S' \setminus \setof{c_i'}$. Let $\mathcal{C''}$ be a Voronoi
  partition induced by $S''$.  As in the previous case, we can show
  $d(S'', u) \leq \optradius{d'}$, for any $u \in V$, implying
  $\cost{d'}{\mathcal{C''}, S''} \leq \optradius{d'}$. Therefore, we
  have a $k-1$ clustering of $\inst'$ with cost at most the optimal
  cost.  Then, by \reflemma{voronoi-kc}, the optimal clustering of
  $\inst'$ is not unique. This contradicts the definition of
  perturbation resilience.
\end{description}
\end{proof}

The next lemma formalizes the notion of core points and the property
they enjoy.

\begin{lemma}\labellemma{asymkc-sep-2}
  Consider a $2$-perturbation resilient \asymkcenter instance $\inst =
  (V, d, k)$.  Let $\mathcal{C} = \curlyof{ C_1, \ldots C_k }$ be the
  unique optimal clustering induced by a set of centers $S =
  \curlyof{c_1, \ldots, c_k}$. Let the optimal radius is
  $\optradius{d}$. Suppose $p \in C_i$ and $q \in C_j$ where $i \neq
  j$ and $d(p,c_i) \le \optradius{d}$ and $d(q,c_j) \le
  \optradius{d}$. Then for any $w \in C_i$ such that $d(p,w) \ge
  \optradius{d}$ we have $d(q,w) > \optradius{d}$.  \iffalse Consider
  any cluster $C_i \in \mathcal{C}$ and let $p$ be any point in $C_i$
  such that $d(p, c_i) \leq \optradius{d}$. Then for any point $w \in
  C_i$, such that $d(p, w) \geq \optradius{d}$, every $q \in C_j$ ($j
  \neq i$), such that $d(q, c_j) \leq \optradius{d}$, we have $d(q, w)
  > \optradius{d}$. \fi
\end{lemma}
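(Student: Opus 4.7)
The approach is proof by contradiction, constructing a metric 2-perturbation $d'$ of $d$ that violates perturbation resilience. I will assume $d(q,w) \le R$ where $R = \optradius{d}$. Following the structure of \reflemma{asymkc-sep-1}, I will define $d'$ via edge-shortcuts in a directed graph, verify it is a valid 2-perturbation, exhibit a $k$-clustering of $\inst' = (V, d', k)$ of cost $\le \optradius{d'}$ whose Voronoi partition fails to agree with $\mathcal{C}$ at $w$, and derive a contradiction with uniqueness of the optimal clustering.

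The perturbation is the shortest-path metric $d'$ on the complete directed graph $G$ on $V$ whose edge lengths $\ell$ are defined by $\ell(p, v) = \min\{d(p, v), R\}$ for $v \in C_i$, $\ell(q, v) = \min\{d(q, v), R\}$ for $v \in C_j$, and $\ell(u, v) = d(u, v)$ otherwise. The shortcut is valid as a 2-perturbation because $d(p, v) \le d(p, c_i) + d(c_i, v) \le 2R$ for $v \in C_i$ (and symmetrically $d(q, v) \le d(q, c_j) + d(c_j, v) \le 2R$ for $v \in C_j$), so every $\ell$ lies in $[d/2, d]$; applying \reflemma{d-perturb}, $d'$ is a metric 2-perturbation of $d$. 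By \reflemma{opt-unchanged} together with perturbation resilience, $\optradius{d'} = R$ and the unique optimal clustering of $\inst'$ is $\mathcal{C}$.

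The heart of the argument is two distance computations in $d'$: I will show $d'(p, w) = R$ and $d'(q, w) = d(q, w)$. For $d'(p, w) \le R$ the direct edge $(p, w)$ witnesses it, since $\ell(p, w) = R$ (as $d(p, w) \ge R$ and $w \in C_i$). For $d'(p, w) \ge R$, I consider any path $p \to u_1 \to \cdots \to w$ in $G$: if some edge along the path is a strict shortcut (an edge with $\ell = R < d$), then that single edge already contributes $R$ to the path length, so the whole path is at least $R$; otherwise every edge on the path has $\ell = d$, and by triangle inequality in $d$ the path length is at least $d(p, w) \ge R$. An entirely analogous case analysis gives $d'(q, w) = d(q, w)$, using that $d(q,w) \le R$ ensures a shortcut of weight $R$ cannot beat the direct edge.

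Let $S' = \{c_1', \dots, c_k'\}$ be the optimum centers in $\inst'$ inducing $\mathcal{C}$, and define the candidate set $S'' = (S' \setminus \{c_i', c_j'\}) \cup \{p, q\}$ (with the natural modifications when $p = c_i'$ or $q = c_j'$, in which the same argument goes through either using $S' \setminus \{c_j'\} \cup \{q\}$, or $S' \setminus \{c_i'\} \cup \{p\}$, or $S'$ itself). Every $v \in C_\ell$ with $\ell \notin \{i,j\}$ is covered at cost $\le R$ by $c_\ell'$, every $v \in C_i$ by $p$ via shortcut, and every $v \in C_j$ by $q$ via shortcut, so $\cost{d'}{S''} \le R$ and $S''$ is optimum in $\inst'$. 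Perturbation resilience then forces the Voronoi partition of $S''$ to equal $\mathcal{C}$, and since $w \in C_i$ must be assigned uniquely to the center $p$ that replaces $c_i'$, we need $d'(p, w) < d'(q, w)$. But $d'(p, w) = R \ge d(q, w) = d'(q, w)$, so either $w$ strictly prefers $q$ (and migrates to the $C_j$-cluster) or there is a tie at $w$; both violate uniqueness of the optimum clustering. The main technical obstacle is establishing $d'(p, w) = R$ exactly, which is what the path case analysis above is designed to handle; once this is in hand, the contradiction falls out cleanly.
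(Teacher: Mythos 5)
Your proposal is correct and follows essentially the same route as the paper: the same shortcut graph $E' = \{(p,v): v \in C_i\} \cup \{(q,v): v \in C_j\}$ with capped edge lengths, the same key computations $d'(p,w) = \optradius{d}$ and $d'(q,w) \le \optradius{d}$, and the same swap $S'' = (S' \setminus \{c_i', c_j'\}) \cup \{p, q\}$ to produce a competing optimal solution in $\inst'$ whose Voronoi assignment separates $w$ from $p$. The only superficial difference is that you pin down $d'(q,w) = d(q,w)$ exactly, whereas the paper only needs (and only uses) the one-sided bound $d'(q,w) \le \optradius{d}$, which follows immediately from $d' \le d$.
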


\begin{proof}
  Consider a triplet of points $p, w \in C_i$ and $q \in C_j$, where
  $d(p, c_i), d(q, c_j) \leq \optradius{d}$, and $d(p, w) \geq
  \optradius{d}$. Assume for the sake of contradiction, $d(q, w) \leq
  \optradius{d}$.  We construct a distance function $d'$, which is a
  metric $2$-perturbation of $d$. Next we show that in the \asymkcenter instance constructed using $d'$, the optimal clustering is not unique, which contradicts
  the definition of perturbation resilience.

  We define $d'$ as follows: consider the complete directed graph $G$
  on vertices $V$.  Let $E' = \curlyof{(p, v): v \in C_i } \bigcup
  \curlyof{(q, v): v \in C_j }$.  The edge lengths in graph $G$ are
  given by the function $\ell$, where for any edge $(u, v)$,
\[
\ell(u, v) = 
  \begin{cases} 
    \min \curlyof{d(u, v), \optradius{d} } \quad& (u,v) \in E' \\
    d(u, v) \quad& \text{otherwise}
  \end{cases}
\]
For any pair of points $u, v$, the distance $d'(u, v)$ is the shortest path distance between
$u$ and $v$ in graph $G$, using $\ell$. 
\begin{obs}\labelobs{asymkc-2-perturbed}
$d'$ is a metric $2$-perturbation of $d$.
\end{obs}
\begin{proof}
For any $v \in C_i$, by triangle inequality, $d(p, v) \leq d(p, c_i) + d(c_i, v) \leq 2 \cdot \optradius{d}$. 
Similarly, for any $v \in C_j$, by triangle inequality, $d(q, v) \leq 2 \cdot \optradius{d}$
Therefore, for any $(u, v) \in E'$,
$\ell(u, v) = \min \curlyof{d(u, v), \optradius{d} } \geq \frac{d(u, v)}{2}$. For any other $(u, v)$,
by definition $\ell(u, v) = d(u, v)$. As we previously stated (in \reflemma{d-perturb}), $d'$ defined on
graph $G$, satisfying the property $\frac{d(u, v)}{2} \leq \ell(u, v) \leq d(u,v)$ is a metric $2$-perturbation of $d$.
\end{proof}

\begin{obs}\labelobs{asymkc-3-perturbed}
For any point $v \in C_i$, we have $d'(p, v) \leq \optradius{d}$. Similarly for any $v \in C_j$,
$d'(q, v) \leq \optradius{d}$. In particular, for point $w$, $d'(p, w) = \optradius{d}$ and
$d'(q, w) \leq \optradius{d}$.
\end{obs}
\begin{proof}
For any point $v \in C_i$, by definition, $\ell(p, v) \leq \optradius{d}$. Since, $d'(p, v) \leq \ell(p, v)$
the claim follows. Similarly, for any $v \in C_j$, it is easy to see $d'(q, v) \leq \optradius{d}$.

Now, consider point $w$. Let $P$ be any directed path from $p \leadsto w$ in graph $G$, excluding 
the single edge path $(p, w)$. If $P$ includes an edge from $E'$, by triangle inequality,
$\ell(P) = \sum_{e \in P} \ell(e) = \sum_{e \in P} d(e) \geq d(p, w) \geq \optradius{d}$.
The last inequality follows from our choice of $p, w$ at the outset. Otherwise,
if $P$ includes atleast one edge from $E'$,
$\ell(P') = \sum_{e \in P' \bigcap E'} \ell(e) + \sum_{e \in P' \setminus E'} \ell(e) \geq 
\sum_{e \in P' \bigcap E'} \min \curlyof{d(e), \optradius{d}} + \sum_{e \in P' \setminus E'} d(e) \geq \min \curlyof{d(p, w), \optradius{d}}  \geq \optradius{d}$. Finally, 
by our definition $\ell(p, w) = \optradius{d}$. Therefore, $d'(p, w) = \optradius{d}$.
The final observation, $d'(q, w) \leq \optradius{d}$, follows from our 
assumption $d(q, w) \leq \optradius{d}$.
\end{proof}

Consider the instance $\inst' = (V, d', k)$. Since, $\inst'$ is a $2$-perturbed instance, the optimal
clustering is given by $\mathcal{C} = \curlyof{C_1, \ldots, C_k}$. Let $S' = \curlyof{c_1', \ldots, c_k'}$ be 
the optimal set of centers. Further, $\optradius{d'}$ denotes the cost of optimal solution. We can show, $\optradius{d'} = \optradius{d}$.

Consider the set of  
centers $S'' = \parof{S' \setminus \curlyof{c_i', c_j'}} \bigcup \curlyof{p, q}$.
Let $\mathcal{C''}$ be a Voronoi partition induced by $S''$. 
For any point $u \in C_{\ell}$, where $\ell \neq i, j$, 
$d'(S'', u) \leq d'(c_\ell', u) \leq \optradius{d'}$. For any point $u \in C_i$,
 $d'(S'', u) \leq d'(p, u) \leq \optradius{d} = \optradius{d'}$. Similarly, 
 for any point $u \in C_j$,
 $d'(S'', u) \leq d'(q, u) \leq \optradius{d} = \optradius{d'}$
 That is, for any point $u \in V$,
$d'(S'', u) \leq \optradius{d'}$. Therefore $\cost{d'}{\mathcal{C''}, S''} \leq \optradius{d'}$.
Recall by \refobs{asymkc-3-perturbed}, $d'(p, w) = \optradius{d}$, while $d'(q, w) \leq \optradius{d}$.
Therefore, we can assume without loss of generality, in $\mathcal{C''}$, $w$ and $p$ are not in same 
cluster. This however is not true for $\mathcal{C}$, implying $\mathcal{C''} \neq \mathcal{C}$. 
 Thus the optimal clustering of $\inst'$ is not unique, and this contradicts the definition
 of perturbation resilience.
\end{proof}

\subsection{Proof of \reftheorem{asymkc-lp}}
Let $C_1, \ldots C_k$ be the unique optimal clustering of instance
$\inst = (V, d, k)$, with optimal radius $\optradius{d}$. Consider an
arbitrary $R < \optradius{d}$, and let $G_R$ denote the corresponding
threshold graph. Recall, graph $G_R$ is a directed graph defined over
vertex set $V$, and the edge set $E_R = \curlyof{(u, v) : d(u, v) \leq
  R}$. Suppose, the \asymkcenter LP (\refequation{asym-kc-LP}) defined
over graph $G_R$ is feasible, and $(x,y)$ is a feasible fractional
solution.

From \reflemma{asymkc-sep-1}, in a $\pr{2}$ instance, 
we have the following: if $q \not \in C_i$ then $d(q,c_i) > \optradius{d} > R$.
This implies that, in the graph $G_R$, for any $c_i, i \in [k]$,
$\innbr{c_i} \subseteq C_i$.  Let $C_i' = \curlyof{u \in \innbr{c_i}:
  y_u > 0}$.  Since $(x, y)$ is a feasible solution, we must have
$\sum_{u \in C_i'} y_u \geq 1$.  Since, $\sum_{v
  \in V} y_v \leq k$, and the clusters $C_1, \ldots, C_k$ are
disjoint, we have $\sum_{u \in C_i} y_u = \sum_{u \in C_i'} y_u = 1$,
for all $i \in [k]$.

From the definition of $\optradius{d}$ there must be a cluster $C_t$
such that $\min_{c \in C_t}\max_{v \in C_t} d(c, v) = \optradius{d}$.
Consider its center $c_t$ and let $p \in C_t'$. Clearly $d(p, c_t)
\leq R < \optradius{d}$.  Furthermore, since $C_t$ is the largest
radius cluster, there exists $w \in C_t$, such that $d(p, w) \geq
\optradius{d}$. Therefore in graph $G_R$, $p \notin \innbr{w}$.  For
any other cluster $C_j$, by \reflemma{asymkc-sep-2}, for any point $q
\in C_j'$, we have $d(q, w) > \optradius{d}$. That is, $\innbr{w}
\bigcap C_j' = \emptyset$, for any $j \neq t$. This implies $w$ can be
covered only by points that belong to $C_t'$.  Therefore $\sum_{u \in
  \innbr{w}} x_{uw} \leq \sum_{u \in C_t' - p} y_u <1$ since $y_p >
0$. This contradicts feasibility of $(x, y)$.

%%%%%%%%%%%%%%%%%%%%%%
\iffalse \bibliography{perturbation_resilient} \fi

\section{LP Integrality of {\kcenterout} under Perturbation Resilience}\labelsection{kco}
In this section we now consider the {\kcenterout} problem.  Recall
that an instance $\inst = (V, d, k, z)$ consists of a finite metric
space $(V,d)$ an integer $k$ specifying the number of centers and 
an integer $z < |V|$ specifying the number of outliers that are allowed.
One can write a natural LP relaxation for this problem as follows.
As before, for a parameter $R \geq 0$, we define the
graph $G_R = (V, E_R)$, where $E_R = \{ (u, v) : u, v \in V, d(u, v)
\leq R\}$. For a node $v$, let $\nbr{v} = \{ u: (u, v) \in E_R\} \cup
\{ v \}$ be the neighbors (including itself).  Observe, for any $R
\geq \optradius{d}$, there exists a set of $k$ centers $S \subseteq
V$, and a set of outliers $Z$ with $\sizeof{Z} \leq z$, such that $S$
\textit{covers} $V \setminus Z$ in $G_R$, i.e. $\cup_{c \in S} \nbr{c}
= V \setminus Z$. Thus, given a parameter $R$, we can define the
following LP relaxation on graph $G_R$. We use $y_v$ as an indicator
variable for open centers, and $x_{uv}$ to denote if $v$ is assigned
to $u$.  \vspace{-.5em}\begin{center}
  \fbox{\begin{minipage}[t]{0.5\textwidth}\vspace{-.4cm}
\begin{align}\labelequation{kco-LP}
& \sum_{u \in V} y_u \leq k \quad & \tag{\textbf{kco-LP}}\nonumber\\
& x_{uv} \leq y_u \quad &\forall v \in V, u \in V  \nonumber \\
& \sum_{u \in V} x_{uv} \leq 1 \quad &\forall v \in V \nonumber \\
& \sum_{v \in V} \sum_{u \in V } x_{uv} \geq n - z \nonumber \\
& x_{uv} = 0 \quad & \forall v \in V, u \notin \nbr{v} \nonumber \\
& y_v, x_{uv} \geq 0 \quad & \nonumber
\end{align}
\end{minipage}}
\end{center}% \vspace{-.5em}
The \refequation{kco-LP} is feasible for all $R \geq \optradius{d}$. The main theorem we prove in this 
section is as follows:

\begin{theorem}\labeltheorem{kco-lp-integrality}
Given a $2$-perturbation resilient instance $\inst = (V, d, k, z)$ of \kcenterout with optimal cost $\optradius{d}$, 
\refequation{kco-LP} is infeasible for any $R < \optradius{d}$.
\end{theorem}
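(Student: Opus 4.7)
The plan is to prove this in parallel with \reftheorem{kc-lp}, with one new structural ingredient: an outlier-separation lemma saying that in a $\opr{2}$ \kcenterout instance every outlier $u\in Z$ is at distance strictly more than $\optradius{d}$ from every non-outlier $p\in V\setminus Z$. I would prove this along the lines of \reflemma{asymkc-sep-1}: assume for contradiction that $d(u,p)\le\optradius{d}$ with $p\in C_i$; since $d(c_i,u)\le 2\optradius{d}$ by triangle inequality, shrinking just the single edge $(c_i,u)$ to length $\min(d(c_i,u),\optradius{d})$ and taking shortest paths produces a valid metric $2$-perturbation $d'$ by \reflemma{d-perturb-undirected}. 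In $\inst'=(V,d',k,z)$, swapping $u$ into $C_i$ (keeping center $c_i$) and demoting $p$ to the outlier set gives a clustering of cost at most $\optradius{d}$ because $d'(c_i,u)\le\optradius{d}$ and all other cluster radii are unchanged; the natural outlier-analogue of \reflemma{opt-unchanged-undirected} then gives $\optradius{d'}=\optradius{d}$, producing a second optimal solution whose outlier set $(Z\setminus\{u\})\cup\{p\}$ differs from $Z$, contradicting $\opr{2}$.

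Combined with \reflemma{kc-sep} applied to the $\pr{2}$ \kcenter instance $(V\setminus Z,d,k)$, the outlier-separation lemma implies that in the threshold graph $G_R$ at any $R<\optradius{d}$ one has $\nbr{v}\subseteq C_{i(v)}$ for every $v\in V\setminus Z$ and $\nbr{u}\subseteq Z$ for every $u\in Z$. Hence any feasible fractional solution $(x,y)$ of \refequation{kco-LP} has its $x$-variables supported entirely inside a single optimum cluster or inside $Z$, and the LP decouples across these blocks.

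Now suppose $(x,y)$ is feasible for \refequation{kco-LP} at some $R<\optradius{d}$. The endgame is to reduce to \reftheorem{kc-lp}. I would argue that one may assume, without loss of generality, that $\sum_u x_{uv}=1$ for every $v\in V\setminus Z$. Intuitively, the coverage of $Z$ is at most $z$ (since each $\sum_v x_{vu}\le 1$ for $u\in Z$), so the $n-z$ coverage constraint forces $\sum_{v\in V\setminus Z}\sum_u x_{uv}\ge n-2z$ and, via an exchange argument that transfers any positive $y$-weight sitting on $Z$ to a neighbor of a partially-covered $v\in V\setminus Z$ inside $C_{i(v)}$, can always be driven up to $n-z$, i.e.\ full coverage on every $v\in V\setminus Z$. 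Once full coverage is achieved, the restriction $(y|_{V\setminus Z},x|_{V\setminus Z})$ is a feasible solution of the plain \kcenter LP \refequation{kc-LP} on $(V\setminus Z,d,k)$ at threshold $R$; since that instance is $\pr{2}$ \kcenter, \reftheorem{kc-lp} says no such solution exists, giving the contradiction.

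The hard step is this WLOG reduction to full coverage: the LP naively allows up to $z$ units of coverage slack, and one must show rigorously that any reassignment of $y$-weight from $Z$ to a partially-uncovered cluster neighbor is non-decreasing for total coverage, using the strict outlier separation of step one to rule out the boundary case. This is essentially where the \emph{strict} inequality in the $\opr{2}$ separation (as opposed to some weaker $\alpha<2$) is used; I expect this exchange step to be the main place where technical care is needed.
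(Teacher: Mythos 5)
Your plan starts out right: the outlier-separation lemma you propose (every outlier is $> \optradius{d}$ from every non-outlier) is exactly the paper's \reflemma{kco-q-outlier}, and your proof sketch for it is essentially the paper's. Combined with \reflemma{kc-sep} applied to $(V\setminus Z, d, k)$, this does give the decoupling of $G_R$ into cluster blocks and an outlier block. The easy case where $y^*(Z) = 0$ also reduces to \reftheorem{kc-lp} exactly as you say.

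The problem is the $y^*(Z) > 0$ case. Your ``WLOG full coverage'' exchange does not go through. Feasibility requires the \emph{total} coverage $\sum_v \sum_u x_{uv} \ge n-z$, and shifting $\epsilon$ units of $y$-weight off an outlier $u$ onto a cluster vertex $w$ changes total coverage by roughly $\epsilon\bigl(|\nbr{w}| - |\nbr{u}|\bigr)$, which can be \emph{negative}: $|\nbr{u} \cap Z|$ can a priori be huge while $|\nbr{w}|$ is tiny (e.g.\ a path-like cluster where each vertex has few $R$-neighbors). So the exchange can push the solution below the coverage threshold and destroy feasibility; the strict separation of step one says nothing about the \emph{density} of the outlier neighborhood and hence gives you no control over the loss term. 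You are missing a second structural ingredient that the paper makes explicit: \reflemma{kco-out-ball}, which shows that for any outlier $p$ the ball $\ball{d}{p, \optradius{d}}\cap Z$ has fewer than $\min_i n_i$ points (provable by replacing the smallest cluster with that ball, contradicting uniqueness under $\opr{2}$). Without such a bound the theorem would actually be false at the level of your argument: one can write down LP-feasible configurations at $R < \optradius{d}$ in which a single outlier covers a large mass of $Z$ and compensates for under-covered clusters, and nothing in your lemmas rules these out.

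With the ball lemma in hand, the paper does not do an exchange at all --- it does a direct counting argument. Setting $a_i = y(C_i)$, $b = y(Z)$, one bounds $\cov(Z) < b\,n_1$ using the ball lemma, bounds $\cov(C_i) \le n_i a_i$ when $a_i < 1$, and observes that $\sum_{i : a_i < 1} (1 - a_i) \ge b$ because $\sum_i a_i + b \le k$. Since $n_i \ge n_1$, the cluster coverage deficit is at least $n_1 b$, which strictly exceeds $\cov(Z)$, giving $\cov(V) < n-z$ directly. I would recommend you add the ball lemma (or its weaker $\optradius{d}$-radius version suffices) and replace your exchange step by this counting argument; a local improvement strategy of the kind you sketched faces exactly the obstruction above and I do not see how to repair it without effectively re-deriving the global count.
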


\subsection{Properties of $2$-perturbation resilient {\kcenterout} instance}
For \kcenterout we extend the properties from \refsection{kc} that
hold for $2$-perturbation resilient instances.  The first property
shows that if $p$ is a non-outlier point $p$ and $q$ is any point not
in the same cluster as $p$ ($q$ could be an outlier) then $d(p,q) >
\optradius{d}$. The second property is that for any outlier point $q$,
the number of outliers in a ball of radius $2\optradius{d}$ is
small. Specifically the number of points is strictly smaller than the
size of the smallest cluster in the optimum clustering. This property
makes intuitive sense, for otherwise $q$ can define another cluster
with outlier points and contradict the uniqueness of the clustering in
after perturbation. We formally state them below after setting up the
required notation.

\begin{figure}[!tbp]
  \centering
  \subfloat[Points in an optimal cluster separated by $\optradius{d}$ from points outside that cluster]{\includegraphics[width=0.48\textwidth]{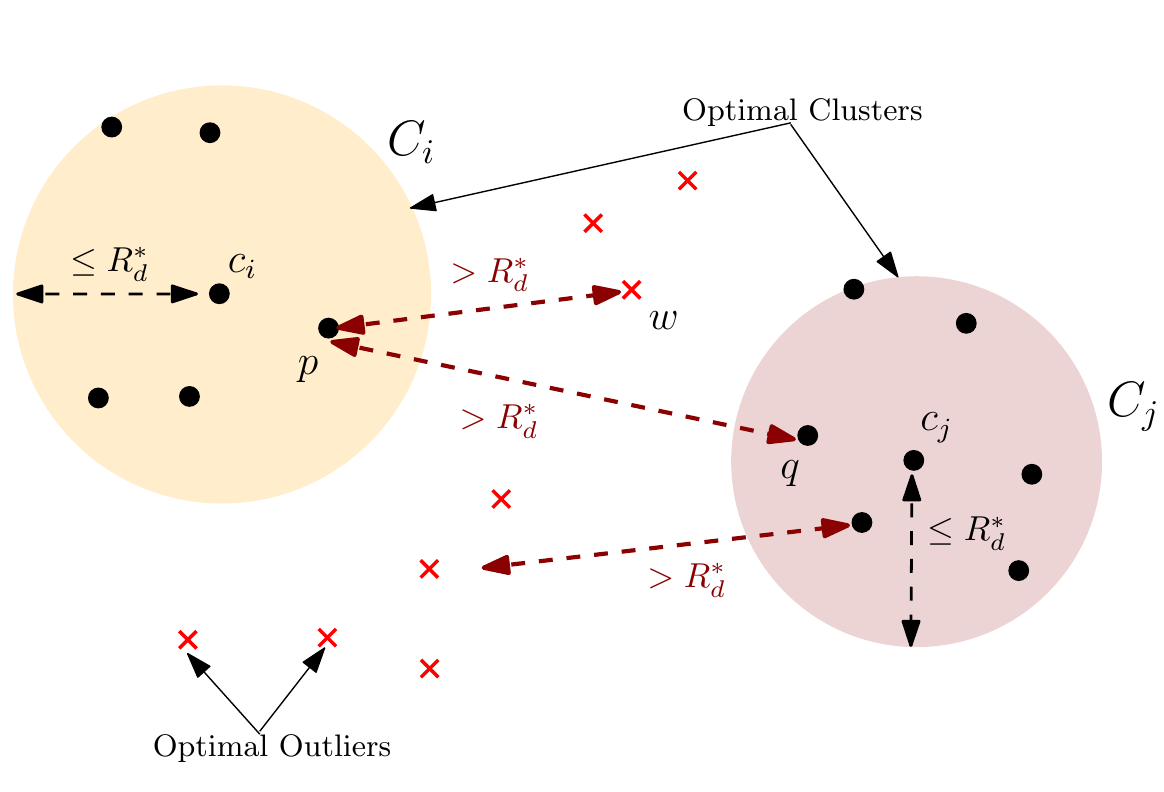}\labelfigure{kco_1}}
  \hfill
  \subfloat[Sparse Neighborhood of an outlier]{\includegraphics[width=0.48\textwidth]{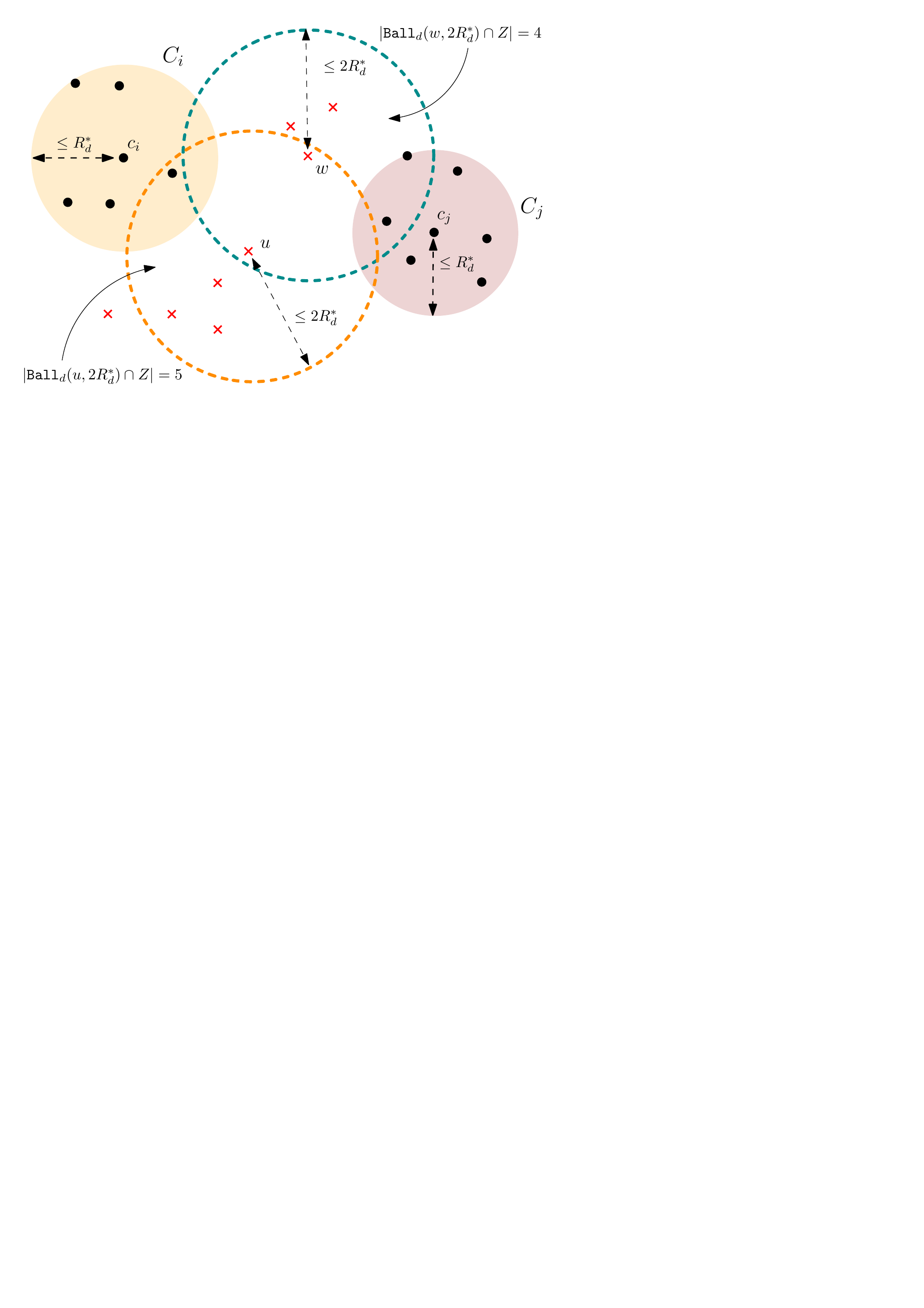}\labelfigure{kco_2}}
  \caption{Properties of a $2$-perturbation resilient \kcenterout instance}
\end{figure}

Let $\inst = (V, d, k, z)$ be a $2$-outlier perturbation resilient
{\kcenterout} instance. Let $\mathcal{C} = \setof{C_1, \ldots, C_k}$
be the optimum clustering, and $Z$ be the set of outliers in the optimal
solution of $\inst$. Further, let $S = \setof{c_1, \ldots, c_k}$ be
the optimal centers inducing the clustering $\mathcal{C}$. Let the optimal
cost be $\optradius{d}$. For each optimal cluster $C_i$, $n_i =
\sizeof{C_i}$ denotes its cardinality.  Additionally, given a
point $u \in Z$, and radius $R$, let $\ball{d}{u, R} = \{ v \in V : d(u, v)
\leq R\}$ be the set of points in a ball of radius $R$ centered at
$u$.

% {\bf Chandra: the defn of ball should not artificially restrict to
% outliers. Change to be generic and use $\cap Z$ as needed to
% restrict to outliers.}

The two main structural properties of an $\opr{2}$ \kcenterout
instance we show are as follows (See \reffigure{kco_1},
\reffigure{kco_2}):

\begin{lemma}\labellemma{kco-sep}
  Consider any non-outlier point $p \in V \setminus Z$, and let $p \in C_i$. For
  all $q \notin C_i$, $d(p, q) > \optradius{d}$.
\end{lemma}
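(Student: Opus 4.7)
I argue by contradiction: suppose $d(p,q) \le R^* := \optradius{d}$ for some non-outlier $p \in C_i$ and some $q \notin C_i$, and split on whether $q$ is an outlier.

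\textbf{Case 1 ($q \notin Z$, so $q \in C_j$ for some $j \ne i$).} The plan is to reduce to the non-outlier setting and invoke \reflemma{kc-sep}. By the remark at the end of \refsection{prelim}, the restricted \kcenter instance $\inst_R := (V \setminus Z, d, k)$ is $\pr{2}$. Before invoking \reflemma{kc-sep} I first verify that $\mathcal{C}$ is its unique optimum with optimal radius exactly $R^*$: $\mathcal{C}$ clearly attains cost $R^*$ on $V \setminus Z$, and any $k$-clustering of $V \setminus Z$ of smaller cost would, paired with outliers $Z$, give a strictly better feasible solution to $\inst$, contradicting the definition of $R^*$. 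Uniqueness in $\inst_R$ then follows because any second optimum of cost $R^*$ would, again paired with $Z$, yield a second optimum of $\inst$ and violate the $\opr{2}$ hypothesis. Since $p$ and $q$ lie in different clusters of the unique optimum $\mathcal{C}$ of $\inst_R$, \reflemma{kc-sep} applied to $\inst_R$ forces $d(p,q) > R^*$, a contradiction.

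\textbf{Case 2 ($q \in Z$).} Here I would construct a tailored $2$-perturbation and exhibit a second optimum in the perturbed instance. Define $\ell(c_i, q) = \min(d(c_i, q), R^*)$ and $\ell(u,v) = d(u,v)$ for every other pair, and let $d'$ be the shortest-path distance in the complete graph on $V$ with lengths $\ell$. The triangle inequality gives $d(c_i, q) \le d(c_i, p) + d(p, q) \le 2R^*$, so $\ell(c_i, q) \ge d(c_i, q)/2$; hence by \reflemma{d-perturb-undirected}, $d'$ is a metric $2$-perturbation of $d$, and by $\opr{2}$ resilience $(\mathcal{C}, Z)$ is the unique optimum of $\inst' = (V, d', k, z)$. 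An outlier analog of \reflemma{opt-unchanged-undirected} --- which I would prove in passing --- then yields $\optradius{d'} = R^*$: every $\ell$-path from $u$ to $v$ has length at least $\min(d(u,v), R^*)$ (by triangle inequality, since every edge has $\ell \ge \min(d, R^*)$), so $d'(u,v) \ge \min(d(u,v), R^*)$; for the tight cluster $C_t$ realizing radius $R^*$ in $d$, every $c \in C_t$ satisfies $\max_{u \in C_t} d(c,u) \ge R^*$ (since $c_t$ was already optimal for $C_t$), so $\max_{u \in C_t} d'(c,u) \ge R^*$, and combining with $d' \le d$ pins down $\optradius{d'} = R^*$. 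Now consider the alternative feasible solution obtained by swapping $p$ and $q$: outlier set $Z' = (Z \setminus \{q\}) \cup \{p\}$, the same centers $S$, and the Voronoi partition induced by $d'$ on $V \setminus Z'$. For every $u \in V \setminus Z$ with $u \ne p$ the $d'$-distance to its original center is at most $R^*$, and $d'(c_i, q) \le \ell(c_i, q) \le R^*$; hence the alternative has cost at most $R^* = \optradius{d'}$ and is therefore also optimal in $\inst'$. Since $Z' \ne Z$, this contradicts uniqueness of the optimum in $\inst'$, completing the argument.

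\textbf{Main obstacle.} The principal technical step is the outlier-setting analog of \reflemma{opt-unchanged-undirected}; once one recognizes the tight-cluster observation $\max_u d(c,u) \ge R^*$ for every $c \in C_t$, the lower bound $\optradius{d'} \ge R^*$ falls out. A secondary subtlety is verifying that $d'$ is a bona fide metric $2$-perturbation: this is precisely where the assumption $d(p,q) \le R^*$ enters, via the triangle-inequality bound $d(c_i,q) \le 2R^*$ that is needed to conclude $\ell(c_i,q) \ge d(c_i,q)/2$.
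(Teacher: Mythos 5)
Your proof follows the paper's approach closely: both decompose by whether $q$ is an outlier, both handle the non-outlier case by invoking \reflemma{kc-sep} on the restricted instance $(V \setminus Z, d, k)$, and both handle the outlier case by building the same shortest-path perturbation $d'$ that shrinks $\ell(c_i, q)$ to $\min(d(c_i, q), \optradius{d})$. The details you fill in --- that $\inst_R$ has optimal radius exactly $\optradius{d}$ (needed so that \reflemma{kc-sep} yields the right bound), and that $\optradius{d'} = \optradius{d}$ --- are ones the paper delegates to its preliminary remark and to \reflemma{opt-unchanged-undirected}, and your $d'(u,v) \ge \min(d(u,v), \optradius{d})$ argument is a clean way to get the latter.

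The one place you depart from the paper is the final contradiction in Case 2: you swap, taking $Z' = (Z \setminus \{q\}) \cup \{p\}$ with centers $S$, whereas the paper simply deletes, taking $Z' = Z \setminus \{q\}$ (which satisfies $|Z'| \le z$ and is still feasible, since the model and the LP only bound the number of outliers from above). Your swap has a corner case the paper's deletion avoids: nothing rules out $p = c_i$, and then $c_i \in Z'$ while $c_i \in S$, so $S \not\subseteq V \setminus Z'$ and the alternative is not a valid solution (centers must be non-outliers). The fix is immediate --- either use the paper's $Z' = Z \setminus \{q\}$, or note that when $p = c_i$ the hypothesis already gives $d(c_i, q) \le \optradius{d}$, so one can contradict uniqueness in the unperturbed instance directly --- but as written this step has a small gap.
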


\begin{lemma}\labellemma{kco-out-ball}
  For any outlier $p \in Z$, we have $\sizeof{\ball{d}{p, 2 \cdot
      \optradius{d}} \Inter Z} < \min \curlyof{n_1, \ldots, n_k}$.
\end{lemma}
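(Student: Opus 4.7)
The plan is to argue by contradiction. Assume there exists an outlier $p \in Z$ with $\sizeof{\ball{d}{p, 2R} \Inter Z} \geq n_j$, where $R = \optradius{d}$ and $j$ is chosen so that $n_j = \min_i n_i$, and construct a metric $2$-perturbation $d'$ of $d$ in which the instance $\inst' = (V, d', k, z)$ fails to have a unique optimum, contradicting $\opr{2}$. Let $B = \ball{d}{p, 2R} \Inter Z$. Mirroring the constructions in \refsection{kc} and \refsection{asymkc}, take the complete undirected graph $G$ on vertex set $V$ with edge lengths $\ell(p, u) = \min\curlyof{d(p, u), R}$ for every $u \in B$ and $\ell(e) = d(e)$ on every other edge, and let $d'$ denote shortest-path distance in $G$ under $\ell$. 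Because $d(p, u) \leq 2R$ for $u \in B$, we have $\ell(p, u) \geq d(p, u)/2$, so \reflemma{d-perturb-undirected} certifies that $d'$ is a metric $2$-perturbation of $d$.

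The technical heart of the argument is the distance estimate: for every $c, v \in V \setminus (B \Union \curlyof{p})$, $d'(c, v) \geq \min\curlyof{d(c, v), R}$. A brief case analysis on any simple path $P$ from $c$ to $v$ in $G$ establishes this. If $P$ avoids $p$, only unmodified edges appear and $\ell(P) \geq d(c, v)$ by the triangle inequality in $d$. Otherwise $P$ decomposes as $c \leadsto x \to p \to y \leadsto v$ with the two tails built from unmodified edges only; either (a) at least one of $(x, p), (p, y)$ is a strictly shortened modified edge whose $\ell$-value equals $R$, in which case $\ell(P) \geq R$, or (b) both $\ell$-values equal the corresponding original $d$-values, in which case $\ell(P) \geq d(c, p) + d(p, v) \geq d(c, v)$. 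Together with the trivial bound $d'(c, v) \leq d(c, v)$, this yields $d'(c_i, v) = d(c_i, v)$ for every $v \in C_i$, and more generally that the $d'$-radius of each cluster equals its $d$-radius (for any $c \in C_i$ the optimality of $c_i$ produces a point $v_c \in C_i$ with $d(c, v_c) \geq r_i$, and the estimate forces $d'(c, v_c) \geq r_i$). Hence $\cost{d'}{\mathcal{C}, S; Z} = R$, and by outlier perturbation resilience $(\mathcal{C}, Z)$ remains the unique optimum in $\inst'$, so $\optradius{d'} = R$.

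The second optimum comes from a swap. Pick $B' \subseteq B$ of size $n_j$ with $p \in B'$ (possible since $p \in B$ and $|B| \geq n_j$), and form the candidate solution with centers $S' = (S \setminus \curlyof{c_j}) \Union \curlyof{p}$ and outliers $Z' = (Z \setminus B') \Union C_j$. Then $|S'| = k$, $|Z'| = z$, and $S' \Inter Z' = \emptyset$, so $(S', Z')$ is feasible. Every $v \in C_i$ with $i \neq j$ satisfies $d'(c_i, v) = d(c_i, v) \leq R$ by the estimate, and every $v \in B'$ satisfies $d'(p, v) \leq \ell(p, v) \leq R$, so the Voronoi clustering induced by $(S', Z')$ in $d'$ has cost at most $R = \optradius{d'}$. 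But $Z' \neq Z$ (indeed $\emptyset \neq C_j \subseteq Z' \setminus Z$), producing a second optimum distinct from $(\mathcal{C}, Z)$ and contradicting the uniqueness required by outlier perturbation resilience.

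The main obstacle will be the distance estimate in the second paragraph; the rest is bookkeeping. Without the estimate, the perturbation could in principle drive $\optradius{d'}$ strictly below $R$, severing the tie between the alternative and the original optimum on which the contradiction depends.
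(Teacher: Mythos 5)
Your proof is correct and follows essentially the same route as the paper: construct a $2$-perturbation that shrinks distances from $p$ to the nearby outliers (edge length $\min\{d(p,u),\optradius{d}\}$ for $u$ in the ball), then swap the smallest cluster into the outlier set and promote $p$ together with the shortened outliers to a new cluster, yielding a second optimum and contradicting uniqueness. The only cosmetic differences are that the paper cites \reflemma{opt-unchanged-undirected} for the claim $\optradius{d'}=\optradius{d}$ rather than re-deriving the shortest-path estimate inline as you do, and that the paper moves the entire ball $\ball{d}{p,2\optradius{d}}\cap Z$ out of the outlier set rather than a subset of size exactly $n_j$; both variants give $|Z''|\le z$.
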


We observe that a much weaker version of the preceding lemma suffices
for our proof of LP integrality. The weaker version states that
$\sizeof{\ball{d}{p, \optradius{d}} \Inter Z} < \min \curlyof{n_1,
  \ldots, n_k}$. For if the statement is false, we could replace the
smallest cluster with the cluster $\ball{d}{p, \optradius{d}}$;
this gives an alternate clustering with at most $z$ outliers and the
same optimum radius contradicting the uniqueness of the optimum solution.

We now prove the two lemmas.

\subsubsection{Proof of \reflemma{kco-sep}} 
We prove \reflemma{kco-sep} by splitting it into two cases. We first
show that the the lemma holds true for all $q \in Z$. Next we show
that the lemma holds true, even when $q \in C_j$ ($j \neq i)$.

\begin{lemma}\labellemma{kco-q-outlier}
  Consider any point $p \in V \setminus Z$, and let $p \in C_i$. For
  all $q \in Z $, $d(p, q) > \optradius{d}$.
\end{lemma}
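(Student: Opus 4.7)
The plan is a proof by contradiction: from $d(p,q) \le \optradius{d}$ with $p \in V \setminus Z$, $p \in C_i$, and $q \in Z$, I will produce a distinct feasible solution of cost at most the optimum but with outlier set different from $Z$. This is fatal because $\opr{2}$ applied with $d' = d$ forces the optimum of $\inst$ itself to be unique. The natural move is the swap ``promote $q$ to a non-outlier, demote $p$ or some other non-outlier to an outlier'', but the resulting assignment cost $d(q, c_i)$ can be as large as $2 \optradius{d}$ in the original metric, so a small perturbation is generally needed unless $p$ happens to coincide with $c_i$.

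If $p = c_i$, the assumption reads $d(c_i, q) \le \optradius{d}$ and no perturbation is necessary. I would let $w$ be any non-outlier realizing $d(w, c_{j(w)}) = \optradius{d}$, keep the centers $S$, and take outliers $Z^* = (Z \setminus \{q\}) \cup \{w\}$. Every non-outlier other than $q$ stays within $\optradius{d}$ of its original center, and $q$ is within $\optradius{d}$ of $c_i$, so $(S, Z^*)$ is a second optimum of $\inst$ with $Z^* \ne Z$, contradicting uniqueness.

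If $p \ne c_i$, I would introduce a metric $2$-perturbation $d'$ that shrinks only the single edge $(q, c_i)$: set $\ell(q, c_i) = \min(d(q, c_i), \optradius{d})$, which is valid because $d(q, c_i) \le d(q, p) + d(p, c_i) \le 2 \optradius{d}$ forces $\ell(q, c_i) \ge d(q, c_i) / 2$, and take $d'$ to be the shortest-path metric on this graph per \reflemma{d-perturb-undirected}. In $\inst'$, the candidate solution with centers $S$ and outliers $Z^* = (Z \setminus \{q\}) \cup \{p\}$ costs at most $\optradius{d}$, since every non-outlier other than $q$ stays within $\optradius{d}$ of its original center and $d'(q, c_i) \le \ell(q, c_i) \le \optradius{d}$. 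By $\opr{2}$, $(\mathcal{C}, Z)$ is the unique optimum of $\inst'$, and an analog of \reflemma{opt-unchanged-undirected} for clustering with outliers gives $\optradius{d'} = \optradius{d}$, so the candidate is a second optimum of $\inst'$ with $Z^* \ne Z$ --- contradiction. The main obstacle is the equality $\optradius{d'} = \optradius{d}$: the upper bound is immediate, and for the matching lower bound I would verify that no path through the shrunk edge $(q, c_i)$ can shortcut the distance from the farthest $v^* \in C_{i^*}$ to a valid center, by applying triangle inequality in $d$ to the subpaths before and after the modified edge, mirroring the shortest-path analysis underlying \reflemma{opt-unchanged-undirected}.
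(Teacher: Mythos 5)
Your proposal is correct and matches the paper's argument: both shrink the single edge $(c_i, q)$ to $\min\{d(c_i,q),\optradius{d}\}$ (valid because $d(c_i,q)\le d(c_i,p)+d(p,q)\le 2\optradius{d}$), invoke the $\opr{2}$ property together with an outlier analog of \reflemma{opt-unchanged-undirected} to fix $\optradius{d'}=\optradius{d}$, and then exhibit an alternative solution of cost at most $\optradius{d'}$ in which $q$ is no longer an outlier. The only superficial differences are your case split on $p=c_i$ (unnecessary, since when $p=c_i$ the edge $(c_i,q)$ is already at most $\optradius{d}$ and the perturbation is vacuous, so the single unified construction covers both cases) and your swapping a non-outlier into $Z^*$ to keep $|Z^*|=|Z|$, where the paper simply takes $Z'=Z\setminus\{q\}$.
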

\begin{proof}
  Assume that the claim is not true, that is, there exists $q \in Z$
  such that $d(p, q) \leq \optradius{d}$.  Since $p \in C_i$, we have
  $d(c_i, p) \leq \optradius{d}$. Therefore by triangle inequality,
  $d(c_i, q) \leq 2 \cdot \optradius{d}$.

  We now define a metric distance function $d'$, which is
  $2$-perturbation of $d$. To this end, consider the complete
  undirected graph $G$ on vertex set $V$. The edge lengths in graph
  $G$ are given by the function $\ell$, where for any edge $(u, v)$,
\[
\ell(u, v) = 
  \begin{cases} 
    \min \curlyof{d(u, v), \optradius{d}}  \quad& u = c_i, v = q \\
    d(u, v) \quad& \text{otherwise}
  \end{cases}
\]
For any pair of points $u, v$, the distance $d'(u, v)$ is the shortest
path distance between $u$ and $v$ in graph $G$, using $\ell$. The
following observation is easy to see since $d(u,v)/2 \le \ell(u,v) \le
d(u,v)$ for every pair $(u,v)$ (follows from \reflemma{d-perturb-undirected}).

\begin{obs}\labelobs{kcenter-obs-2}
$d'$ is a metric $2$-perturbation of $d$. 
\end{obs}

Consider the instance $\inst' = (V, d', k, z)$. Since, $\inst'$ is a
$2$-perturbed instance, the unique optimal solution is given by the
clusters $\mathcal{C} = \curlyof{C_1, \ldots, C_k}$, and outliers
$Z$. Let $S' = \curlyof{c_1', \ldots, c_k'}$ be the optimal set of
centers. Let $\optradius{d'}$ denote the optimal radius of
$\inst'$. We will construct an alternate solution (clustering and
outliers) for $\inst'$ with cost at most $\optradius{d'}$. This
contradicts the uniqueness of the optimal solution, and thus fails to
satisfy the definition of perturbation resilience.

The following claim is also easy to establish (refer \reflemma{opt-unchanged-undirected}).
\begin{claim}
$\optradius{d'} = \optradius{d}$
\end{claim}

\iffalse
\begin{proof}
Since $d'$ is a contraction of distance $d$, we must have $\optradius{d'} \leq \optradius{d}$. 

Let $C_t$ be the largest radius optimal cluster in $\inst$,
i.e., $\max_{u \in C_t} d(c_t, u) = \optradius{d}$. Therefore, for every $c \in C_t$, there exists a point
$r(c) \in C_t$, such that $d(c, r(c)) \geq \optradius{d}$. Assume wlog, 
$d(c, q) +\ell(q, c_i) + d(c_i, r(c)) \leq d(c, c_i) + \ell(q, c_i) + d(q, r(c))$
Now, 
\begin{align*}
d(c, q) +\ell(q, c_i) + d(c_i, r(c)) = d(c, q) + \min \curlyof{d(c_i, q), \optradius{d}} + d(c_i, r(c))
\geq \min \curlyof{d(c, r(c)), \optradius{d}}
\geq \optradius{d}
\end{align*}
This implies, $\optradius{d'} = \cost{d'}{\mathcal{C}, S'; Z} \geq 
\min_{c \in C_t} d'(c, r(c)) \geq \optradius{d} $.
Therefore, $\optradius{d'} = \optradius{d}$.
\end{proof}
\fi

Next, we show the existence of an alternate solution of cost at most
$\optradius{d'}$. Consider the set of outliers $Z' = Z \setminus
\setof{q}$. Let $\mathcal{C'}$ be the Voronoi partition of $V
\setminus Z'$ induced by $S'$. Clearly the clustering $\mathcal{C'}$
is different from $\mathcal{C}$. Further, since $d'(c_i, q) \leq
\optradius{d} = \optradius{d'}$, we have, $\cost{d'}{\mathcal{C'}, S';
  Z'} = \max_{u \in V \setminus Z'} d'(S', u) \leq
\optradius{d'}$. This contradicts the uniqueness of the optimal
clustering and outliers of $\inst'$.
\end{proof}

\begin{lemma}\labellemma{kco-q-nonoutlier}
Consider any point $p \in V \setminus Z$, let $p \in C_i$. For all $q \in C_j $,
$d(p, q) > \optradius{d}$.
\end{lemma}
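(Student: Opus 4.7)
The plan is to reduce to the $\pr{2}$ $k$-center setting on the ``inner'' instance and then mirror the proof of \reflemma{kc-sep}. Since $\inst = (V, d, k, z)$ is $\opr{2}$, the preliminary remark gives that $\inst' = (V \setminus Z, d, k)$ is a $\pr{2}$ $k$-center instance with the same unique optimal clustering $\mathcal{C}$, the same center set $S$, and the same optimal radius $\optradius{d}$. Consequently \reftheorem{any-2}, \reflemma{kc-sep-1}, and \reflemma{voronoi-kc} are all available on $\inst'$, and I would reuse them essentially verbatim.

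Assume for contradiction that $d(p, q) \leq \optradius{d}$, and case-split on $|C_i|$ and $|C_j|$. If $|C_i| = |C_j| = 1$ (so $p = c_i$ and $q = c_j$), then $S \setminus \{c_j\}$ is a set of $k-1$ centers whose Voronoi partition of $V \setminus Z$ has cost at most $\optradius{d}$: every cluster $C_\ell$ with $\ell \neq j$ is covered by its own center $c_\ell$, and $q$ is covered by $p$ at distance at most $\optradius{d}$. Invoking \reflemma{voronoi-kc} on $\inst'$ then contradicts uniqueness of $\mathcal{C}$.

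Otherwise, without loss of generality $|C_i| \ge 2$. Applying \reflemma{kc-sep-1} on $\inst'$ with the triple $(p, u, q)$ for any $u \in C_i \setminus \{p\}$ yields $d(p, u) < d(p, q) \le \optradius{d}$, so every point of $C_i$ lies within $\optradius{d}$ of $p$. If $|C_j| = 1$ (so $q = c_j$), then $(S \setminus \{c_i, c_j\}) \cup \{p\}$ is a set of $k-1$ centers covering $V \setminus Z$ at cost at most $\optradius{d}$ ($p$ handles $C_i$ and also $q$; the remaining clusters are covered by their own centers), so \reflemma{voronoi-kc} on $\inst'$ again gives a contradiction. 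If $|C_j| \ge 2$, pick $q' \in C_j \setminus \{q\}$ and set $B = (S \setminus \{c_i, c_j\}) \cup \{q, q'\}$; a direct check shows $B$ is a $2$-approximation for $\inst'$, since clusters $C_\ell$ with $\ell \neq i, j$ are covered by $c_\ell$, $C_j$ is covered by $\{q, q'\}$ at distance at most $2\optradius{d}$, and every $u \in C_i$ satisfies $d(q, u) \le d(q, p) + d(p, u) < 2\optradius{d}$ by the previous bound. By \reftheorem{any-2} on $\inst'$, the Voronoi partition induced by $B$ on $V \setminus Z$ must equal $\mathcal{C}$; but $q$ and $q'$ are two distinct centers of $B$ that both lie in the single cluster $C_j$, so they must occupy distinct Voronoi cells, a contradiction.

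The main obstacle is really just ensuring that the reduction from the outlier instance to $\inst'$ is sound, i.e., that the $\pr{2}$ $k$-center machinery transfers cleanly to the non-outlier subset; once this is in place the outlier set $Z$ plays no further role and the case analysis is a mild variant of the argument for \reflemma{kc-sep}.
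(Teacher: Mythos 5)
Your reduction to the $\pr{2}$ $k$-center instance $(V \setminus Z, d, k)$ is exactly the paper's approach, and the remainder of your argument is correct. The only difference is that the paper then cites \reflemma{kc-sep} directly (which already gives $d(p,q) > \optradius{d}$ for points in distinct clusters of a $\pr{2}$ $k$-center instance), whereas you re-derive that lemma's content via \reflemma{kc-sep-1}, \reflemma{voronoi-kc}, and \reftheorem{any-2}; invoking \reflemma{kc-sep} directly would shorten the proof to a single line.
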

\begin{proof}
Follows from the fact, that instance $(V \setminus Z, d, k)$ is a $2$-perturbation
resilient instance for \kcenter and \reflemma{kc-sep}.
\end{proof}

\reflemma{kco-sep} follows immediately from \reflemma{kco-q-outlier} and \reflemma{kco-q-nonoutlier}.
\subsubsection{Proof of \reflemma{kco-out-ball}}
Let $C_i$ be the smallest cardinality cluster. Assume for the sake of
contradiction that the claim is false, i.e., there exists $p \in Z$,
such that $\sizeof{\ball{d}{p, 2 \cdot \optradius{d}} \Inter Z} \geq n_i$.

We construct a distance function $d'$ which is a metric $2$-perturbation of $d$. Consider the complete graph $G$ with edge lengths
$\ell$. Let $E' = \curlyof{(p, v): v \in \ball{d}{p, 2 \cdot \optradius{d}} \Inter Z}$. The edge lengths are defined as follows:
\[
\ell(u, v) = 
  \begin{cases} 
    \min \curlyof{d(u, v), \optradius{d}}  \quad& (u,v) \in E' \\
    d(u, v) \quad& \text{otherwise}
  \end{cases}
\]
For any pair of points $u, v$, the distance $d'(u, v)$ is the shortest path distance between
$u$ and $v$ in graph $G$, using $\ell$. 
Note that, for all $u, v \in V$, $\ell(u, v) \geq \frac{d(u, v)}{2}$.
We can
immediately make the following observation about $d'$.
\begin{obs}\labelobs{kcenter-obs-3}
$d'$ is a metric $2$-perturbation of $d$.
\end{obs}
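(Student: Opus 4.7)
The plan is to verify the hypotheses of \reflemma{d-perturb-undirected} for the edge-length function $\ell$ and then invoke that lemma directly to conclude that the shortest-path metric $d'$ is a metric $2$-perturbation of $d$. Concretely, I would check the two-sided bound $\frac{d(u,v)}{2} \le \ell(u,v) \le d(u,v)$ on every edge $(u,v)$, splitting into the cases $(u,v) \in E'$ and $(u,v) \notin E'$.

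For $(u,v) \notin E'$ the check is immediate because $\ell(u,v) = d(u,v)$ by definition. The interesting case is $(u,v) \in E'$, i.e.\ edges with one endpoint equal to $p$ and the other endpoint $v \in \ball{d}{p, 2\optradius{d}} \cap Z$. The upper bound $\ell(u,v) \le d(u,v)$ is trivial because $\ell(u,v) = \min\{d(u,v), \optradius{d}\}$. For the lower bound, I use the defining property of $E'$: by construction $d(p,v) \le 2\optradius{d}$, so $\optradius{d} \ge d(p,v)/2$, and hence $\min\{d(p,v), \optradius{d}\} \ge d(p,v)/2$. This is exactly the inequality the authors already announce in the sentence just before the observation.

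Having established $\frac{d(u,v)}{2} \le \ell(u,v) \le d(u,v)$ on all edges of the complete undirected graph $G$, \reflemma{d-perturb-undirected} applies verbatim and gives that the shortest-path distance $d'$ in $G$ is a metric and satisfies $\frac{d(u,v)}{2} \le d'(u,v) \le d(u,v)$ for all $u,v \in V$. This is precisely the statement that $d'$ is a metric $2$-perturbation of $d$.

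There is essentially no obstacle here; the observation is a bookkeeping step that reduces entirely to the general perturbation lemma proved in the preliminaries. The only nontrivial (and short) point is recognizing that the radius cap $2\optradius{d}$ in the definition of $E'$ is exactly what makes the truncation at $\optradius{d}$ a contraction by a factor of at most $2$. Everything else — triangle inequality for $d'$, symmetry, and the upper bound $d'(u,v) \le d(u,v)$ coming from the edge $(u,v)$ itself — is handled by the shortest-path construction and by \reflemma{d-perturb-undirected}.
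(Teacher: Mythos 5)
Your proof is correct and takes essentially the same route as the paper: verify the two-sided bound $\tfrac{d(u,v)}{2}\le\ell(u,v)\le d(u,v)$ (trivial off $E'$, and on $E'$ using $d(p,v)\le 2\optradius{d}$ for $v\in\ball{d}{p,2\optradius{d}}\cap Z$), then invoke \reflemma{d-perturb-undirected}. The paper compresses this into the preceding remark that $\ell(u,v)\ge d(u,v)/2$ for all edges; you simply spell out the case analysis.
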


Consider the instance $\inst' = (V, d', k, z)$. Since, $\inst'$ is a $2$-perturbed instance, the optimal
clustering and outliers are $\mathcal{C} = \curlyof{C_1, \ldots, C_k}$ and $Z$ respectively. Let $S' = \curlyof{c_1', \ldots, c_k'}$ be 
the optimal set of centers inducing the $\mathcal{C}$. Further, $\optradius{d'}$ denotes the cost of optimal solution. Again note that $\optradius{d'} = \optradius{d}$.

Now, consider the set of outliers $Z'' = \parof{Z \setminus \ball{d}{p, 2 \cdot \optradius{d}}} \bigcup C_i$. Let $S'' = S' \setminus \setof{c_i} \Union \setof{p}$ be a set of $k$ centers, and $\mathcal{C''}$ is a Voronoi partition of $V \setminus Z''$ induced by $S''$. 
For any point $u \in C_{\ell}$, where $\ell \neq i$, 
$d'(S'', u) \leq d'(c_\ell', u) \leq \optradius{d'}$. For any point $u \in \ball{d}{p, 2 \cdot \optradius{d}} \Inter Z$, we have,
$d'(S'', u) \leq d'(p, u) \leq \optradius{d} = \optradius{d'}$. Therefore, for any point $u \in V \setminus Z''$,
$d'(S'', u) \leq \optradius{d'}$. This implies $\cost{d'}{\mathcal{C''}, S''; Z''} \leq \optradius{d'}$. Clearly the clustering $\mathcal{C''}$ is different
from $\mathcal{C}$. Since $Z \cap C_i = \emptyset$, therefore $\sizeof{Z''} = \sizeof{Z} - \sizeof{\ball{d}{p, 2 \cdot \optradius{d}} \Inter Z} + \sizeof{n_i} \leq z$. Thus, $\mathcal{C''}, Z''$ is another solution for instance $\inst'$ having cost at most the optimal. In other words, the optimal solution of $\inst'$ is not unique, and this leads 
to contradiction.

\begin{figure}[!tbp]
  \centering
  \includegraphics[width=0.5\textwidth]{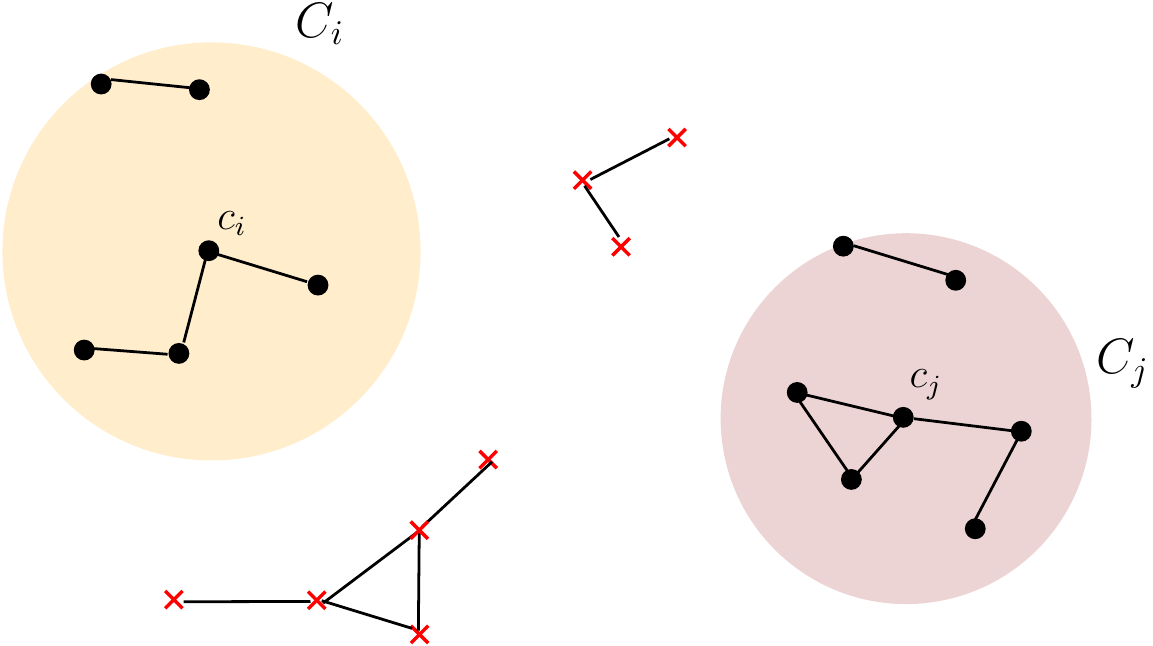}
  \caption{Graph $G_R$ corresponding to $R < \optradius{d}$ in $2$-perturbation resilient \kcenterout instance}
  \labelfigure{kco_3}
\end{figure}

\subsection{Integrality Gap and Proof of \reftheorem{kco-lp-integrality}}
In this section, we show that \refequation{kco-LP} is infeasible for
$R < \optradius{d}$. Recall in \reflemma{kco-sep}, we showed that the
optimal clusters are well-separated from each other and also from the
outliers. Therefore, in graph $G_R$, the connected components are
either subsets of optimal clusters or outliers (See
\reffigure{kco_3}). As a consequence, in a fractional solution,
non-outlier points can only be covered by points inside the cluster,
and similarly outliers can be covered by outliers only. However,
unlike \kcenter, here the tricky part is, the fractionally open
outliers can potentially cover a lot of points. We 
show that this in fact is not possible because of the sparsity of
an outlier's neighborhood. 

Suppose the claim is not true, that is for some $R < \optradius{d}$,
\refequation{kco-LP} has a feasible solution $(x^*, y^*)$. Let
$\mathcal{C} = \curlyof{C_1, \ldots, C_k}$ be the set of clusters and
$Z$ be the outliers in the unique optimal solution of $\inst$.

First, let us consider the simpler case when $y^*(Z) = 0$. Recall
\reflemma{kco-sep}, for every $p \in C_i$, ($i \in [k]$), the distance
to any $q \notin C_i$ is more than $\optradius{d}$. In other words,
for any $p \in C_i$, $\nbr{p} \subseteq C_i$, and for any $w \in Z$,
$\nbr{w} \bigcap V \setminus Z = \emptyset$. Therefore, $y^*(Z) = 0$
and the LP constraint $x_{uv} = 0, \forall v \in V, u \notin \nbr{v} $
implies (1) for any $w \in Z$, $x^*_{uw} = 0$ for all $u \in V$; (2)
for any $v \in V \setminus Z$, and $w \in Z$, $x^*_{wv} =
0$. Therefore, $(x^*, y^*)$ [restricted to $V \setminus Z$] is a
feasible fractional solution for \refequation{kc-LP} defined for the
\kcenter instance $\inst' = (V \setminus Z, d, k)$, and parameter
$R$. The optimal radius of $\inst'$ is also $\optradius{d}$. Therefore
by \reftheorem{kc-lp}, we cannot have a feasible fractional solution
for $R < \optradius{d}$, leading to a contradiction.

We focus on the case $y^*(Z) > 0$. Without loss of generality
assume that the optimum clusters are numbered such that 
$n_1 \le n_2 \le \ldots \le n_k$.
For $i \in [k]$ let $a_i = y(C_i)$
and let $b = y^*(Z)$. For a point $p$ let $\cov_p = \sum_{u} x^*_{up}$ be
the amount to which $p$ is covered. For a set of points $S$
we let $\cov(S)$ denote $\sum_{p \in S} \cov_p$.

\begin{claim}
  Total coverage of outlier points, that is, $\cov(Z) = \sum_{p \in Z} \rho_p < b n_1$.
\end{claim}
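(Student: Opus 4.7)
The plan is to leverage the two structural facts that precede the claim: \reflemma{kco-sep} (points in different optimal clusters, including outlier vs.\ non-outlier, are separated by strictly more than $\optradius{d}$) and the weaker version of \reflemma{kco-out-ball} explicitly noted in the text just after its statement, namely $|\ball{d}{p,\optradius{d}} \cap Z| < n_1$ for every outlier $p$. These two facts together say exactly that (i) outliers can only be covered by fractional weight sitting on outliers, and (ii) each outlier has very few outlier neighbours in $G_R$. The desired inequality will then fall out of a standard double-counting on the LP.

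First I would show that in $G_R$ the outliers form an ``isolated'' part of the graph: for any $u \in Z$ and $v \in V\setminus Z$, \reflemma{kco-sep} applied at $v$ yields $d(u,v) > \optradius{d} > R$, so $\nbr{u} \subseteq Z$ and $\nbr{v} \cap Z = \emptyset$. Combined with the LP constraint $x^*_{uv} = 0$ whenever $u \notin \nbr{v}$, this already forces
\[
\cov(Z) \;=\; \sum_{p \in Z} \sum_{u \in Z \cap \nbr{p}} x^*_{up},
\]
since the contribution from any $u \notin Z$ is zero.

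Next I would bound the ``reach'' of each outlier: since $R < \optradius{d}$, the weak form of \reflemma{kco-out-ball} gives $|\nbr{u} \cap Z| \le |\ball{d}{u,\optradius{d}} \cap Z| \le n_1 - 1$ for every $u \in Z$. Swapping the order of summation and using $x^*_{up} \le y^*_u$ then finishes the job:
\[
\cov(Z) \;=\; \sum_{u \in Z} \sum_{p \in Z \cap \nbr{u}} x^*_{up} \;\le\; \sum_{u \in Z} y^*_u \cdot |\nbr{u} \cap Z| \;\le\; (n_1 - 1)\, b \;<\; n_1 b,
\]
with strict inequality in the last step because $b = y^*(Z) > 0$.

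There is no serious obstacle here; the work has already been done in the two preceding lemmas. The one point I would flag carefully is that the proof only needs the $\optradius{d}$-ball version of \reflemma{kco-out-ball} rather than the stronger $2\optradius{d}$-ball version, so I would cite the parenthetical remark in the text that justifies the weaker statement from uniqueness of the optimum clustering. Everything else is routine LP accounting.
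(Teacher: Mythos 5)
Your proof is correct and takes essentially the same route as the paper: use \reflemma{kco-sep} to restrict coverage of outliers to outlier–outlier edges in $G_R$, swap the order of summation, bound $x^*_{up}\le y^*_u$, and invoke \reflemma{kco-out-ball} (or the weaker $\optradius{d}$-ball form, which suffices since $R<\optradius{d}$) to bound the number of outliers in each neighbourhood strictly below $n_1$. Your write-up is slightly more explicit than the paper's one-line chain of inequalities, but the decomposition and the key lemmas are identical.
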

\begin{proof}
  Recall that an outlier point can only be covered by an outlier point.
  Further a point $q$ can cover point $p$ only if $p$ is in the ball of
  radius $R$ around $q$. Thus we have
  $$\sum_{p \in Z} \cov_p \le \sum_{q \in Z} \sizeof{\ball{d}{q,R}} \cdot y_q < n_1 \sum_{q \in Z} y_q = bn_1$$
  where we used \reflemma{kco-out-ball} to strictly upper bound
  $\sizeof{\ball{d}{q,R}}$ by $n_1$.
\end{proof}

\begin{claim}
  Let $C_i$ be an optimum cluster such that $a_i < 1$. Then
  $\cov(C_i) \le n_i a_i$.
\end{claim}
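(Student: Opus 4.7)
The plan is to leverage Lemma \reflemma{kco-sep} to show that points in $C_i$ can only be covered by fractional centers inside $C_i$, and then use the LP capacity constraints $x_{uv} \le y_v$ together with a simple counting argument.

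First, I would fix an arbitrary $p \in C_i$ and use Lemma \reflemma{kco-sep} to argue that $\nbr{p} \subseteq C_i$: every point $q \notin C_i$ satisfies $d(p,q) > \optradius{d} > R$, so $q \notin \nbr{p}$ in the threshold graph $G_R$. Combining this with the LP constraint $x^*_{uv} = 0$ whenever $u \notin \nbr{v}$, I get
\[
\cov_p \;=\; \sum_{u \in V} x^*_{up} \;=\; \sum_{u \in \nbr{p}} x^*_{up} \;=\; \sum_{u \in C_i} x^*_{up}.
\]

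Next, I would sum $\cov_p$ over all $p \in C_i$, swap the order of summation, and apply the LP constraint $x^*_{up} \le y^*_u$ for each pair. This yields
\[
\cov(C_i) \;=\; \sum_{p \in C_i}\sum_{u \in C_i} x^*_{up} \;=\; \sum_{u \in C_i}\sum_{p \in C_i} x^*_{up} \;\le\; \sum_{u \in C_i} n_i\, y^*_u \;=\; n_i \, a_i,
\]
using $|C_i| = n_i$ and the definition $a_i = y^*(C_i) = \sum_{u \in C_i} y^*_u$. This establishes the stated bound. The hypothesis $a_i < 1$ is not needed for the inequality itself, but the point of isolating this case is that it makes the bound non-trivial (strictly less than $n_i$), which is what later lets us conclude that $C_i$ is not fully covered and hence contradicts the feasibility inequality $\sum_p \cov_p \ge n-z$.

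There is no real obstacle here — the argument is essentially a one-line double-counting once Lemma \reflemma{kco-sep} has been invoked to localize the coverage to within the cluster. The only subtlety worth stating explicitly is why $R < \optradius{d}$ implies $\nbr{p} \subseteq C_i$; this is a direct consequence of the strict separation $d(p,q) > \optradius{d}$ guaranteed by Lemma \reflemma{kco-sep}, not merely $\ge \optradius{d}$.
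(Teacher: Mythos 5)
Your proof is correct and is essentially the paper's argument: the paper bounds $\cov_p \le \sum_{u \in C_i} y^*_u = a_i$ pointwise and then multiplies by $n_i$, whereas you swap the summation order first, but both rest on exactly the same two ingredients (Lemma \reflemma{kco-sep} to localize coverage to within $C_i$, and the LP constraint $x_{uv} \le y_u$). Your observation that the hypothesis $a_i < 1$ is not actually used in deriving the inequality is also accurate.
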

\begin{proof}
  Only points in $C_i$ can cover any given point $p \in C_i$. Therefore
  $\cov_p \le a_i$ for each $p \in C_i$, and hence $\cov(C_i) \le n_ia_i$.
\end{proof}

Let $A = \{i \in [k] \mid a_i < 1\}$ be the indices of the clusters
whose total $y$ value is strictly less than $1$. Since $y(V) = k$,
we have $b \le \sum_{i \in A} (1-a_i)$. 
Using the preceding two claims we have the following:
\begin{eqnarray*}
  \cov(V)  & = & \cov(Z) + \sum_{i \in A} \cov(C_i)  + \sum_{j \not \in A} \cov(C_i) \\
   & \le & \cov(Z) + \sum_{i \in A} \cov(C_i) +  \sum_{j \not \in A} n_j \\
   & \le & \cov(Z) + \sum_{i \in A} n_i a_i +  \sum_{j \not \in A} n_j \\
   & \le & \cov(Z) - \sum_{i \in A} n_i (1-a_i) +  \sum_{j =1}^k n_j \\
   & \le & \cov(Z) - \sum_{i \in A} n_i (1-a_i) + (n-z) \\
   & < & b n_1 - n_1 \sum_{i \in A} (1-a_i) + (n-z) \\
   & < & n-z.
\end{eqnarray*}

This is a contradiction to the feasibility of the LP solution.

%%%%%%%%%%%%%%%%%%%%%%
\iffalse \bibliography{perturbation_resilient} \fi

\section{Algorithm for {\kmedianout} under Perturbation Resilience}\labelsection{kmo}
In this section, we present a dynamic programming based algorithm for
{\kmedianout}, which gives an optimal solution when the instance is
$2$-perturbation resilient. First, we prove some structural properties
of a $2$-OPR {\kmedianout} instance. They serve as the key ingredient
in showing that our algorithm will return exact solution for $2$-OPR
instances.

This section is essentially a straight forward extension of the ideas
in \cite{AngelidakisMM17} once the model is set up. In a sense
the model justifies the natural extension of the algorithm 
from \cite{AngelidakisMM17} to the outlier setting. 

\subsection{Properties of $2$-perturbation resilient {\kmedianout} instance}
Angelidakis et al. \cite{AngelidakisMM17} proved that in the optimal
clustering of a $2$-perturbation resilient {\kmedian} instance, every
point is closer to its assigned center than to any point in a
different cluster. In the optimal solution of {\kmedianout}, points
are not only assigned to clusters, some points are identified as
outliers as well. Here, we extend the result of \cite{AngelidakisMM17}
to show that the optimal solution of a $2$-OPR {\kmedianout} instance
satisfies the property: any non-outlier point is closer to its
assigned center than to any point outside the cluster.

\begin{lemma}\labellemma{kmo-sep}
  Consider a $2$-perturbation resilient \kmedianout instance $\inst =
  (V, d, k, z)$.  Let $\mathcal{C} = \curlyof{ C_1, \ldots C_k }$, and
  $Z$ be the unique optimal clustering and outliers resp.  Consider
  any point $p \in V \setminus Z$, and let $p \in C_i$. For all $q
  \notin C_i$, we have $d(c_i, p) < d(p, q)$.
\end{lemma}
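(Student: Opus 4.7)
The plan is to argue by contradiction. Suppose $d(p,q) \le d(c_i,p)$ for some $p \in V \setminus Z$ with $p \in C_i$ and some $q \notin C_i$. I would split into the two cases $q \in V \setminus Z$ (so $q$ lies in a different optimum cluster $C_j$) and $q \in Z$ (so $q$ is an outlier). The non-outlier case reduces cleanly to the \kmedian analysis of Angelidakis \emph{et al.}~\cite{AngelidakisMM17}, while the outlier case requires a fresh perturbation argument in the style used in the proof of \reflemma{kco-q-outlier}.

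For the non-outlier case, I would use the observation recorded in the preliminaries that the sub-instance $\inst^\star = (V \setminus Z, d, k)$ is $\pr{2}$ for \kmedian, with the same optimum clustering $\mathcal{C}$ and centers $S$. I would then invoke the structural property from \cite{AngelidakisMM17} for $\pr{2}$ \kmedian instances that is recalled just before this lemma: in the optimum clustering, every point is strictly closer to its assigned center than to any point in a different cluster. Applied to $p \in C_i$ and $q \in C_j$, this gives $d(c_i,p) < d(p,q)$, contradicting the standing assumption. No new perturbation is needed in this case.

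For the outlier case I would mimic the template of \reflemma{kco-q-outlier}. Define $\ell(u,v) = d(u,v)$ for every edge except $\ell(c_i,q) = \min\{d(c_i,q),\, d(c_i,p)\}$, and let $d'$ be the shortest-path distance in the resulting complete undirected graph. The triangle inequality together with $d(p,q) \le d(c_i,p)$ gives $d(c_i,q) \le 2d(c_i,p)$, hence $\ell(c_i,q) \ge d(c_i,q)/2$, so \reflemma{d-perturb-undirected} certifies that $d'$ is a valid metric $2$-perturbation. In $\inst' = (V,d',k,z)$ the unique optimum remains $(\mathcal{C},Z)$ by $\opr{2}$. The alternative solution I would exhibit performs the $(p,q)$ swap: $Z' = (Z \setminus \{q\}) \cup \{p\}$, $C_i' = (C_i \setminus \{p\}) \cup \{q\}$, with all other clusters unchanged. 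Taking $S$ as the centers, the cost of this alternative differs from $\cost{d'}{\mathcal{C},S;Z}$ by $d'(c_i,q) - d'(c_i,p)$. A short check shows $d'(c_i,p) = d(c_i,p)$ (the direct edge beats the two-hop path through $q$, since $d(c_i,q) + d(q,p) \ge d(c_i,p)$ by triangle inequality), while $d'(c_i,q) \le \ell(c_i,q) \le d(c_i,p)$. Hence the alternative's cost is at most $\cost{d'}{\mathcal{C},S;Z}$.

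The main obstacle is closing the gap between this cost bound and the true $d'$-optimum: under $\opr{2}$ the $d'$-optimal centers of $\mathcal{C}$ need not coincide with $S$, and the $d'$-median of $C_i$ could lie strictly inside $C_i$ rather than at $c_i$. The clean inequality $d'(c,q) \le d'(c,p)$ used above for $c = c_i$ may fail at a general $c \in C_i$. I expect to resolve this either by (i) enlarging the perturbation so that $\ell(c,q)$ is shrunk for every $c \in C_i$ (uniformly, so that $d/2 \le \ell \le d$ still holds and \reflemma{d-perturb-undirected} applies), ensuring $d'(c,q) \le d'(c,p)$ for every candidate centre, or (ii) performing a direct term-by-term comparison of $\sum_{u \in C_i'} d'(c^\star,u)$ against $\sum_{u \in C_i} d'(c^\star,u)$, where $c^\star$ is the $d'$-median of $C_i$. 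This is essentially the same technical subtlety that AMM overcome in the outlier-free \kmedian setting, and it inherits their treatment once the $(p,q)$ swap replaces a cross-cluster reassignment. Once this is handled, the exhibited $(\mathcal{C}',Z')$ contradicts uniqueness of $(\mathcal{C},Z)$ in $\inst'$ and closes the proof.
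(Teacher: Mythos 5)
Your decomposition into the two cases $q \in Z$ and $q \in C_j$ matches the paper's (\reflemma{kmo-p-outlier} and \refcorollary{kmo-p-nonout}). For the non-outlier case your route is a valid shortcut that differs from the paper's: you invoke the preliminary observation that $(V\setminus Z,d,k)$ inherits $\pr{2}$ from the $\opr{2}$ outlier instance and then cite the AMM structural property; the paper instead re-derives the AMM separation from scratch via \reflemma{kmo-p-center}. Your version buys brevity; the paper's buys self-containedness (and the stronger intermediate bound $2d(p,c_i) < d(p,c_j)$). Either is fine.

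The outlier case, however, has a genuine gap that you have correctly located but not closed. After the perturbation, the $d'$-optimal centers $S'$ of $\inst'$ need not equal $S$, and your swap $(p,q)$ only yields $\cost{d'}{\mathcal{C}',S;Z'} \le \cost{d'}{\mathcal{C},S;Z}$; to contradict uniqueness you need to beat $\cost{d'}{\mathcal{C},S';Z}$. Your proposed fix (i) — shrink $\ell(c,q)$ for every $c \in C_i$ so that $d'(c,q) \le d'(c,p)$ for every candidate center — cannot be made to respect $\ell \ge d/2$: the most natural choice $\ell(c,q) = \min\{d(c,q),\,d(c,p)\}$ requires $d(c_i,p) \le d(c,p)$, which fails for points $c \in C_i$ closer to $p$ than $c_i$ is (and at $c=p$ you would need $d'(p,q)=0$, impossible). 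Your fix (ii) is not developed. The paper's resolution is different and the key missing idea: keep the single shrunk edge $(c_i,q)$, but in the case $c_i' \ne c_i$, do not swap $p$ with $q$. Instead, argue that since $c_i'$ is a strictly better $d'$-median of $C_i$ than $c_i$, there must exist some $r \in C_i$ with $d'(c_i',r) < d(c_i',r)$; because only one edge was shortened, any shortest $c_i' \leadsto r$ path realizing the strict drop must traverse $(c_i,q)$, which forces $d'(c_i',r) \ge d'(c_i',q)$. Swapping $r$ (not $p$) with $q$, while keeping the $d'$-optimal centers $S'$, then gives an alternative solution of cost at most $\cost{d'}{\mathcal{C},S';Z}$, contradicting uniqueness. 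Without this "choose the right point $r$" step, the argument does not go through.

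A minor omission: before the perturbation you should dispose of $p = c_i$ (trivial since $d(c_i,p)=0 < d(p,q)$), which also guarantees $S \subseteq V\setminus Z'$ so that $S$ is an admissible center set after the swap.
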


To prove \reflemma{kmo-sep}, we split it into two cases: we show that
it holds true for (1) all outlier points $q$; (2) all non-outlier
points $q$ belonging to a different optimal cluster.

\begin{lemma}\labellemma{kmo-p-outlier}
  Consider a $2$-perturbation resilient \kmedianout instance $\inst =
  (V, d, k, z)$.  Let $\mathcal{C} = \curlyof{ C_1, \ldots C_k }$, and
  $Z$ be the unique optimal clustering and outliers resp.  Consider
  any point $p \in V \setminus Z$, and let $p \in C_i$. Then, for any
  outlier $q \in Z$, we have $d(c_i, p) < d(p, q)$.
\end{lemma}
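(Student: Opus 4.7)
The argument is by contradiction, paralleling the proof of \reflemma{kco-q-outlier} for \kcenterout. Assume $d(c_i, p) \ge d(p, q)$. Then by the triangle inequality $d(c_i, q) \le d(c_i, p) + d(p, q) \le 2\, d(c_i, p)$, which means we may legitimately contract the $(c_i, q)$ edge to length $d(c_i, p)$ while remaining in the $2$-perturbation regime. Define $\ell(c_i, q) = \ell(q, c_i) = \min(d(c_i, q), d(c_i, p))$ and $\ell(u, v) = d(u, v)$ on every other edge of the complete undirected graph over $V$, and let $d'$ be the shortest-path metric induced by $\ell$. By \reflemma{d-perturb-undirected} $d'$ is a valid metric $2$-perturbation of $d$, so the outlier perturbation resilience hypothesis implies that $(\mathcal{C}, Z)$ is still the unique optimum of $\inst' = (V, d', k, z)$.

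The plan is to exhibit a second feasible solution in $\inst'$ whose $k$-median cost does not exceed the optimum, contradicting uniqueness. The natural candidate is the ``swap'' $(\mathcal{C}', Z')$ where $C_i' = (C_i \setminus \{p\}) \cup \{q\}$, $C_j' = C_j$ for $j \ne i$, and $Z' = (Z \setminus \{q\}) \cup \{p\}$; that is, $q$ joins cluster $C_i$ and $p$ becomes an outlier. Two elementary estimates on $d'$ drive the cost comparison: (i) $d'(c_i, q) \le \ell(c_i, q) \le d(c_i, p)$; and (ii) $d'(c_i, p) = d(c_i, p)$, because any alternate path from $c_i$ to $p$ that uses the shortcut edge has length at least $\ell(c_i, q) + d(q, p) = d(c_i, p) + d(p, q) \ge d(c_i, p)$. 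Taking $S = (c_1, \ldots, c_k)$ as the centers for the alternative (note $c_i \ne p$, so $c_i \in C_i'$), the clusters $C_j$ with $j \ne i$ contribute identically under both clusterings, while the contribution of $C_i$ versus $C_i'$ differs by exactly $d'(c_i, q) - d'(c_i, p) \le 0$. Hence $\cost{d'}{\mathcal{C}', S; Z'} \le \cost{d'}{\mathcal{C}, S; Z}$.

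The main technical obstacle is that the 1-medians of the clusters under $d'$ could in principle differ from the original centers $S$, so the inequality above compares two evaluations at the \emph{same} centers and does not by itself certify $\cost{d'}{\mathcal{C}', \cdot; Z'} \le \opt(d')$. To bridge this, I would argue that the single-edge modification at $(c_i, q)$ does not move the $1$-median of any cluster. For $j \ne i$ this is immediate: any path between two points of $C_j$ that tries to exploit the shortcut has length $d(u, c_i) + \ell(c_i, q) + d(q, v) \ge d(u, c_i) + d(c_i, v) \ge d(u, v)$, so the restriction of $d'$ to $C_j \times C_j$ coincides with $d$. For $C_i$, a case analysis of the candidate shortcut paths shows that any ``gain'' available to some $c \in C_i \setminus \{c_i\}$ by routing through $(c_i, q)$ is also available to $c_i$ (via the same $q \to u$ portion), so the original inequality $\sum_{u \in C_i} d(c_i, u) \le \sum_{u \in C_i} d(c, u)$ propagates to $d'$. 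Once this invariance of the 1-medians is established, $\opt(d') = \cost{d'}{\mathcal{C}, S; Z}$, so the swap solution achieves cost at most $\opt(d')$ with a strictly different clustering/outlier pair, contradicting $\opr{2}$ and forcing $d(c_i, p) < d(p, q)$.
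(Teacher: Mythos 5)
Your overall strategy (shortcut $(c_i,q)$ to length $d(c_i,p)$, invoke perturbation resilience, swap $q$ into $C_i$ and evict a point) is the same as the paper's, and your Case-1 reasoning matches the paper's Case 1 ($c_i' = c_i$). But your proposal has a genuine gap in how it handles the possibility that the $1$-median of $C_i$ moves under $d'$, and the step you offer in its place is incomplete.

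Specifically, you propose to prove that $c_i$ remains a $1$-median of $C_i$ under $d'$ and then swap out $p$. Your sketch of this invariance only considers the shortcut route $c \to c_i \to q \to u$, for which the gain available to $c$ is indeed dominated by the gain available to $c_i$ on the same $u$. But there is a second shortcut route $c \to q \to c_i \to u$ of length $d(c,q) + d(c_i,p) + d(c_i,u)$, which can be strictly shorter than $d(c,u)$ whenever $d(c,c_i) > d(c,q) + d(c_i,p)$ (which is permitted, since $d(c_i,q) \le 2\,d(c_i,p)$ only forces $d(c,c_i) \le d(c,q) + 2\,d(c_i,p)$). Under this route $c_i$ gains nothing on $u$ (it simply serves $u$ directly), so the ``same gain is available to $c_i$'' argument does not apply, and the propagated inequality $\sum_{u \in C_i} d'(c_i,u) \le \sum_{u \in C_i} d'(c,u)$ no longer follows from the original median property plus a per-point domination of savings. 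Without this, you cannot conclude $\opt(d') = \cost{d'}{\mathcal{C},S;Z}$, which is what your final contradiction needs.

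The paper sidesteps this entirely: it does \emph{not} try to show the median of $C_i$ is unchanged. Instead, letting $c_i'$ be whatever center is optimal for $C_i$ in $\inst'$, it splits into the case $c_i' = c_i$ (your Case 1) and $c_i' \ne c_i$. In the latter case, since $\sum_{u\in C_i} d'(c_i',u) < \sum_{u\in C_i}d'(c_i,u)$ while $\sum_{u\in C_i} d(c_i,u) \le \sum_{u\in C_i}d(c_i',u)$ and $d' \le d$ pointwise, there must exist some $r \in C_i$ with $d'(c_i',r) < d(c_i',r)$, i.e., the shortest $c_i' \leadsto r$ path in $G$ routes through $(c_i,q)$. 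A short path-decomposition argument then gives $d'(c_i',r) \ge d'(c_i',q)$, so swapping $r$ (rather than $p$) out for $q$ yields $Z' = (Z\setminus\{q\})\cup\{r\}$ with $\cost{d'}{\cdot,S';Z'} \le \cost{d'}{\mathcal{C},S';Z}$, contradicting uniqueness. To repair your proof you should either adopt this swap-with-$r$ argument when the median moves, or else actually supply a complete proof of median invariance that accounts for both shortcut orientations (and I do not see that the latter is true in general).
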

\begin{proof}
  Let $S = \curlyof{c_1, \ldots, c_k}$ be an optimal set of centers,
  inducing $\mathcal{C}$.  Without loss of generality we assume, $p
  \neq c_i$, since $c_i, q$ being distinct points $d(c_i, q) > 0 =
  d(c_i, c_i)$.

  Assume for the sake of contradiction, the claim is false, that is,
  for some $q \in Z$, $d(p, c_i) \geq d(p, q)$.  To prove the
  contradiction, we construct a distance function $d'$, which is a
  metric $2$-perturbation of $d$. And show that in the instance thus
  constructed, the optimal solution is not unique --- that is there
  exists an optimal clustering and outliers different from
  $\mathcal{C}; Z$. This contradicts the definition of perturbation
  resilience.

  We define $d'$ as follows: consider the complete graph $G$ on
  vertices $V$. The edge lengths in graph $G$ are given by the
  function $\ell$, where for any edge $(u, v)$,
\[
\ell(u, v) = 
  \begin{cases} 
    d(c_i, p)  \quad& (u,v) = (c_i, q) \\
    d(u, v) \quad& \text{otherwise}
  \end{cases}
\]
For any pair of points $u, v$, the distance $d'(u, v)$ is the shortest
path distance between $u$ and $v$ in graph $G$, using $\ell$. We can
make some simple observations about $d'$:
\begin{obs}\labelobs{kmed-obs-1}
$d'$ has the following properties:
\begin{enumerate}[label=\roman*)]
\item for any $u, v \in V$, 
\begin{align*}
d'(u, v)  
&=\min \curlyof{\ell(u, v), \ell(u, q) + \ell(q, c_i) + \ell(c_i, v), \ell(u, c_i) + \ell(c_i, q) + \ell(q, v)} \\
&= \min \curlyof{ d(u, v), d(u, q) + d(c_i, p) + d(c_i, v) , d(u, c_i) + d(c_i, p) + d(q, v)}
\end{align*}
\item $d'(c_i, p) = d'(c_i, q) = d(c_i, p)$. 
\end{enumerate}
\end{obs} 
\begin{obs}\labelobs{kmed-obs-2}
$d'$ is a metric $2$-perturbation of $d$. 
\end{obs}
\begin{proof}
  By triangle inequality, $d(c_i, q) \leq d(c_i, p) + d(p, q) \leq 2
  \cdot d(c_i, p)$ --- the last inequality follows from our
  assumption.  Therefore $\frac{d(c_i, q)}{2} \leq d(c_i, p) =
  \ell(c_i, q)$ Further, note that $d(c_i, p) < d(c_i, q)$. Indeed, as
  otherwise we can swap $p$ and $q$ in the optimal solution,
  i.e. identify $p$ as an outlier and assign $q$ to the nearest center
  in $S$. Therefore, $\frac{d(c_i, q)}{2} \leq \ell(c_i, q) < d(c_i,
  q)$. For any other edge $(u, v)$, $\ell(u, v) = d(u, v)$. As we
  claimed in \reflemma{d-perturb}, $d'$ defined as the shortest path
  metric on an undirected graph $G$ with edge lengths $\ell$
  satisfying the property $\frac{d(u, v)}{2} \leq \ell(u, v) \leq d(u,
  v)$, is a metric $2$-perturbation of $d$.
\end{proof}

Consider the instance $\inst' = (V, d', k, z)$. Since, $\inst'$ is a $2$-perturbation of $\inst$ instance, 
the unique optimal
solution is given by the clusters 
$\mathcal{C} = \curlyof{C_1, \ldots, C_k}$, and outliers $Z$. Let $S' = \curlyof{c_1', \ldots, c_k'}$ be 
an optimal set of centers. We show that we can construct an alternate
 solution of cost at most the optimal solution cost 
 by swapping $q$ with a non-outlier point. To this end we consider two case:
\begin{description}[style=unboxed,leftmargin=0cm]
\item [Case 1: $c_i' = c_i$.] Consider a solution for $\inst'$, with set of outliers
$Z' = Z \setminus \setof{q} \bigcup \setof{p} $, and centers $S'$. Let 
$\mathcal{C'}$ be a Voronoi partition of $V \setminus Z'$ induced by $S'$.
The cost of the clustering $C'$ is,
\begin{align*}
\cost{d'}{\mathcal{C'}, S'; Z'} 
&= \sum_{u \in V \setminus Z'} d'(S', u) 
= d'(S', q) + \sum_{u \in C_i \setminus \setof{p}} d'(c_i', u) + \sum_{\substack{s = 1 \\s \neq i}}^k \sum_{u \in C_s} d'(c_s', u) \\
&\leq d'(c_i', q) - d'(c_i', p) + \sum_{s = 1}^k \sum_{u \in C_s} d'(c_s', u)  \\
&= d'(c_i, q) - d'(c_i, p) + \cost{d'}{\mathcal{C'}, S'; Z}  \quad (\because c_i' = c_i) \\
&= \cost{d'}{\mathcal{C}, S'; Z} \quad (\because d'(c_i, p) = d'(c_i, q) \text{ by \refobs{kmed-obs-1}} )
\end{align*}
\item [Case 2: $c_i' \neq c_i$.] We can assume without loss of
  generality, $S' \setminus \setof{c_i'} \bigcup \setof{c_i}$ is not
  an optimal set of centers for $\inst'$, otherwise the the argument
  is same as Case 1.  In particular, this implies, $\sum_{u \in C_i}
  d'(c_i, u) > \sum_{u \in C_i} d'(c_i', u)$.  Recall, for any two
  points $u, v \in V$, $d'(u, v) \leq d(u, v)$. We claim there must be
  a point $r \in C_i$, such that $d'(c_i', r) < d(c_i', r)$ (that is
  the distance between $c_i'$ and $r$ becomes strictly smaller after
  perturbation).  Indeed this is true, as otherwise,
\begin{align*}
\sum_{u \in C_i} d'(c_i', u) = \sum_{u \in C_i} d(c_i', u) \geq \sum_{u \in C_i} d(c_i, u) \geq \sum_{u \in C_i} d'(c_i, u)
\end{align*}
 where the first inequality uses the fact that $c_i$ is a center in the optimal solution of $\inst$.
 Now, $d'(c_i', r) < d(c_i', r)$ implies couple of things: (1) $c_i' \neq r$, 
 as in that case $d'(c_i', r) = d(c_i', r) = 0$; 
 (2) $d'(c_i', r) = \min \{ \ell(r, q) + \ell(q, c_i) + \ell(c_i, c_i'), \ell(r, c_i) + \ell(c_i, q) + \ell(q, c_i')\}$. 
Also, $d'(q, c_i') = \min \{ \ell(q, c_i'),  \ell(q, c_i) + \ell(c_i, c_i')\}$. Putting it together, 
we get $d'(c_i', r) \geq d'(c_i', q)$.

Consider a solution for $\inst'$, with set of outliers
$Z' = Z \setminus \setof{q} \bigcup \setof{r} $, and centers $S'$. Let 
$\mathcal{C'}$ be a Voronoi partition of $V \setminus Z'$ induced by $S'$.
The cost of the solution is,
\begin{align*}
\cost{d'}{\mathcal{C'}, S'; Z'} 
&= \sum_{u \in V \setminus Z'} d'(S', u) 
= d'(S', q) + \sum_{u \in C_i \setminus \setof{p}} d'(c_i', u) + \sum_{\substack{s = 1 \\s \neq i}}^k \sum_{u \in C_s} d'(c_s', u) \\
&\leq d'(c_i', q) - d'(c_i', r) + \sum_{s = 1}^k \sum_{u \in C_s} d'(c_s', u) \quad (\because c_i' = c_i) \\
&\leq \cost{d'}{\mathcal{C}, S'; Z} \quad (\because d'(c_i', r) \geq d'(c_i', q)) 
\end{align*}
\end{description}
In both cases, we constructed a solution for $\inst'$ which is
different from the optimal solution $\mathcal{C}; Z$, and has cost
less than or equal to the optimal cost. This contradicts the
uniqueness of the optimal solution.
\end{proof}

Next we show that \reflemma{kmo-sep} holds true for all non-outliers
points $q$ belonging to an optimal cluster different from $C_i$. The
proof is same as the one given in \cite{AngelidakisMM17}, we briefly
sketch it here for completeness.

\begin{lemma}\labellemma{kmo-p-center}
  Consider a $2$-perturbation resilient \kmedianout instance $\inst =
  (V, d, k, z)$.  Let $\mathcal{C} = \curlyof{ C_1, \ldots C_k }$, and
  $Z$ be the unique optimal clustering and outliers resp. Let $S =
  \setof{c_1, \ldots, c_k}$ be optimal centers inducing $\mathcal{C}$.
  Let $p \in V \setminus Z$ be an arbitrary point, and $c_i$ be the
  center it is assigned to. For any other center $c_j$ ($c_j \neq
  c_i)$, it follows $2 \cdot d(p, c_i) < d(p, c_j)$.
\end{lemma}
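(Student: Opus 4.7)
The plan is to adapt the argument of \cite{AngelidakisMM17} for $2$-perturbation resilient \kmedian to this outlier setting. The claim concerns only a single non-outlier point $p$ and two optimal centers $c_i, c_j$, so the outlier structure does not materially enter the argument. Assume for contradiction that $d(p, c_j) \leq 2 d(p, c_i)$ for some $c_j \neq c_i$ (the claim is trivial if $p = c_i$). I will construct a metric $2$-perturbation $d'$ of $d$ under which reassigning $p$ from $C_i$ to $C_j$ produces an alternate solution of cost at most the optimum of $\inst' = (V, d', k, z)$, contradicting the uniqueness required by $\opr{2}$-resilience.

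For the perturbation, consider the complete undirected graph $G$ on $V$ and set edge lengths $\ell(c_j, u) = d(c_j, u)/2$ for every $u \in V \setminus \setof{c_j}$ and $\ell(u, v) = d(u, v)$ for every other pair. Define $d'$ as the shortest-path metric in $G$. Since $d(u, v)/2 \leq \ell(u, v) \leq d(u, v)$ on every edge, \reflemma{d-perturb-undirected} gives that $d'$ is a metric $2$-perturbation of $d$. The two observations to extract are that $d'(u, v) \leq d(u, v)$ for all $u, v$ (so the $d'$-cost of any solution is at most its $d$-cost), and that $d'(c_j, p) \leq \ell(c_j, p) = d(c_j, p)/2 \leq d(c_i, p)$. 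A short triangle-inequality check then shows $d'(c_j, p) \leq d'(c_i, p)$ whether the shortest $c_i$--$p$ path in $G$ is the direct edge (length $d(c_i,p)$) or routes through $c_j$ (length $d(c_i,c_j)/2 + d(c_j,p)/2$).

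By $\opr{2}$-resilience, the unique optimum of $\inst'$ is still the clustering $\mathcal{C}$ with outliers $Z$; let $S' = \curlyof{c_1', \ldots, c_k'}$ denote the corresponding optimal centers. Consider the clustering $\mathcal{C''}$ obtained from $\mathcal{C}$ by moving $p$ from $C_i$ to $C_j$, paired with the same centers $S'$ and outliers $Z$. Its $d'$-cost differs from the optimum by exactly $d'(c_j', p) - d'(c_i', p)$, so showing this difference is nonpositive yields an alternate optimum distinct from $(\mathcal{C}, Z)$, contradicting uniqueness.

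The main obstacle is that the optimum centers $c_i', c_j'$ in $\inst'$ may differ from $c_i, c_j$, so the inequality $d'(c_j, p) \leq d'(c_i, p)$ does not directly bound $d'(c_j', p) - d'(c_i', p)$. I plan to resolve this with a case analysis paralleling Case 1 and Case 2 of the proof of \reflemma{kmo-p-outlier}. In the simple case $c_i' = c_i$ (and symmetrically for $c_j$), the inequality above immediately gives the desired bound. In the general case $c_i' \neq c_i$, I will exploit the optimality of $S'$ in $\inst'$ together with the fact that $d'$ coincides with $d$ on every edge not incident to $c_j$ to locate a point $r \in C_i$ whose distance to $c_i'$ was strictly decreased by the perturbation (so that its shortest path necessarily routes through $c_j$); swapping $p$ with $r$ then absorbs the cost change and yields the alternate optimum that contradicts perturbation resilience.
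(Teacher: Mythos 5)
The perturbation you chose is genuinely different from the paper's, and the difference matters.  The paper's proof modifies the length of a \emph{single} edge, setting $\ell(c_j,p) = d(c_i,p)$ and leaving every other edge untouched.  This has two clean consequences: (i) the only shortest paths that can change are those routing through the specific edge $(c_j,p)$, so the key quantities are pinned exactly, $d'(c_j,p) = d'(c_i,p) = d(c_i,p)$; and (ii) $d'$ restricted to $C_i$ and to $C_j$ coincides with $d$, which is what lets one argue that $c_i$ and $c_j$ remain optimal centers of their clusters in $\inst'$.  With $c_i,c_j$ known to still be centers, moving $p$ from $C_i$ to $C_j$ keeps the cost unchanged, and the contradiction is immediate --- the ``which centers does $\inst'$ use?'' question never arises.

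Your perturbation instead halves \emph{every} edge incident to $c_j$.  It is a valid metric $2$-perturbation, but it destroys the structure that the paper relies on.  After your perturbation, $d'(u,v) = \min\{d(u,v),\, d(u,c_j)/2 + d(c_j,v)/2\}$ for all $u,v \neq c_j$, so distances between pairs of points inside $C_i$ (and inside $C_j$, and in every other cluster) can shrink whenever their shortest path detours through $c_j$.  You therefore cannot conclude that $c_i$ or $c_j$ remain centers in $\inst'$; indeed $c_j$ becomes globally very attractive and which points are the $1$-medians of each $C_\ell$ under $d'$ is now unclear.  Your Case~1 ($c_i' = c_i$ and $c_j' = c_j$) is fine, but the general case is left as a sketch that does not go through.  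The analogy you invoke with Case~2 of the \reflemma{kmo-p-outlier} argument breaks at two points.  First, that argument leaned on the fact that only one edge $(c_i,q)$ was modified, so any strictly shortened pair must route through both its endpoints, pinning a usable inequality $d'(c_i',r) \ge d'(c_i',q)$; here, with all $c_j$-incident edges shortened, finding $r$ with $d'(c_i',r) < d(c_i',r)$ only tells you the shortest path visits $c_j$, which gives no comparable leverage on $d'(c_j',p)$ versus $d'(c_i',p)$.  Second, ``swapping $p$ with $r$'' has a clear meaning in the outlier lemma (swap outlier status), but here you are reassigning $p$ between clusters, and it is not clear what role $r$ plays or why the move is cost-neutral.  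You also never address the possibility $c_j' \neq c_j$, which your Case~1 assumes away.  To repair the argument you would essentially need to first establish that $c_i$ and $c_j$ are still optimal centers in $\inst'$ --- which is precisely what the paper's far gentler one-edge perturbation is engineered to make true.
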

\begin{sketch}
Suppose the claim is not true, that is, for some $c_j \neq c_i$, $2 \cdot d(p, c_i) \geq d(p, c_j)$.
Similar to \reflemma{kmo-p-outlier}, we construct a distance 
function $d'$ which is a metric $2$-perturbation of $d$.
To this end, consider the complete graph $G$ defined on the vertex set $V$, 
with edge lengths $\ell$, where (1) $\ell(c_j, p) = d(c_i, p)$; (2) for every other edge 
$(u, v)$, $\ell(u, v) = d(u, v)$. We define $d'$, as
the shortest path distance (using $\ell$) between vertices in graph $G$. 
\begin{obs} \labelobs{kmed-obs-3}
$d'$ has the following properties:
\begin{enumerate}[label=\roman*)]
\item for any $u, v \in V$, such that $(u, v) \neq (c_j, p), (c_i, p)$, 
\begin{align*}
d'(u, v)  
&=\min \{ \ell(u, v), \ell(u, c_j) + \ell(c_j, p) + \ell(p, v), \ell(u, p) + \ell(p, c_j) + \ell(c_j, v)\} \nonumber \\
&= \min \{ d(u, v), d(u, c_j) + d(c_i, p) + d(p, v) , d(u, p) + d(c_i, p) + d(c_j, v)\}
\end{align*}
\item $d'(c_j, p) = d'(c_i, p) = d(c_i, p)$
\item $d'$ is a metric $2$-perturbation of $d$
\end{enumerate}
\end{obs}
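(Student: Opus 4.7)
The plan is to exploit the fact that $\ell$ differs from $d$ on only a single edge, namely $(c_j, p)$, whose length has been contracted from $d(c_j, p)$ to $d(c_i, p)$. Every other edge in $G$ retains its metric length $d(u, v)$. With this in mind, each of the three properties reduces to a local shortest-path analysis together with an appeal to \reflemma{d-perturb-undirected}, mirroring the template used already in \refobs{kmed-obs-2} and \refobs{asymkc-2-perturbed}.

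For part (i), I would argue by case analysis on whether an optimal $u$-$v$ path in $G$ uses the contracted edge. If it does not, triangle inequality for $d$ shows the direct edge $(u, v)$ of length $d(u, v)$ is at least as short as any such path. If it does, then without loss of generality the contracted edge is used exactly once (otherwise the two uses cancel), so the path splits as $u \leadsto c_j \leadsto p \leadsto v$ or $u \leadsto p \leadsto c_j \leadsto v$; applying triangle inequality along each of the two non-contracted legs gives respectively $d(u, c_j) + d(c_i, p) + d(p, v)$ and $d(u, p) + d(c_i, p) + d(c_j, v)$. Taking the minimum of the three options yields the stated formula. For part (ii), the direct edge $(c_j, p)$ has length $d(c_i, p)$, while any detour costs at least $d(c_j, p)$, which in turn is $\geq d(c_i, p)$ because $p$ is assigned to $c_i$ in the optimal clustering. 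A similar computation for $(c_i, p)$ shows any detour through the contracted edge has cost at least $d(c_i, c_j) + d(c_i, p) \geq d(c_i, p)$, so the direct edge wins.

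For part (iii), the upper bound $d'(u, v) \leq d(u, v)$ is immediate since $\ell \leq d$ pointwise. For the lower bound I would invoke \reflemma{d-perturb-undirected}: it suffices to verify that $\ell(u, v) \geq d(u, v)/2$ for every edge of $G$. This is trivial for the unmodified edges. For the sole modified edge, the hypothesis $2 d(c_i, p) \geq d(c_j, p)$ of \reflemma{kmo-p-center} gives exactly $\ell(c_j, p) = d(c_i, p) \geq d(c_j, p)/2$. The lemma then certifies that $d'$ is a metric and a $2$-perturbation of $d$.

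I do not expect a substantive obstacle here, since the construction is essentially a one-edge repetition of the single-edge perturbation argument used earlier in the paper. The one place to be careful is the exhaustiveness of the path enumeration in part (i): one has to notice that a shortest path can be normalized so that the contracted edge $(c_j, p)$ appears at most once, and then recognize that the two orientations of traversal account for both of the non-trivial terms in the minimum.
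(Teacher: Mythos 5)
Your proposal is correct and complete. The paper states this observation without proof (it appears inside a proof sketch, and the analogous Observation~\refobs{kmed-obs-1} in the preceding lemma is also left unproved); your argument --- normalize the shortest path so the single contracted edge $(c_j,p)$ appears at most once, split by orientation, and apply triangle inequality for $d$ to each leg, then invoke \reflemma{d-perturb-undirected} using $2d(c_i,p)\ge d(c_j,p)$ for the lower bound on $\ell$ --- is the intended reasoning and mirrors the proof of \refobs{kmed-obs-2}.
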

Since instance $\inst$ is $2$-OPR for {\kmedianout}, even for the
perturbed instance $\inst' = (V, d', k, z)$ the unique optimal
clustering is $\mathcal{C}$ and outliers is $Z$. We can further show
that for any two points $u, v \in C_i$ (and $C_j$), $d'(u, v) = d(u,
v)$.  Thus $c_i$, $c_j$ are cluster centers in the optimal solution of
$\inst'$. Now, consider a solution for $\inst'$ with clustering
$\mathcal{C'} = \mathcal{C} \setminus \setof{C_i, C_j} \bigcup
\setof{C_i \setminus \setof{p}, C_j \bigcup \setof{p}}$.  and outliers
$Z$.  We can show that the cost of this solution $C'; Z$ is at most
the cost of the optimal solution $\mathcal{C}; Z$. Thus contradicting
the fact that the optimal solution is unique.
\end{sketch}

\begin{corollary}\labelcorollary{kmo-p-nonout}
Consider any point $p \in V \setminus Z$, and let $C_i$ be the optimal cluster $p$ is assigned to. Then, for
any other point $q$ from a different cluster $C_j$ ($i \neq j$), $d(p, c_i) < d(p, q)$.
\end{corollary}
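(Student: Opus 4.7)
The plan is to reduce the corollary to the analogous separation property for $2$-perturbation resilient \kmedian instances already established by Angelidakis \etal~\cite{AngelidakisMM17}. Recall that the \textbf{Outlier Perturbation Resilience} paragraph in \refsection{prelim} observed: if $\inst = (V, d, k, z)$ is $\opr{2}$ for \kmedianout with unique optimal clusters $\mathcal{C}$ and outliers $Z$, then the restricted instance $\inst'' = (V \setminus Z, d, k)$ is $\pr{2}$ for \kmedian, with the same unique optimal clustering $\mathcal{C} = \{C_1, \ldots, C_k\}$ and centers $\{c_1, \ldots, c_k\}$.

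Angelidakis \etal~\cite{AngelidakisMM17} proved that in the optimal clustering of any $2$-PR \kmedian instance, every point is strictly closer to its assigned center than to any point in a different cluster. Applying this to $\inst''$ with $p \in C_i \cap (V \setminus Z)$ and $q \in C_j \subseteq V \setminus Z$ for $j \neq i$ yields $d(p, c_i) < d(p, q)$, which is exactly the corollary.

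Alternatively, one can give a self-contained proof mirroring the argument of \reflemma{kmo-p-center}. Assume for contradiction that $d(p, c_i) \ge d(p, q)$ for some $p \in C_i \cap (V \setminus Z)$ and $q \in C_j$ with $j \neq i$. Build a metric $2$-perturbation $d'$ on the complete graph over $V$ by setting $\ell(p, q) = d(c_i, p)$ and $\ell(u, v) = d(u, v)$ otherwise, and define $d'$ as the shortest-path metric (the verification that $d'$ is a $2$-perturbation preserving within-cluster distances is analogous to \refobs{kmed-obs-1} and \refobs{kmed-obs-2}). In the resulting instance $\inst' = (V, d', k, z)$ the unique optimal solution is still $(\mathcal{C}, Z)$. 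Consider the alternate solution obtained by moving $p$ from $C_i$ to $C_j$ while keeping $Z$ fixed; its cost differs from the optimum by $d'(c_j, p) - d'(c_i, p)$, and the shortest-path construction can be shown to force this difference to be nonpositive, contradicting uniqueness of the optimum.

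The main obstacle in the direct approach is ensuring that the edge-length modification (now between two non-center points) does not inadvertently shrink distances within $C_i$ or $C_j$ in a way that invalidates the accounting. This is why the reduction to the \kmedian separation property of \cite{AngelidakisMM17} via $\inst''$ is the cleanest route and is the approach I would adopt.
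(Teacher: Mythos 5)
Your primary route --- restricting to $\inst'' = (V \setminus Z, d, k)$ and invoking the Angelidakis \etal separation property directly --- is correct and clean. The paper itself uses exactly this reduction for the analogous \kcenter claim (\reflemma{kco-q-nonoutlier} is proved by observing that $(V\setminus Z, d, k)$ is $\pr{2}$ and applying \reflemma{kc-sep}), so the same move is clearly in the intended toolkit. The paper, however, derives \refcorollary{kmo-p-nonout} differently: it is stated as a corollary to \reflemma{kmo-p-center} ($2\,d(p,c_i) < d(p,c_j)$ for every other center $c_j$), and then the claim $d(p,c_i) < d(p,q)$ for $q \in C_j$ follows by the standard argument: assuming $d(p,q) \le d(p,c_i)$, applying the lemma at $q$ gives $2 d(q,c_j) < d(q,c_i) \le d(p,q) + d(p,c_i) \le 2 d(p,c_i)$, so $d(q,c_j) < d(p,c_i)$, and then $2d(p,c_i) < d(p,c_j) \le d(p,q) + d(q,c_j) < 2 d(p,c_i)$, a contradiction. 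Since \reflemma{kmo-p-center} is itself just the Angelidakis $2$-center-proximity property restricted to non-outliers, your reduction and the paper's derivation are two ways of reaching the same structural fact; yours avoids re-deriving the center-proximity lemma and is arguably tidier, while the paper's keeps the section self-contained.

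One caution about your ``alternative self-contained'' sketch: setting $\ell(p,q) = d(c_i,p)$ on the non-center edge $(p,q)$ is not the right perturbation. The Angelidakis construction deliberately shortens the \emph{center-to-point} edge, $\ell(c_j,p) = d(c_i,p)$, precisely so that (i) within-cluster distances in $C_i$ and $C_j$ are unchanged, hence $c_i, c_j$ remain optimal centers, and (ii) the swap cost $d'(c_j,p) - d'(c_i,p)$ is exactly zero. With your modification the perturbed metric can shrink $d'(p,c_j)$ and $d'(c_i,q)$ through the path $p \to q$ in a way that breaks the accounting and the ``optimal centers unchanged'' step; you already flag this obstacle yourself, and it is a real one, so the reduction should indeed be treated as the proof rather than a shortcut to it.
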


\reflemma{kmo-sep} follows immediately from \refcorollary{kmo-p-nonout} and \reflemma{kmo-p-outlier}.

\subsection{Algorithm}
In the previous section, we showed that in the optimal solution of a
$2$-perturbation resilient {\kmedianout} instance, any non-outlier
point is closer to its assigned center than to any point outside the
cluster. This gives a nice structure to the optimal solution. In
particular, the optimal clusters form subtrees in the minimum spanning
tree over input point set.  We leverage this property to design a
dynamic programming based algorithm to identify the optimal clusters
and outliers.  In what follows, we interchangebly use the terms point
and vertex.

\begin{lemma}\labellemma{kmo-tree}
  Let $\inst = (V, d, k, z)$ be a $2$-perturbation resilient instance
  of the {\kmedianout} problem. Let $T$ be a minimum spanning tree on
  $V$. The optimal clusters of $\inst$, $C_1, \ldots C_k$ are subtrees
  in $T$ i.e. for any two points $p, q \in C_i$, all the points along
  the unique tree path between $p$, and $q$ belongs to cluster $C_i$.
\end{lemma}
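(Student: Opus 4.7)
The plan is a standard MST edge-exchange argument driven by the separation property established in \reflemma{kmo-sep}. Suppose for contradiction that some optimal cluster $C_i$ is not a subtree of $T$, i.e.\ the subgraph of $T$ induced on $C_i$ is disconnected. Because $c_i \in V \setminus Z$ and every non-outlier point is assigned to its nearest center, we have $c_i \in C_i$. Consequently there must exist a point $w \in C_i$ whose unique tree path to $c_i$ in $T$ passes through at least one vertex outside $C_i$.

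Next, I would walk along this $w \leadsto c_i$ path in $T$ and pick out the first edge $e = (u, v)$ for which $u \in C_i$ and $v \notin C_i$. By this choice, the remainder of the path from $v$ to $c_i$ lies on the far side of $e$, so deleting $e$ from $T$ yields two components with $u$ on one side and both $v$ and $c_i$ on the other. Since $C_i \subseteq V \setminus Z$, the point $u$ is a non-outlier, so \reflemma{kmo-sep} applied to $u \in C_i$ and $v \notin C_i$ gives the strict inequality $d(u, c_i) < d(u, v)$.

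The final step is a cut-property swap: the edge $(u, c_i)$ crosses the cut induced by removing $e$ from $T$ (its endpoints lie in different components of $T - e$) and is strictly lighter than $e$. Hence $T - e + (u, c_i)$ is a spanning tree of $V$ whose total weight is strictly less than that of $T$, contradicting the assumption that $T$ is a minimum spanning tree. The only mildly subtle point in the argument is ensuring the existence of the edge $e$, which follows immediately from the disconnectedness of the induced subgraph on $C_i$ together with $c_i \in C_i$; everything else is a routine MST exchange, and no serious obstacle is expected.
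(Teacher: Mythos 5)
Your proof is correct and takes essentially the same approach as the paper: both reduce the claim to \reflemma{kmo-sep} combined with the MST cut/exchange property. The paper presents it as a direct induction along the tree path from a point to its center (each successive path edge is no longer than the direct edge to $c_i$, so the contrapositive of the separation lemma keeps the path inside $C_i$), whereas you give the contrapositive as an explicit edge-exchange contradiction; these are two standard phrasings of the same argument.
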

\begin{proof}
  Let $c_i$ denote the center of cluster $C_i$. To prove the lemma, it
  is sufficient to show that every point on the unique tree path
  between $p$ and $c_i$ belongs to cluster $C_i$.  We prove this via
  induction on the length of path between $p$ and $c_i$. Let $u$ be
  the vertex after $p$ along this path. Since $(p, u)$ is an MST edge,
  we have $d(p, u) \leq d(p, c_i)$. By \reflemma{kmo-sep}, $u$ must
  belong to $C_i$. The proof then follows by applying induction on $u$
  to $c_i$ path.
\end{proof}

\reflemma{kmo-tree} implies that we can find the optimal solution of
$\inst$ by solving the following optimization problem, which we call
\tp: Partition the MST $T$ into $k$ subtrees $P_1, \ldots, P_k$, with
centers $c_1, \ldots c_k$ (each $c_i \in P_i$) and identify remaining
$Z$ vertices of the tree as outliers, where $\sizeof{Z} \leq z$. The
goal is to minimize the following objective function,
\begin{align*}
\sum_{i = 1}^k \sum_{u \in P_i} d(c_i, u)
\end{align*}

% Thus the first step of our algorithm to solve {\kmedianout} is to construct the MST $T$. We root $T$ at
% an arbitrary node $r$. Let $T_u$ denote the subtree rooted at $u$. Further let $\ell_u, r_u$ respectively
% denote the left child, right child of $u$. 

Solving \tp on a general tree is complicated. We simplify it by
transforming $T$ into a binary tree $T'$ with dummy vertices. The
procedure is as follows: while there is a vertex $v$ with more than
two children, pick any two children of $v$ --- $v_1$, and $v_2$;
create a new child (\emph{dummy vertex}) $u$ of $v$; reattach subtrees
rooted at $v_1$, and $v_2$ as children of $u$. At the end of this
process, let $U$ be the set of dummy vertices added.  For each dummy
vertex $u \in U$, set $d(u, v) = 0$, for every $v \in U \bigcup V$.

Now consider the following optimization problem (\btp): Partition binary tree 
$T'$ into $k$ subtrees $P_1', \ldots, P_k'$,
with centers $c_1', \ldots c_k'$ (each $c_i' \in P_i' \bigcap V$) and 
identify remaining $Z'$ vertices of the tree as outliers, where $\sizeof{Z' \bigcap V} \leq z$.
The cost function we want to minimize is,
\begin{align*}
\sum_{i = 1}^k \sum_{u \in P_i'} d(c_i, u)
\end{align*}

It is not hard to show, that given a solution to \tp, we can construct
a solution for \btp of equal cost, and vice-versa. Thus, to solve
{\kmedianout} it is sufficient to solve \btp on the binary tree $T'$
with dummy nodes. Given an optimal solution $P_1', \ldots, P_k'; Z'$
for \btp, the optimal clusters of the corresponding \kmedianout
instance is $P_1' \bigcap V, \ldots, P_k' \bigcap V$ and outliers is
$Z' \bigcap Z$.

To optimally solve \btp we use dynamic programming. For the rest of
the section, we consider $T$ to be the input binary tree, with $V$
being the vertices corresponding to points, and $U$ denotes the dummy
vertices.  Let $T_u$ denote the subtree rooted at $u$. Further let
$\ell_u, r_u$ respectively denote the left child, right child of $u$.

Let $\dpcost{u, j, t, c}$ be the minimum cost of partitioning the
points in subtree $T_u$ into $j$ clusters after discarding $t$ points
as outliers. Here $c$ can be any vertex in $V$ or it can be the null
(denoted using $\emptyset$).  The clustering satisifies the following
constraints:
\begin{itemize}
\item if $c = \emptyset$, then $u$ is marked as an outlier. 
\item if $c \neq \emptyset$, then the cluster in which $u$ belongs has center $c$.
\item Each cluster forms a subtree in $T_u$.
\end{itemize}
We can define $\dpcost{u, j, t, c}$ using the following recursive formula. 
\begin{description}
\item[$c = \emptyset, u \in V$.] Here $u$ is an outlier. Hence, $\ell_u$ and $r_u$ are assigned to
centers $c' \in T_{\ell_u}$ and $c'' \in T_{r_u}$ respectively. Further, since $u$ is already being
marked as an outlier, there can be $t-1$ outliers between $T_{\ell_u}$ and $T_{r_u}$.
\begin{align*}
&\dpcost{u, j, t, c} 
= \min \left\{ \dpcost{\ell_u, j', t', c'} + \dpcost{r_u, j'', t'', c''} : \right.\nonumber\\
&\qquad\qquad \left. {} j' + j'' = j, t' + t'' = t - 1, c' \in T_{\ell_u} \bigcup \emptyset, c'' \in T_{r_u} \bigcup \emptyset \right\}
\end{align*}
\item[$c = \emptyset, u \notin V$.] Here $u$ is an outlier. However, since it is a dummy vertex 
we do not count it as one of $t$ outliers in $T_u$. 
\begin{align*}
&\dpcost{u, j, t, c} 
= \min \left\{ \dpcost{\ell_u, j', t', c'} + \dpcost{r_u, j'', t'', c''} : \right.\nonumber\\
&\qquad\qquad \left. {} j' + j'' = j, t' + t'' = t, c' \in T_{\ell_u} \bigcup \emptyset, c'' \in T_{r_u} \bigcup \emptyset \right\}
\end{align*}
\item[$c \notin T_{\ell_u} \bigcup T_{r_u}$.] The recursive formula is defined by $4$ cases (lines 1-4 in the formula). 
The explanation for each case is as follows: 
(1) Neither $l_u$ nor $r_u$ is assigned to the same cluster as $u$. They are either outliers,
or they are assigned to centers $ c', c''$ in subtree $T_{\ell_u}, T_{r_u}$ resp. 
(2) $r_u$ is assigned to the same cluster as $u$
but not $\ell_u$. It is either an outlier or assigned to a center $ c' \in T_{\ell_u}$ 
(3) $\ell_u$ is assigned to the same cluster as $u$
but not $r_u$. It is either an outlier or assigned to a center $ c'' \in T_{r_u}$
(4) Both $\ell_u$ and $r_u$ are assigned to the same cluster as $u$.
\begin{align}
&\dpcost{u, j, t, c} 
= d(u, c) + \min \Big (\nonumber\\
&\qquad\min \left\{ \dpcost{\ell_u, j', t', c'} + \dpcost{r_u, j'', t'', c''} : \right.\nonumber\\
	&\qquad\qquad \left. {} j' + j'' = j - 1, t' + t'' = t, c' \in T_{\ell_u} \bigcup \emptyset, c'' \in T_{r_u}  \bigcup \emptyset \right\}, \\
&\qquad\min \left\{\dpcost{\ell_u, j', t', c'} + \dpcost{r_u, j'', t'', c} : \right.\nonumber\\
	&\qquad\qquad \left. {} j' + j'' = j, t' + t'' = t, c' \in T_{\ell_u} \bigcup \emptyset \right\}, \\
&\qquad\min \left\{ \dpcost{\ell_u, j', t', c} + \dpcost{r_u, j'', t'', c''} : \right.\nonumber\\
	&\qquad\qquad \left. {} j' + j'' = j, t' + t'' = t, c'' \in T_{r_u} \bigcup \emptyset \right\}, \\
&\qquad\min \left\{ \dpcost{\ell_u, j', t', c} + \dpcost{r_u, j'', t'', c} : \right.\nonumber\\
	&\qquad\qquad \left. {} j' + j'' = j - 1, t' + t'' = t \right\} \Big )
\end{align}
\item[$c \in T_{\ell_u}$.] The recursive formula in this case is obtained by removing lines (1), (2) from
the above formula.
\item[$c \in T_{\ell_u}$.] The recursive formula in this case is obtained by removing lines (1), (3) from
the above formula. 
\end{description}

\begin{remark}
The algorithm we presented easily generalizes to give exact solution for
$2$-perturbation resilient instances of other clustering with outliers problems 
like \kcenterout, \kmeansout, and more general $\ell_p$
objectives.
\end{remark}

%%%%%%%%%%%%%%%%%%%%%%
\iffalse \bibliography{perturbation_resilient} \fi

\noindent{\bf Acknowledgements:} CC thanks Mohit Singh for initial discussions 
on the integrality of the LP relaxation for $2$-perturbation-resilient
instances of \kmedian. We thank Yury Makarychev for comments on
Voronoi clustering for \kcenter.

\bibliographystyle{plain}%
\bibliography{perturbation_resilient}

\appendix
\section{Omitted proofs from \refsection{prelim}}\labelappendix{gen}

\subsection{Proof of \reflemma{voronoi-kc}}
Since, $\sizeof{\mathcal{C'}} = k-1$,
there must be a cluster $C_t' \in \mathcal{C'} $, such that $C_t' \bigcap S \geq 2$.
Let $c_i, c_j \in S$ be the cluster centers which belong to $C_t'$. Wlog,
assume $c_t'$, the center of cluster $C_t'$ does not belong to cluster $C_j$.

\begin{description}[style=unboxed,leftmargin=0cm]
\item [Case 1: $V \setminus S' \nsubseteq C_j$]. Consider any point 
$q \in V \setminus \parof{S' \bigcup C_j}$. Consider the set of $k$ centers
$S'' = S' \Union \setof{q}$. Let $\mathcal{C''}$ be a corresponding Voronoi partition. Clearly 
$\cost{d}{\mathcal{C''}, S''} \leq \cost{d}{\mathcal{C'}, S'} \leq \optradius{d}$, as
adding a new center can not increase the clustering cost. Now, for any 
$c \in S'' \setminus \curlyof{c_t', q}$, we have $d(c, c_j) \geq d(c_t', c_j)$. Therefore,
in the Voronoi partition $\mathcal{C''}$, we can assume wlog, either $c_t'$ and $c_j$
are in the same cluster, or $q$ and $c_j$ are in same cluster. However both $c_t', q \notin C_j$.
Thus, $\mathcal{C''}$ is a different $k$ clustering of $V$ of cost at most  $\optradius{d}$.

\item [Case 2: $V \setminus S' \subseteq C_j$]. In this case, we have $S' = S \setminus \setof{c_j}$. Further,
for any $\ell \neq j$, $\sizeof{C_\ell} = 1$. Therefore, there exists a point $q \in C_j$,
such that $d(c_j, q) = \optradius{d}$. Since, $\cost{d}{\mathcal{C'}, S'} \leq \optradius{d}$, we have,
$d(S', q) \leq \optradius{d}$. Therefore, there exists a different Voronoi partition $\mathcal{C''}$ 
induced by $S = S' \bigcup \setof{c_j}$, where the points $q$ and $c_j$ do not belong in the same cluster. 
\end{description}

\subsection{Proof of \reflemma{d-perturb}}
Since $d'$ is defined as the shortest path distance (over non-negative edge lengths) 
in graph $G$, it satisfies triangle inequality.
Also, as mentioned in the lemma statement $\ell(u, v) \leq d(u, v)$. Therefore $d'(u, v) \leq \ell(u, v) \leq d(u, v)$.
Consider any two points $u, v \in V$. Let $P = (u, u_1, \ldots, v)$ be an arbitrary directed 
$u \leadsto v$ path in graph $G$. The length of path $P$ is given by $\ell(P) = \ell(u, u_1) + \ldots + 
\ell(u_t, v) \geq 1/2 \cdot \parof{d(u, u_1) + \ldots + d(u_t, v)} \geq \frac{d(u, v)}{2}$. Here the
last inequality uses the fact that $d$ satisfies triangle inequality. 
Therefore, $d'(u, v) = \Min_{P} \ell(P) \geq \frac{d(u, v)}{2}$. 

\subsection{Proof of \reflemma{opt-unchanged}}
Since for any pair of points $u, v \in V$, we have $d'(u, v) \leq d(u, v)$, clearly 
$\optradius{d'} \leq \optradius{d}$. Let $C_t$ be the largest radius optimal cluster in $\inst$,
i.e., $\max_{u \in C_t} d(c_t, u) = \optradius{d}$. Therefore, for every $c \in C_t$, there exists a point
$r(c) \in C_t$, such that $d(c, r(c)) \geq \optradius{d}$.
Now, for any path $P$ between $c$ and $r(c)$ in graph $G$, which does not include an edge from $E'$,
$\ell(P) = \sum_{e \in P} \ell(e) = \sum_{e \in P} d(e) \geq d(c, r(c)) \geq \optradius{d}$, by triangle inequality. Further,
for any path $P'$ between $c$ and $r(c)$ in graph $G$, which includes atleast one edge from $E'$,
$\ell(P') = \sum_{e \in P' \bigcap E'} \ell(e) + \sum_{e \in P' \setminus E'} \ell(e) \geq 
\sum_{e \in P' \bigcap E'} \min \curlyof{\ell(e), \optradius{d}} + \sum_{e \in P' \setminus E'} \ell(e) \geq \min \curlyof{\ell(P'), \optradius{d}} \geq \optradius{d}$. Therefore, $d'(c, r(c)) \geq \optradius{d}$. Now recall, we assumed $\mathcal{C}$ is an optimal clustering in $\inst'$, then $C_t$ is an optimal cluster in $\inst'$
Therefore,
\begin{align*}
\optradius{d'} = \cost{d'}{\mathcal{C}, S'} 
\geq \min_{c \in C_t} \max_{u \in C_t} d'(c, u)
\geq \min_{c \in C_t} d'(c, r(c))
\geq \optradius{d}
\end{align*}
Therefore, $\optradius{d'} = \optradius{d}$.

\bigskip
\noindent The proofs of \reflemma{d-perturb-undirected}, and \reflemma{opt-unchanged-undirected} are similar to the above.

%%%%%%%%%%%%%%%%%%%%%%
\iffalse \bibliography{perturbation_resilient} \fi

\end{document}